\documentclass[a4paper,11pt,reqno]{amsart}
\usepackage[utf8]{inputenc}
\usepackage[T1]{fontenc} 

\usepackage{amsthm,amsmath,amsfonts,amssymb,dsfont,bm,amstext,amsopn,mathrsfs,mathtools,esint}
\usepackage[colorlinks, linkcolor={blue}, citecolor={red}]{hyperref}
\usepackage[hmargin=2.5cm,vmargin=2.5cm]{geometry}
\usepackage[dvipsnames]{xcolor}
\usepackage{tikz}
\usetikzlibrary{calc,decorations.pathreplacing,patterns,arrows.meta}

\usepackage{notations}
\usepackage{comment}

\def\L{\mathbb L}

\usepackage[backend=biber,
style=alphabetic,
maxbibnames=99,
doi=false, url=false, giveninits=true, date=year]{biblatex}
\usepackage{csquotes}

\title[Kohn--Sham models for encapsulated two-dimensional materials]{Kohn--Sham models for encapsulated two-dimensional materials}

\author{Éric Cancès}
\address[Éric Cancès]{CERMICS, CNRS, ENPC, Institut Polytechnique de Paris, Marne-la-Vallée, France, and MATHERIALS team-project, Inria Paris, France.}
\email{eric.cances@enpc.fr}

\author{David Gontier}
\address[David Gontier]{CERMICS, CNRS, ENPC, Institut Polytechnique de Paris, Marne-la-Vallée, France, and MATHERIALS team-project, Inria Paris, France.} \email{david.gontier@enpc.fr}

\author{Solal Perrin-Roussel}
\address[Solal Perrin-Roussel]{UMPA, UMR 5669 CNRS, École Normale Supérieure de Lyon, 46 allée d’Italie, 69364 Lyon Cedex 07, France.}
\email{solal.perrin\_roussel@ens-lyon.fr}

\begin{document} 
    \maketitle
    
    \begin{abstract}
        We study Kohn--Sham Density Functional Theory (DFT) models describing the electronic structure of two-dimensional materials placed in a three-dimensional environment, encapsulated between two parallel conducting electrodes. In this geometry, the Dirichlet boundary conditions at the electrodes screen the Coulomb interaction, which becomes effectively short-ranged, of Yukawa type. We prove that some nonlinear Kohn--Sham DFT models are well-posed in this setting, both for periodic materials (such as graphene) and for quasi-periodic materials (such as twisted bilayer graphene and other moir\'e materials for generic incommensurate twist angles).
    \end{abstract}
    
    \tableofcontents

    \section{Introduction}

    The study of 2D and moir\'e materials is one of the most vibrant fields in today's condensed matter physics. It is also a rich source of exciting mathematical and numerical problems. 
    The physics of 2D materials blossomed after the seminal paper by Geim and Novoselov reporting experimental transport measurements on graphene~\cite{NovGeiMor-04}, made possible by the famous scotch-tape exfoliation technique used to isolate monolayer graphene. In the following years, new experimental techniques were developed to exfoliate or grow high-quality monolayer graphene, as well as many other single-layer 2D materials, such as hexagonal boron nitride (hBN) or transition metal dichalcogenides (TMDs)~\cite{tscheppe2024magnetism,miao2021strong,zang2022dynamical}. 
    In parallel, multilayer 2D materials were synthesized by stacking several identical or different monolayers on top of one another, in aligned or twisted configurations. Lattice mismatches in hetero-bilayers (e.g. graphene on top of hBN), or rotation angles in twisted homo-bilayers (e.g. twisted bilayer graphene - TBG), lead to the appearance of moir\'e patterns, hence the name moiré materials.

    \medskip
    
    For small lattice mismatches or small twist angles, the apparent moiré periodicity is much larger than the period of the underlying monolayer periodic structures and gives rise to new physical phenomena at the nanoscale.
    What makes 2D moiré materials so special is that, in contrast to bulk 3D materials or surfaces, the electronic structure of a 2D moiré material can be finely controlled, to some extent, by tuning different parameters: twist angle, mechanical strain, longitudinal and/or transverse electric and/or magnetic fields.
    In particular, the filling factor, that is the number of extra or missing electrons per moiré cell with respect to charge neutrality, can be controlled by placing the moiré material between two metallic electrodes and imposing a potential difference between these electrodes (a procedure called {\em gating}). To experimentally realize gating, the 2D material under study must first be encapsulated between two thin insulating layers (e.g. several layers of hBN) in order to avoid short circuits (see Fig.~\ref{fig:encapsulated}).
    
\begin{figure}[ht]
    \centering
    \begin{tikzpicture}[scale=1, >=Stealth]
        \def\L{8}   \def\tins{2.0}  
        \draw[->] (5.0,-5.0) -- (5.0,5.0) node[above] {\(z\)};

        \draw (5.0, { \L/2 }) -- ++(0.15,0);
        \node[right] at (5.2, { \L/2 }) {\(+\frac{L}{2}\)};
        \draw (5.0, { 0 }) -- ++(0.15,0);
        \node[right] at (5.2, { 0 }) {\(0\)};
        \draw (5.0, { -\L/2 }) -- ++(0.15,0);
        \node[right] at (5.2, { -\L/2 }) {\(-\frac{L}{2}\)};

        \fill[gray!30] (-3.8,\L/2+0.12) rectangle (3.8,\L/2-0.12);
        \draw[thick] (-3.8,\L/2+0.12) rectangle (3.8,\L/2-0.12);
        \node at (0,\L/2+0.5) {\(V=v_0/2\)};
        
        \fill[gray!30] (-3.8,-\L/2+0.12) rectangle (3.8,-\L/2-0.12);
        \draw[thick] (-3.8,-\L/2+0.12) rectangle (3.8,-\L/2-0.12);
        \node at (0,-\L/2-0.5) {$V=-v_0/2$};
        
        \fill[blue!8,draw=blue!30] (-3.4,\tins/2+0.6) rectangle (3.4,\L/2-0.16);
        \node at (0,\L/2 - \tins/2 - 0.3) {Insulator (hBN)};
        
        \fill[blue!8,draw=blue!30] (-3.4,-\L/2+0.16) rectangle (3.4,-\tins/2-0.6);
        \node at (0,-\L/2 + \tins/2 + 0.3) {Insulator (hBN)};

        
        \fill[red!30] (-3.8,0.3) rectangle (3.8,0.1);
        \draw[thick] (-3.8,0.3) rectangle (3.8,0.1);
        \fill[red!30] (-3.8,-0.1) rectangle (3.8,-0.3);
        \draw[thick] (-3.8,-0.1) rectangle (3.8,-0.3);
        
        \fill[red!15] (-3.8,0.32) rectangle (3.8,0.08);
        \draw[thick]  (-3.8,0.32) rectangle (3.8,0.08);
        \fill[red!15] (-3.8,-0.08) rectangle (3.8,-0.32);
        \draw[thick]  (-3.8,-0.08) rectangle (3.8,-0.32);
        
        \foreach \x in {-3.4,-2.7,-2.0,-1.3,-0.6,0.1,0.8,1.5,2.2,2.9}
        { \fill[red!80] (\x,0.20) circle (0.07); }
        \foreach \x in {-3.05,-2.35,-1.65,-0.95,-0.25,0.45,1.15,1.85,2.55,3.25}
        { \fill[red!80] (\x,-0.20) circle (0.07); }
        \node at (0,0.65) {moiré material (TBG)};
        
    \end{tikzpicture}
    
    \caption{Schematic diagram of the physical system.}
    \label{fig:encapsulated}
\end{figure}

\medskip
    
    A theoretical breakthrough in the study of moiré materials was made in 2011 by Bistritzer and MacDonald~\cite{BisMac-11,BisMac-11a}, who predicted the presence of so-called magic angles in TBG, at which strongly correlated electronic effects should emerge. 
    This prediction was confirmed by experiments conducted in 2018 in Jarillo-Herrero's group~\cite{CaoFatDem-18}, where strongly correlated insulating phases and, even more interestingly, {\em unconventional} superconducting phases (which cannot be explained by the classical BCS theory of superconductivity), were observed in the phase diagram of twisted bilayer graphene close to the first magic angle $\theta \simeq 1.1^\circ$, for sufficiently low temperatures and suitable values of the filling factor. 
    This discovery generated enormous interest in the condensed-matter community, because it opened new perspectives for understanding unconventional superconductivity and, ultimately, the still not fully understood high-temperature superconductivity of cuprates. Since then, correlated and superconducting phases have been observed in several other moir\'e materials, including twisted trilayer graphene (TTG)~\cite{ParHaoRen-21}, twisted double bilayer graphene (TDBG)~\cite{LiuHaoKha-20}, and moir\'e transition metal dichalcogenide (TMD) heterobilayers~\cite{WanShaXu-20}, as well as in non-moir\'e multilayer 2D materials such as rhombohedral graphene~\cite{ZhoLedPar-23}.
    
    \medskip
    
    In this field, perhaps more than in any other area of physics, modeling and experiment are closely intertwined. Throughout the development of the field, the modeling of 2D and moir\'e materials has provided indispensable guidance for experimental studies, while at the same time being necessary to understand the rich and unusual electronic properties of these materials.
    
    \medskip
    
    Long before monolayer graphene was experimentally realized, the electronic structure of graphene was investigated in 1947 by Wallace~\cite{Wal-47} (note that the term graphene was coined fifteen years later by Boehm et al. in 1962~\cite{BoeSet-62}).
    Using a minimal model describing non-interacting electrons on a hexagonal lattice, Wallace was able to predict the semimetallic character of monolayer graphene at charge neutrality, as well as the fact that, for low-energy excitations, charge carriers (electrons and holes) behave as massless relativistic particles with Fermi velocity $v_{\rm F}$ ($v_{\rm F} \approx 1 \times 10^6$ m/s for graphene, which is 300 times slower than the true speed of light). These properties follow from the presence of Dirac cones at the $K$ and $K'$ points of the band diagram of graphene. The Bistritzer--MacDonald (BM) model is a remarkably successful minimal effective non-interacting model describing low-energy electronic excitations in TBG at the moir\'e scale.
    The BM Hamiltonian $H_{\rm BM}$ is a periodic self-adjoint operator on $L^2(\R^2;\C^4)$ ($L^2(\R^2;\C^8)$ if both the $K$ and $K'$ valleys are included), with unit cell given by the (mesoscopic) moir\'e cell. It depends on the twist angle $\theta$ and only three empirical parameters: the Fermi velocity $v_{\rm F}$ and two parameters $w_0$ and $w_1$ describing interlayer interactions at AA and AB stacking respectively. Bistritzer and MacDonald observed the emergence of almost flat bands at the Fermi level in the band diagram of $H_{\rm BM}$ for specific ``magic'' twist angles.
    Flat bands at the Fermi level suggest the existence of partially occupied localized electronic states, and the possibility of strongly correlated insulating or superconducting phases. While non-interacting one-particle models such as BM have proven to be invaluable in guiding physical intuition, they are not sufficient to make quantitative predictions on the phase diagram itself. To push the analysis further and numerically compute an approximate phase diagram, it is necessary to construct effective {\em interacting} moir\'e-scale models going beyond the non-interacting ones, and to simulate these models using, for instance, Dynamical Mean-Field Theory~\cite{GeoKot-96,RaiCriCal-24,DatCalCam-23,zang2022dynamical} (DMFT, see~\cite{CanKirPer-24} for some mathematical insights on this method).
    
    \medskip
    
    The modeling of 2D and moiré materials has also stimulated mathematical research. Regarding graphene, Wallace's non-interacting lattice model was extended in two directions.
    On the one hand, the existence of Dirac cones at $K$ and $K'$ was shown by Fefferman and Weinstein~\cite{FefWei-12} (see also~\cite{BerCom-18}) to be a generic property of continuous 2D Schrödinger operators with honeycomb symmetries. On the other hand, Giuliani, Mastropietro and Porta~\cite{GiuMasPor-10} studied an {\em interacting} model of electrons on hexagonal lattices using exact renormalization group techniques. Other mathematical works focused on the modeling of TBG. The BM model was derived from atomic-scale non-interacting tight-binding models by Watson and collaborators~\cite{WatKonMac-23}, and a variant of it directly from Density Functional Theory (DFT) by Canc\`es, Garrigue and Gontier~\cite{CanGarGon-23}. The study of the BM model also gave rise to several mathematical works.
    Building on previous works by Vishwanath and collaborators~\cite{TanKruVis-19}, the existence of magic angles for the chiral approximation of the BM model was established in a series of works by Becker, Zworski et al.~\cite{BecEmbWitZwo-23}, and Luskin, Watson et al.~\cite{WatXiaLus-23}. Extending previous works by Zaletel and collaborators~\cite{BulKhaLiu-20}, Becker, Lin, and Stubbs~\cite{BecLinStu-25} proved that the ground states of some interacting Bistritzer--MacDonald (IBM) models at charge neutrality are Hartree--Fock states with specific symmetries.
    
    \medskip
    
    Among the wide variety of electronic structure models used in the study of 2D and moir\'e materials, Density Functional Theory (DFT)~\cite{HohKoh-64,KohSha-65} plays an important role.
    DFT alone is generally unable to quantitatively reproduce the phase diagrams of moir\'e materials, as its purpose is only to compute the ground-state energy and one-particle density, quantities that do not {\it a priori} provide any direct insight into correlation functions, from which strongly correlated effects can be inferred (note, however, that a DFT theory for BCS superconductors has been developed~\cite{OliGroKoh-88,LudMarMar-05}).
    On the other hand, DFT is arguably the best model for quantitatively studying atomic relaxation~\cite{CarrMasTor-18}, an effect reported to play an important role in the electronic properties of multilayer 2D materials~\cite{NamKosh-17,CarrFangZhu-17,CarrFangJar-18}, as well as phonon modes~\cite{LuZhuAng-22}, and for providing reasonable approximations of electron-phonon coupling parameters~\cite{BarGirDal-01}.
    It is also a valuable tool for parameterizing non-interacting and interacting models of 2D and moir\'e materials~\cite{FangCarrCaz-18,CanGarGon-23}.
    
    \medskip
    
    In this article, we provide a mathematical analysis of Kohn--Sham models for encapsulated 2D and moir\'e materials (experimental setting sketched in Fig.~\ref{fig:encapsulated}) at the reduced Hartree--Fock (rHF), also called Hartree, or RPA (Random Phase Approximation), level of theory. On the Jacob's ladder of DFT approximations proposed by Perdew~\cite{PerSchJac-01}, this setting corresponds to the ground level, for which the exchange-correlation functional is chosen to be zero.
    Although this model is not sufficiently accurate to obtain quantitative results, it is very interesting from a mathematical viewpoint, as its mathematical structure is similar to that of practically used Kohn--Sham models, with the additional property of being convex, which greatly simplifies its analysis. Note that the exact DFT model based on Lieb's density functional~\cite{Lev-79, Lie-83} is convex, and that non-convexities are in fact induced by (semi-)local approximations of the exchange-correlation functional.
    Within the rHF approximation, it is possible to prove the stability of neutral molecules~\cite{LieSim-HF-77,CatLio-93}, the bulk limit for certain perfect crystals~\cite{CatLe-Lio-98, CatLe-Lio-01}, the existence of electronic ground states for crystals with local defects~\cite{CanDelLew-08}, disordered crystals such as doped semiconductors~\cite{CanLahLew-13}, and standalone homogeneous 2D materials~\cite{GonLahMai-21}, and many other results.

    \medskip
    
    \noindent
    In this contribution, we focus on periodic or quasi-periodic 2D structures, for which an {\em rHF energy per unit area} can be defined. Examples of periodic 2D materials include monolayers, homo-bilayers (such as bilayer graphene in AA or AB stacking), homo-multilayers (such as rhombohedral graphene), and twisted homo-bilayers or homo-multilayers with commensurate twist angles. The geometric structure of these physical systems (with a focus on bilayers), for both the periodic and quasi-periodic settings, as well as the concept of stationary function, are recalled in Section~\ref{sec:geometric_structure}. 
    
    \medskip
    
    \noindent
    As in most DFT calculations performed in the encapsulated setting, the insulating encapsulating material is modeled as a continuous dielectric medium. In this approach, only the electrons of the 2D material under study (shown in red in Fig.~\ref{fig:encapsulated}) are explicitly taken into account.
    In Section~\ref{sec:electrostatics}, we provide useful results on electrostatics in the encapsulated setting. Some technical results are postponed until Appendix~\ref{appendix:GF}. A key feature of this setting is that the conducting electrodes impose Dirichlet boundary conditions on the electrostatic potential, which effectively screen the Coulomb interaction.
    Most of the results presented are classical and follow from the theory of linear elliptic equations in divergence form, but we recall them for the sake of completeness. We also provide formulae for solving the Poisson equation in the special case where the dielectric permittivity only depends on the $z$ variable, which are useful for numerical simulations.
    
    \medskip
    
    Section~\ref{sec:quasi-periodic} deals with the quasi-periodic case. It covers twisted homo-bilayers, notably TBG, at generic (incommensurate) twist angles, as well as quasi-periodic multilayer structures such as generic twisted trilayer graphene (TTG) configurations. A large body of the mathematical literature is devoted to the study of linear quasi-periodic Hamiltonians~\cite{Shu-78, AviJit-09,Sim-82,Eli-92,PasFig-92, LiSonZho-26}, see also~\cite{CanCazLus-17, ZhoCheZho-19, LiuSanProd-19,EttMasLus-20, WanCheZho-25,CanMasMen-25} for numerical methods.
     But to the authors' knowledge, very few contributions deal with nonlinear DFT-like models for quasi-periodic systems. The only works we are aware of are the studies of the quasi-periodic Thomas--Fermi (TF) and Thomas--Fermi--von Weizsäcker (TFW) models by Blanc, Le Bris and Lions~\cite{BlaLe-Lio-07}, which fall within the scope of orbital-free DFT. To handle quasi-periodic Kohn--Sham models, for which the main variable is the one-body density matrix (1-RDM), we rely on the $C^*$ and von Neumann algebra framework introduced by Bellissard and collaborators~\cite{Bel-02,BelElsSch-94}, based on ideas of Connes~\cite{Con-94}. We adapt it to the encapsulated setting and prove the existence of an rHF ground state for quasi-periodic systems in Section~\ref{sec:quasi-periodic}. The main arguments are inspired from those in~\cite{CanLahLew-13}.
    
    \medskip
    
    Finally, for completeness, the encapsulated Kohn--Sham model for periodic systems is detailed and analyzed in Section~\ref{sec:periodic}. The existence of an rHF ground state in the periodic setting is a simple exercise, and this result can be easily extended to some non-convex models such as Kohn--Sham LDA (local density approximation).

    \section{Geometric description of the physical systems under consideration}
    \label{sec:geometric_structure}
    
     We begin by describing in detail the geometry of the systems in the two settings covered by this analysis: the periodic one, and the quasi-periodic one. To simplify the presentation, we focus on bilayer systems, but our arguments obviously extend to generic periodic monolayer and multilayer systems, as well as to stationary ergodic multilayer systems.

    \subsection{Encapsulated 2D materials}
    
    As in implicit solvent models commonly used in quantum chemistry~\cite{tomasi_2005}, only the electrons of the 2D material are dealt with explicitly. This is justified by the fact that the Fermi level of the system is in the band gap of the encapsulating insulator and that we are only interested in low-energy phenomena which cannot create electron-hole pairs in the insulator. 
    
    \medskip
    
    This means that we will study a three dimensional system, but in which the electronic density is localized around the plane $z = 0$ (Fig.~\ref{fig:encapsulated}). We denote by $L > 0$ the (fixed) distance between the two insulating layers, and set
    \begin{equation*}
        I_L := \left(-\frac{L}{2}, \frac{L}{2}\right) \quad \text{and} \quad S_L := \R^2 \times I_L.
    \end{equation*}
    We often write $(x, z) \in \R^2 \times I_L$, with $x = (x_1, x_2) \in \R^2$ the two longitudinal coordinates, and $z$ the one orthogonal to the system.
    
    \medskip
    
    Let us point out that, since $I_L$ is bounded, electrons cannot escape to $\pm\infty$ in the $z$-direction. This is one of the key advantages of the encapsulated setting for the mathematical analysis. Unfortunately, most of our results do not extend to standalone 2D materials (see~\cite{GonLahMai-21, GonLahMai-23} for the case of homogeneous 2D materials in 3D space).

    \subsection{Bilayer 2D materials}
    
    Still for the sake of simplicity, we assume that each layer is a perfectly periodic monolayer but our results extend to the case when atomic relaxation is taken into account (see Remark~\ref{rem:atomic_relaxation}). In this simple case considered, the nuclear charge density $m$ is of the form
        \begin{equation*}
        m(x, z) = m_1(x, z) + m_2(x, z),
    \end{equation*}
    where, for all $j=1,2$ and $z \in I_L$, the map $x \mapsto m_j(x, z)$ is $\L_j$--periodic, with $\L_j$ a lattice of $\R^2$. We set 
        \begin{equation*}
        \Omega_1 := \R^2 / \L_1,\quad \Omega_2 := \R^2 / \L_2 \quad \text{and} \quad \Omega := \Omega_1 \times \Omega_2.
    \end{equation*}
    The set $\Omega_1$ and $\Omega_2$ are two 2D-tori, and the set $\Omega$ is a 4D-torus called the configuration space.
    
    \medskip
    
    For a {\bf configuration} $\omega = (\omega_1, \omega_2) \in \Omega = \Omega_1 \times \Omega_2$, we set
    \begin{equation} \label{eq:atomic_charge_density}
        m_\omega(x, z) := m_1(x - \omega_1, z) + m_2 (x - \omega_2, z).
    \end{equation}
    In other words, $\omega_j$ encodes the shift by $\omega_j$ of the $j$-th lattice in the longitudinal direction. We introduce the group action $\R^2 \ni x \mapsto \alpha_x$ defined by 
    \begin{equation*}
        \alpha_x : \Omega \to \Omega, \qquad \alpha_x(\omega_1, \omega_2) := (\omega_1 - x, \omega_2 - x) \in \Omega,
    \end{equation*} 
     where $\omega_1 - x$ and $\omega_2 - x$ are seen as elements of the torus $\Omega_1$ and $\Omega_2$ respectively, that is are taken modulo $\L_1$ and $\L_2$ respectively. Note that $\alpha_x$ is measure preserving (for the Lebesgue measure for $\Omega$).
         
    \medskip
    
    In what follows, we consider functions $f = f(\omega, x, z) : \Omega \times \R^2 \times I_L \to \C$, and we often write $f_\omega(x, z)$ for $f(\omega, x, z)$. We say that such a function $f$ is {\bf stationary} if
    \begin{equation*}
        \forall x' \in \R^2, \qquad   f(\omega, x, z) = f(\alpha_{x'} \omega, x - x', z).
    \end{equation*}
    In particular, taking $x' = x$ leads to $f(\omega, x, z) = f(\alpha_x \omega, 0, z)$. In fact, stationary functions are of the form $f(\omega, x, z) := \ff(\alpha_x \omega, z) = \ff(\omega_1 - x, \omega_2 - x, z)$ for some $\ff : \Omega \times I_L \to \C$, the {\bf generating function} of $f$. For instance, $m(\omega, x, z)$ is a stationary function with generating function
    \begin{equation*}
        m_\omega(x, z) := \fm(\alpha_x \omega, z), \qquad \fm(\omega, z) := m_1(-\omega_1, z) + m_2 (-\omega_2, z).
    \end{equation*}
    
    \begin{remark}
    Note that we restrain our study to bilayer 2D materials, but the case of multi-layers can be tackled the same way, with no additional mathematical difficulty.
    \end{remark}
 
     \subsection{Periodic vs incommensurate systems}
    
    In the modeling of bilayer materials, two cases are possible depending on the lattices $\L_1$ and $\L_2$.
    
    \medskip
    
    \underline{The periodic case  (commensurate lattices).} Consider first the case where $\L  := \L_1 \cap \L_2$ is also a discrete lattice of $\R^2$. In this case, for all $\omega \in \Omega$, the charge $m_\omega$ is a $\L $--periodic function. In this case, the physical properties of the system depend on the configuration $\omega \in \Omega$. For instance, in the case of two parallel graphene sheets with $\L_1 = \L_2 = \L =\Z a_1+\Z a_2$ representing the same triangular lattice, the AA stacking corresponds to $\omega_1=\omega_2$, while the AB stacking (the equilibrium configuration for standalone untwisted bilayer graphene) corresponds to $\omega_1=\omega_2+\frac 13 (a_1+a_2)$. These two configurations have very different electronic properties. 
    
    In the general case of twisted materials with commensurate twist angle, it may happen that the unit cell of $\L $ contains a large number of atoms, so that numerical computations can be very challenging. But from a mathematical viewpoint, we are in a periodic setting, which has been studied extensively in the literature~\cite{CatLe-Lio-01, CatLe-Lio-02, GonLah-16}. The main difference with previous works comes from the encapsulated setting, which changes the electrostatic properties of the Coulomb interaction. We refer to Section~\ref{sec:periodic} below for our results in the periodic case, and additional comments on the structural difference with the quasi-periodic case.
    
    \medskip

    \underline{The quasi-periodic case (incommensurate lattices).} The group action $\alpha$ is called {\bf ergodic} on $\Omega$ (for the Lebesgue measure) if the only measurable functions $\ff:\Omega\to\C$ that are $\alpha$-invariant, that is $\ff\circ \alpha_x = \ff$ for all $x\in\R^2$, are the constant functions. In our case, Fourier decomposing $\ff : \Omega=\Omega_1 \times \Omega_2 \to \C$ as
    \begin{equation*}
    \ff(\omega) = \sum_{G} \widehat \ff_G \re^{\ri (G_1 \omega_1 + \omega_2 G_2)}, \quad \text{so that} \quad
    \ff(\alpha_x \omega) = \sum_{G} \widehat{\ff}_G \re^{- \ri x (G_1 + G_2)} \re^{\ri (G_1 \omega_1 + G_2 \omega_2)} ,
    \end{equation*}
    (the sum runs over $G = (G_1, G_2) \in \L_1^* \times \L_2^*$), we see that  $\ff\circ \alpha_x = \ff$ for all $x$ iff $\widehat \ff_G = 0$ for all $G$ with $G_1 + G_2 = 0$. We deduce that the action is ergodic if and only if the two lattices are {\bf incommensurate}, in the sense $\L_1^* \cap \L_2^* = \{ 0 \}$, that is
    \begin{equation*}
        \forall\ G_1 \in \L_1^*, \quad \forall\ G_2 \in \L_2^*, \qquad G_1 + G_2 = 0 \quad \Longleftrightarrow \quad G_1 = G_2 = 0.
    \end{equation*}
    
     The Birkhoff-Von Neumann ergodic theorem~\cite{Moo-15,PasFig-92,CarLac-12} states that if $f(\omega, x) = \ff(\alpha_x \omega, 0)$ is stationary,  and if $\omega \mapsto f(\omega, x_0)$ is integrable for some $x_0 \in \R^d$, then we have, setting $B_R:=\{x\in\R^d:\|x\|<R\}$,
    \begin{equation}\label{eq:Birkhoff_thm}
        \lim_{R \to \infty}  \fint_{B_R} f(\omega_0, x) \rd x = \fint_{\Omega} f(\omega, x_0) \rd \omega = \fint_\Omega \ff(\omega) \, \rd \omega,
    \end{equation}
    for almost every $x_0 \in \R^d$ and almost every $\omega_0 \in \Omega$. Here, we use the notation $\fint_\cA := \frac{1}{| \cA | } \int_\cA$ for normalized integrals on the specified domain. In other words, the {\em averaged integral per unit area} is well-defined for stationary functions, and independent of the configuration $\omega_0 \in \Omega$.
    
    \medskip
    
    In view of the Birkhoff--Von Neumann ergodic theorem, we expect that, in the incommensurate case, the physical properties of the material are independent of the configuration $\omega \in \Omega$.
    
    \medskip

    Note that {\bf intermediate cases} are also possible. For instance when $\L_1$ and $\L_2$ are commensurate in one direction but incommensurate in the other (take $\L_1 = \Z \oplus \Z$ and $\L_2 = \Z \oplus \sqrt{2} \Z$). We do not consider such situations in the present article.
        
    \medskip

     \begin{remark}[Atomic relaxation] \label{rem:atomic_relaxation} As already mentioned in the introduction, atomic relaxation plays an important role in the physics of multilayer 2D materials. For unrelaxed periodic configurations, and in the absence of translation symmetry breaking, the relaxed configuration is also periodic and our arguments are unchanged. In the incommensurate case, the configuration space formalism, used in this contribution to study the electronic structure of the system, can be also used to study atomic relaxation in multilayer 2D materials~\cite{cazeaux_2019} (here also under a non-symmetry-breaking assumption). After relaxation, the nuclear distribution is no longer the sum of two periodic functions, but still is a stationary function of the form $m_{\rm relaxed}(\omega,x,z)$. For these reasons, all our results also apply to relaxed configurations (in the absence of symmetry breaking). 
    \end{remark}

    \section{Encapsulated electrostatics}
    \label{sec:electrostatics}
    
    The encapsulated setting modifies the effective Coulomb interaction between the charged particles (nuclei and electrons) of the 2D material. As will be seen below, the encapsulated Coulomb potential shares properties which are closer to a Yukawa potential (exponential decay of the Green's kernel) than a full space Coulomb potential (polynomial decay of the Green's kernel). This will allow us to define the encapsulated Coulomb potential for densities which do not decay at infinity (such as periodic or stationary densities).

    \medskip
    
    Recall that the thin layers of insulating material are modeled by a polarizable continuum medium. We model this with dielectric permittivity $\epsilon$ depending solely on the transverse coordinate $z$ such that
    \begin{equation}\label{eq:bounds_epsilon}
        1 \le \epsilon(z) \le \epsilon_{\rm r} \quad \mbox{a.e.},
    \end{equation}
    where $\epsilon_{\rm r}$ is the dielectric permittivity of the insulating material. The potential difference $v_0$ between the lower and upper electrodes is tunable. Without loss of generality, we can assume that the potentials of the lower and upper electrodes are $v_0/2$ and $-v_0/2$ respectively (see Fig.~\ref{fig:encapsulated}).

    \subsection{Finite charge distributions}
    
    The results in this section are classical and are just recalled for the sake of completeness.
    Let $f \in H^{-1}(S_L)$ be an electronic charge density (in what follows, $f = \rho - m$ is the difference between the electronic density $\rho \ge 0$ and the charge density $m \ge 0$). The electrostatic potential $V_{f,v_0}$ generated by this charge density between the electrodes is solution to the inhomogeneous Dirichlet problem
    \begin{equation} \label{eq:VCoulomb}
        \begin{cases}
            - {\rm div}( \epsilon \nabla V_{f,v_0}) = 4 \pi \rho \quad \text{in} \quad S_L, \\
            V_{f,v_0}(\cdot, \pm L/2)  = \pm v_0/2.
        \end{cases}
    \end{equation}
    
    We will mainly focus on the case $v_0 = 0$, and study the Dirichlet problem
    \begin{equation} \label{eq:VCoulomb_bis}
        \text{Find} \quad V_f \in H^1_0(S_L) \quad \text{solution to} \quad  - {\rm div}( \epsilon \nabla V_f) = 4 \pi f \quad \text{in} \quad S_L.
    \end{equation}
    In this case, the boundary condition $V_f(x, \pm L/2) = 0$ has been encoded through the Sobolev space $H^1_0(S_L)$. The reduction to the case $v_0 = 0$ is justified by the following decomposition
    \begin{equation} \label{eq:Vv0-V0}
    V_{f, v_0} = V_{f} + v_0 g_0,
    \end{equation}
    where $g_0 = g_0(z) \in H^1(I_L)$ is the function
    \begin{equation} \label{eq:def:g0}
    g_0(z) :=  \frac{\int_{-L/2}^{z}\frac{\rd t}{\epsilon(t)}}{\int_{-L/2}^{L/2}\frac{\rd t}{\epsilon(t)}}-\frac{1}{2},
    \quad \text{which solves} \quad
    \begin{cases}
        - \partial_z (\epsilon \partial_z g_0) & = 0 \quad \text{in} \quad I_L \\
        g_0(\pm L/2) & = \pm 1/2
    \end{cases} . 
    \end{equation}
    
    Equation~\eqref{eq:VCoulomb_bis} is a classical problem. First, we recall that the weak formulation of~\eqref{eq:VCoulomb_bis} is
    \begin{equation} \label{eq:LaxMilgram}
        \forall h \in H^1_0(S_L), \qquad \int_{S_L} \epsilon \nabla V_f \cdot \nabla h = 4 \pi \langle f, h \rangle_{H^{-1}(S_L),H^1_0(S_L)}.
    \end{equation}
    Recall Lax--Milgram theory, which states the following.
    \begin{proposition}
        Let $f \in H^{-1}(S_L)$. There exists a unique $V_f \in H^1_0(S_L)$ solution to~\eqref{eq:VCoulomb_bis}. 
    \end{proposition}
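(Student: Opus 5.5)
The plan is to apply the Lax--Milgram theorem to the weak formulation~\eqref{eq:LaxMilgram} on the Hilbert space $H^1_0(S_L)$. Define the bilinear (sesquilinear) form
\[
a(V,h) := \int_{S_L} \epsilon \nabla V \cdot \nabla h, \qquad V,h \in H^1_0(S_L),
\]
and the linear form $\ell(h) := 4\pi \langle f, h\rangle_{H^{-1}(S_L),H^1_0(S_L)}$. Continuity of $\ell$ is immediate from $f \in H^{-1}(S_L)$, with $|\ell(h)| \le 4\pi \|f\|_{H^{-1}}\|h\|_{H^1}$. Continuity of $a$ follows from the upper bound $\epsilon \le \epsilon_{\rm r}$ in~\eqref{eq:bounds_epsilon} and Cauchy--Schwarz: $|a(V,h)| \le \epsilon_{\rm r}\|\nabla V\|_{L^2}\|\nabla h\|_{L^2}$.

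The only point requiring care, and the main obstacle, is coercivity. The domain $S_L$ is unbounded in the $x$ directions, so one cannot invoke a compactness-based Poincaré inequality on a bounded domain. It is, however, bounded in $z$, and this suffices. For $h \in C_c^\infty(S_L)$, dense in $H^1_0(S_L)$, and fixed $x$, the one-dimensional Poincaré inequality on $I_L$ with Dirichlet conditions at $\pm L/2$ gives
\[
\int_{I_L} |h(x,z)|^2 \rd z \le \frac{L^2}{\pi^2} \int_{I_L} |\partial_z h(x,z)|^2 \rd z .
\]
Alternatively, a cruder constant $L^2$ follows by writing $h(x,z)=\int_{-L/2}^z \partial_z h$ and applying Cauchy--Schwarz. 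Integrating in $x$ and extending by density yields $\|h\|_{L^2(S_L)} \le \frac{L}{\pi}\|\nabla h\|_{L^2(S_L)}$. Combined with the lower bound $\epsilon \ge 1$, this gives
\[
a(h,h) \ge \|\nabla h\|_{L^2}^2 \ge \Bigl(1+\tfrac{L^2}{\pi^2}\Bigr)^{-1}\|h\|_{H^1(S_L)}^2 .
\]

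Lax--Milgram then yields a unique $V_f \in H^1_0(S_L)$ satisfying~\eqref{eq:LaxMilgram} for all $h$, which is by definition the weak solution of~\eqref{eq:VCoulomb_bis}. It also gives the bound $\|V_f\|_{H^1} \le 4\pi(1+L^2/\pi^2)\|f\|_{H^{-1}}$. If complex-valued functions are used, one works with the sesquilinear form and the antilinear pairing, and the argument is unchanged.
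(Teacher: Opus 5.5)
Your proof is correct and follows the paper's route: the paper omits the details but says the result follows from Lax--Milgram on $H^1_0(S_L)$, with coercivity given by the Poincar\'e inequality in the $z$-direction, $\|V\|_{L^2(S_L)} \le \frac{L}{\pi}\|\nabla V\|_{L^2(S_L)}$, and continuity by $1 \le \epsilon \le \epsilon_{\rm r}$. Your write-up fills in exactly these steps.
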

    
    For brevity, we skip the elementary proof of this result, but emphasize that it relies on the Poincaré inequality in the $z$--direction, which implies that
    \begin{equation} \label{eq:Poincare}
        \forall V \in H^1_0(S_L), \quad \| V \|_{L^2(S_L)} \le \frac{L}{\pi} \| \nabla V \|_{L^2(S_L)}.
    \end{equation}
    In particular, the homogeneous norm $V \mapsto \| \nabla V \|_{L^2(S_L)}$ is equivalent to the $H^1(S_L)$ norm, and we have the embedding $H^1_0(S_L) \hookrightarrow L^2(S_L)$. We also have $H^1_0(S_L) \hookrightarrow L^6(S_L)$ by Sobolev inequality.
    
    For any $f \in H^{-1}(S_L)$, we defined the {\bf encapsulated Hartree energy} of $f$ by
    \begin{equation*}
        \cD(f, f) := \langle f,V_f \rangle_{H^{-1}(S_L),H^1_0(S_L)}  =   \frac{1}{4 \pi} \int_{S_L} \epsilon | \nabla V_f |^2 \ge 0.
    \end{equation*}
    The last inequality, together with~\eqref{eq:Poincare}, shows that the quadratic form $f \mapsto \cD(f,f)$ is positive definite. In particular, it is strictly convex. We also recall a well-known result.
    \begin{lemma}
        If $f \ge 0$, then $V_f \ge 0$ as well.
    \end{lemma}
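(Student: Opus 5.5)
We prove the lemma with the standard weak maximum principle argument, testing the weak formulation~\eqref{eq:LaxMilgram} against the negative part of $V_f$. The hypothesis $f \ge 0$ is understood in the distributional sense: $\langle f, h\rangle_{H^{-1}(S_L),H^1_0(S_L)} \ge 0$ for every $h \in H^1_0(S_L)$ with $h \ge 0$ a.e. (For $f \in L^2$ or $f \in L^1_{\rm loc}\cap H^{-1}$ this coincides with pointwise nonnegativity, by density of nonnegative test functions in the nonnegative cone of $H^1_0$.)

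First, set $V_- := \max(-V_f, 0) \ge 0$, so that $V_f = V_+ - V_-$. We use the classical Stampacchia fact that truncation preserves $H^1_0$. If $V \in H^1_0(S_L)$, then $V_\pm \in H^1_0(S_L)$, with $\nabla V_- = -\mathds{1}_{\{V<0\}} \nabla V$ almost everywhere. This holds on arbitrary open sets, in particular on the unbounded slab $S_L$: approximate $V$ in $H^1$ by $C_c^\infty(S_L)$ functions $\phi_n$; then $(\phi_n)_-$ are Lipschitz with compact support in $S_L$, hence lie in $H^1_0(S_L)$, and $(\phi_n)_- \to V_-$ in $H^1$, since $t\mapsto t_-$ is Lipschitz and the chain rule holds for Lipschitz functions of Sobolev functions.

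Next, take $h = V_-$ in~\eqref{eq:LaxMilgram}. On the left-hand side,
\begin{equation*}
\int_{S_L} \epsilon \nabla V_f \cdot \nabla V_- = -\int_{S_L} \epsilon\, \mathds{1}_{\{V_f<0\}} |\nabla V_f|^2 = -\int_{S_L} \epsilon |\nabla V_-|^2 .
\end{equation*}
On the right-hand side, $4\pi \langle f, V_- \rangle \ge 0$ because $V_- \ge 0$. Combining the two and using $\epsilon \ge 1$ from~\eqref{eq:bounds_epsilon}, we get $\|\nabla V_-\|_{L^2(S_L)}^2 \le 0$. The Poincaré inequality~\eqref{eq:Poincare} then gives $V_- = 0$, that is, $V_f \ge 0$ almost everywhere.

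The only step requiring care is the truncation lemma in $H^1_0$ of the unbounded domain, together with the identity $\nabla V_- = -\mathds{1}_{\{V<0\}}\nabla V$. Both are standard and can be cited, for instance from Gilbarg--Trudinger, Lemma 7.6, or from Kinderlehrer--Stampacchia. Everything else is a direct computation.
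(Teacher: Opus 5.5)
Your proof is correct and follows the paper's argument: test the weak formulation~\eqref{eq:LaxMilgram} with $V_- = \max(0,-V_f) \in H^1_0(S_L)$, use $f \ge 0$ and $\epsilon \ge 1$ to get $\nabla V_- = 0$, and conclude with the Poincaré inequality~\eqref{eq:Poincare}. In your truncation step, Lipschitz continuity and the chain rule only give that $(\phi_n)_-$ is bounded in $H^1$ and converges to $V_-$ in $L^2$, not strong $H^1$ convergence; this weak $H^1$ convergence still suffices, because $H^1_0(S_L)$ is weakly closed, so $V_- \in H^1_0(S_L)$.
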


    \begin{proof}
        We take $h = V_{-} = \max \{ 0, -V_f \} \in H^1_0(S_L)$ in the weak-formulation~\eqref{eq:LaxMilgram}, and get
        \begin{equation*}
        0 \ge - \int_{S_L} \epsilon | \nabla V_- |^2 = 4 \pi \langle\rho, V_-\rangle_{H^{-1}(S_L),H^1_0(S_L)} \ge 0.
        \end{equation*}
        So both terms are null. Hence $\nabla V_- = 0$, so that, in view of~\eqref{eq:Poincare}, $V_- = 0$ almost everywhere. So $V_f \ge 0$.
    \end{proof}

        Actually, using Stampacchia's theory~\cite{Sta-65} (see also~\cite[Chapter 8]{GilTru-01}), we can prove the following
        \begin{lemma} \label{lem:Stampacchia}
             If $f \in H^{-1} (S_L) \cap L^p(S_L)$ with $p > \tfrac32$, then $V_f \in L^\infty(S_L)$, and there is a constant $C_p > 0$ such that
             \begin{equation*}
                \forall f \in H^{-1} (S_L) \cap L^p(S_L), \quad \| V_f \|_{L^\infty(S_L)} \le C_p \| f \|_{L^p(S_L)}^{\frac{p}{5p-6}} \| f \|_{H^{-1}(S_L)}^{\frac{4p-6}{5p-6}}.
             \end{equation*}
             In addition,  for all $\tfrac{3}{2} < p \le 2$, there is a constant $C_{p,L} > 0$ such that
             \begin{equation} \label{eq:continuity_Coulomb_Lp_Linfty}
                 \forall f \in L^p(S_L), \quad \| V_f \|_{L^\infty(S_L)} \le C_{p,L} \| f \|_{L^p(S_L)} \qquad \mbox{(case $\tfrac32 < p \le 2$).}
             \end{equation}
        \end{lemma}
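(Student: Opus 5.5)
The approach is Stampacchia's truncation method, arranged so that the infinite measure of $S_L$ does no harm. A quick scaling check ($f\mapsto f(\lambda\,\cdot)$ in $\R^3$) shows the first inequality is dimensionally homogeneous. So only the \emph{scale-invariant} Sobolev inequality $\|w\|_{L^6(S_L)}\le S\|\nabla w\|_{L^2(S_L)}$ should be used there. This inequality holds for $w\in H^1_0(S_L)$ with the $\R^3$ constant, by extension by zero. The domain-dependent Poincaré inequality~\eqref{eq:Poincare} enters only in the second statement.

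\emph{Step 1 (energy estimate on super-level sets).} Let $V=V_f$ and, for $k\ge0$, set $w_k:=(V-k)_+\in H^1_0(S_L)$ and $A_k:=\{V>k\}$. Testing~\eqref{eq:LaxMilgram} with $h=w_k$ and using $\epsilon\ge1$ gives
\[
\|\nabla w_k\|_{L^2}^2\le 4\pi\int_{A_k} f\,w_k\le 4\pi\|f\|_{L^p}\|w_k\|_{L^{p'}(A_k)}\le 4\pi\|f\|_{L^p}\|w_k\|_{L^6}\,|A_k|^{\frac1{p'}-\frac16}.
\]
Combining with Sobolev yields $\|w_k\|_{L^6}\le C\|f\|_{L^p}|A_k|^{\frac56-\frac1p}$. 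For $h>k$ we have $(h-k)|A_h|^{1/6}\le\|w_k\|_{L^6}$. Hence
\[
|A_h|\le \frac{(C\|f\|_{L^p})^6}{(h-k)^6}\,|A_k|^{\beta},\qquad \beta:=5-\frac6p>1 \iff p>\tfrac32 .
\]

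\emph{Step 2 (starting level, the main obstacle).} Stampacchia's iteration lemma applies to the nonincreasing function $\varphi(h)=|A_h|$. It gives $\varphi(k_0+d)=0$ with $d=C\|f\|_{L^p}\varphi(k_0)^{(\beta-1)/6}$. The problem is that $|A_0|$ may be infinite, since $S_L$ has infinite measure. I would therefore start at a level $k_0>0$ and bound $|A_{k_0}|$ by Chebyshev:
\[
|A_{k_0}|\le k_0^{-6}\|V\|_{L^6}^6\le C k_0^{-6}\|\nabla V\|_{L^2}^6\le Ck_0^{-6}\|f\|_{H^{-1}}^6 .
\]
The last inequality comes from testing~\eqref{eq:LaxMilgram} with $h=V$. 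This gives
\[
\sup V\le k_0+C\|f\|_{L^p}\bigl(\|f\|_{H^{-1}}/k_0\bigr)^{\beta-1}.
\]
Optimizing, I take $k_0=\|f\|_{L^p}^{1/\beta}\|f\|_{H^{-1}}^{(\beta-1)/\beta}$. Since $1/\beta=p/(5p-6)$ and $(\beta-1)/\beta=(4p-6)/(5p-6)$, this gives exactly the claimed bound for $V_+$. Applying the same argument to $V_{-f}=-V_f$ (by linearity) bounds $V_-$ as well.

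\emph{Step 3 (the case $\tfrac32<p\le2$).} Here $p'\in[2,3)\subset[2,6]$. Interpolating between~\eqref{eq:Poincare} and Sobolev gives $\|h\|_{L^{p'}}\le C_{p,L}\|\nabla h\|_{L^2}$ on $H^1_0(S_L)$. By duality $L^p(S_L)\hookrightarrow H^{-1}(S_L)$ with $\|f\|_{H^{-1}}\le C_{p,L}\|f\|_{L^p}$. In particular every $f\in L^p(S_L)$ lies in $H^{-1}(S_L)$. Inserting this into the first estimate, and using that the two exponents sum to $1$, yields~\eqref{eq:continuity_Coulomb_Lp_Linfty}.
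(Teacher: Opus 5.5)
Your proof is correct and follows essentially the same route as the paper: Stampacchia truncation with $(V_f-k)_+$, the level-set recursion with $\beta=5-\tfrac6p$, starting the iteration at a positive level $k_0$ whose super-level set is controlled by Chebyshev and $\|V_f\|_{L^6}\le C\|f\|_{H^{-1}}$, optimizing in $k_0$, and finally using $L^p(S_L)\hookrightarrow H^{-1}(S_L)$ for $\tfrac32<p\le 2$. Your Chebyshev step in $L^6$ and your interpolation between \eqref{eq:Poincare} and Sobolev are simply the direct and dual forms, respectively, of the paper's Markov--H\"older step and its embedding $L^2(S_L)+L^{6/5}(S_L)\hookrightarrow H^{-1}(S_L)$.
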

        
        Let us provide a proof for completeness, which only uses the weak-formulation~\eqref{eq:LaxMilgram}, and works in particular in the case where $\epsilon(\cdot)$ is not a smooth function.
        
        \begin{proof}
            For $t > 0$, we define 
            \begin{equation*}  
                \phi_t := (V_f - t)_+ \in H^1_0(S_L), \quad A(t) := \{ V_f > t \} \subset S_L \quad \text{and} \quad \mu(t) := |A(t)|.
            \end{equation*}
            Taking $\phi_t$ as a test function in~\eqref{eq:LaxMilgram} using $\epsilon \ge 1$, and Hölder's inequality, we get
            \begin{equation}\label{eq:nabla_phit}
                \|\nabla \phi_t\|_{L^2(S_L)}^2 \le \int_{S_L} \epsilon |\nabla \phi_t|^2 = 4\pi \int_{S_L} f \phi_t \le 4 \pi  \|f\|_{L^p(S_L)} \|\phi_t\|_{L^{p'}(S_L)}.
            \end{equation}
            Since $p > \frac32$, we have $p' < 3$ (and in particular, $1 \le p' \le 6$).
            On the other hand, using that $\phi_t \in H^1_0(A(t))$ with $A(t)$ a bounded set, we can use Hölder's inequality and the Sobolev inequality $H^1_0(S_L) \hookrightarrow L^6(S_L)$ so get
            \begin{equation*}
                 \|\phi_t\|_{L^{p'}(S_L)} \le \| \phi_t \|_{L^6(S_L)} | A(t) |^{\alpha} \le C \| \nabla \phi_t \|_{L^2(S_L)} \mu(t)^{\alpha} \quad \text{with} \quad  \frac{1}{6} + \frac{\alpha}{1} =  \frac{1}{p'} = 1 - \frac{1}{p}.
            \end{equation*}
            Injecting into~\eqref{eq:nabla_phit} gives
            \begin{equation}\label{eq:bound_nabla_phit}
                \|\nabla \phi_t\|_{L^2(S_L)} \le C\|f\|_{L^p(S_L)}\, \mu(t)^{\alpha}.
            \end{equation}
            Finally, for $s > t$, since $V_f - t \ge s - t$ on $A(s) \subset A(t)$, we also have the Markov type inequality
            \begin{equation*}
            (s-t)\,\mu(s) \le \int_{A(s)} (V_f - t) \le \|\phi_t\|_{L^6(S_L)}\,\mu(s)^{5/6} \le C\|\nabla\phi_t\|_{L^2(S_L)}\,\mu(s)^{5/6}.
            \end{equation*}
            Combined with~\eqref{eq:bound_nabla_phit}, this gives 
            \begin{equation}\label{eq:bound_mut}
                \mu(s) \le \frac{C\|f\|_{L^p(S_L)}^6}{(s-t)^6}\,\mu(t)^{\beta}, \qquad \beta := 6 \alpha = 5 - \frac{6}{p}.
            \end{equation}
            Since $p > 3/2$, we have $\beta > 1$. 
            
            \medskip
            
            In the original works by Stampacchia, which was restricted to bounded domains $B$, the iteration can be started at $t_0 = 0$ (using $\mu(t_0) \le | B |$). In our case however, $S_L$ is unbounded, and we need to start at $t_0 > 0$. To control $\mu(t_0)$, we use that $V_f \in H^1_0(S_L) \hookrightarrow L^6(S_L)$ with $\|V_f\|_{L^6(S_L)} \le C\|f\|_{H^{-1}(S_L)}$, and get the Markov inequality
            \begin{equation*}
            \mu(t_0) \le \frac{\|V_f\|_{L^6(S_L)}^6}{t_0^6} \le \frac{C\|f\|_{H^{-1}(S_L)}^6}{t_0^6} < \infty.
            \end{equation*}
            The Stampacchia iteration lemma~\cite{Sta-65, KinSta-00} applied to~\eqref{eq:bound_mut} yields $\mu(t_0 + T) = 0$ with $T = C\|f\|_{L^p} \mu(t_0)^{(\beta-1)/6}$. So, for all $t_0 > 0$, we have
            \begin{equation*}
                \|V_f\|_{L^\infty(S_L)} \le t_0 + T \le t_0 + C\|f\|_{L^p}\mu(t_0)^{\tfrac{\beta-1}{6}} \le t_0 + C' \| f \|_{L^p} \|f\|_{H^{-1}}^{\beta-1} \frac{1}{t_0^{\beta - 1}}.
            \end{equation*}
            Optimizing over $t_0 > 0$ and taking $t_0 = \left( (\beta - 1) C' \|f\|_{L^p}\|f\|_{H^{-1}}^{\beta-1}\right)^{1/\beta}$ yields, as wanted,
            \begin{equation*}
            \|V_f\|_{L^\infty(S_L)} \le C'' \|f\|_{L^p(S_L)}^{\frac{1}{\beta}}\|f\|_{H^{-1}(S_L)}^{\frac{\beta-1}{\beta}} = C'' \|f\|_{L^p(S_L)}^{\frac{p}{5p-6}}\|f\|_{H^{-1}(S_L)}^{\frac{4p-6}{5p-6}}.
            \end{equation*}
            Applying the same argument to $-V_f$ concludes the proof of the first inequality of Lemma~\ref{lem:Stampacchia}. To prove the second one, we recall the Sobolev embedding $H^1_0(S_L) \hookrightarrow L^2(S_L) \cap L^6(S_L)$, see~\eqref{eq:Poincare}, so that $L^{2}(S_L) + L^{6/5}(S_L) \hookrightarrow H^{-1}(S_L)$, and in particular, $L^p(S_L) \hookrightarrow H^{-1}(S_L)$ for all $\tfrac32 < p \le 2$.

        \end{proof}

    For encapsulated 2D materials, the electronic density are not in $H^{-1}(S_L)$. On the other hand, in the physical settings considered here, we expect it to be either periodic or quasi-periodic. In order to prove that the corresponding potential is a bounded function, we need to study the Green's function $\mathcal{G}$ of the operator $-{\rm div}(\epsilon \nabla \cdot)$ in $S_L$ with homogeneous Dirichlet boundary conditions.

    \subsection{Green's function} 
    \label{ssec:Greenfunction} We denote by $\cL$ the self-adjoint operator 
    \begin{equation*}
        \cL u := - {\rm div}(\epsilon \nabla u), 
            \end{equation*} 
    on $L^2(S_L)$ with domain $H^2(S_L) \cap H^1_0(S_L)$ and form domain $H^1_0(S_L)$. Since $\epsilon$ only depends on the $z$-variable, we have
    \begin{equation} \label{eq:L_and_tildeL}
        \cL := - \Delta_x \otimes \epsilon(z) + \bbI_{L^2(\R^2)} \otimes \widetilde{\cL},
    \end{equation}
    where $\widetilde{\cL}$ is the self-adjoint operator $L^2(I_L)$ with domain $H^2(I_L) \cap H^1_0(I_L)$ and form domain $H^1_0(I_L)$, defined by $\widetilde{\cL}(u) := - \partial_z (\epsilon \partial_z u)$. The operator $\widetilde{\cL}$ is coercive and has a compact resolvent. 
    
    \begin{lemma} \label{lem:spectral_gap}
        We have the operator bound $\cL \ge \frac{\pi^2}{L^2}> 0$. 
    \end{lemma}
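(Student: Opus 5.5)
The bound is a statement about the quadratic form of $\cL$ on its form domain, so I would check it on $H^1_0(S_L)$ and then pass to the operator. For $u \in H^1_0(S_L)$, the form of $\cL$ is
\begin{equation*}
q(u) = \int_{S_L} \epsilon |\nabla u|^2 .
\end{equation*}
Since $\epsilon \ge 1$ a.e. by~\eqref{eq:bounds_epsilon}, this gives $q(u) \ge \|\nabla u\|_{L^2(S_L)}^2$. The Poincaré inequality~\eqref{eq:Poincare} then yields
\begin{equation*}
q(u) \ge \frac{\pi^2}{L^2}\|u\|_{L^2(S_L)}^2 .
\end{equation*}
As this holds on the whole form domain, the min-max characterization (equivalently, the definition of the Friedrichs operator associated with the closed form) gives $\cL \ge \pi^2/L^2$.

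For completeness, I would justify~\eqref{eq:Poincare} with the sharp constant. Since the paper only recalls it, I would give the one-dimensional argument. The Dirichlet Laplacian $-\partial_z^2$ on $I_L$ has lowest eigenvalue $\pi^2/L^2$, with eigenfunction $\cos(\pi z/L)$. Hence $\int_{I_L} |\partial_z v|^2 \ge \frac{\pi^2}{L^2}\int_{I_L}|v|^2$ for all $v \in H^1_0(I_L)$. For $u \in C_c^\infty(S_L)$ I would apply this for almost every fixed $x$, integrate in $x$, and drop the nonnegative $|\nabla_x u|^2$ term. I would then extend the inequality to all of $H^1_0(S_L)$ by density.

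Alternatively, one can use the tensor decomposition~\eqref{eq:L_and_tildeL}. The term $-\Delta_x \otimes \epsilon$ is nonnegative, and $\widetilde{\cL} \ge -\partial_z^2 \ge \pi^2/L^2$ on $L^2(I_L)$, again because $\epsilon \ge 1$.

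I do not expect a genuine obstacle here. The only point needing care is to argue at the level of quadratic forms rather than on the operator domain, since $\epsilon$ need not be smooth. Passing from the form inequality to the operator inequality is standard.
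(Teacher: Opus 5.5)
Your proposal is correct and follows essentially the paper's argument. The paper uses the tensor decomposition~\eqref{eq:L_and_tildeL}, drops the nonnegative term $-\Delta_x\otimes\epsilon$, and bounds the lowest eigenvalue of $\widetilde{\cL}$ below by that of $-\partial_{zz}^2$ using $\epsilon\ge 1$; this is your ``alternative'' route, while your main route makes the same two comparisons ($\epsilon\ge 1$ and the one-dimensional Dirichlet Poincaré constant $\pi^2/L^2$) directly on the quadratic form on $H^1_0(S_L)$, which is the cleaner way to keep things rigorous when $\epsilon$ is not smooth.
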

        
    \begin{proof}
        Let $\mu_1$ denote the lowest eigenvalue of $\widetilde{\cL}$. Since $\epsilon \ge 1$, we have $\widetilde{\cL} \ge - \partial_{zz}^2$ on $H^1_0(I_L)$. Considering the lowest eigenvalue of both operators, we get $\mu_1 \ge \frac{\pi^2}{L^2}$. Finally, we have $\cL \ge \bbI \otimes \widetilde{\cL}$, so $\cL \ge \mu_1 \ge  \frac{\pi^2}{L^2}$ as well.
    \end{proof}

    The fact that $\cL \ge \mu_1 > 0$, hence has a spectral gap at energy $0$, shows that the encapsulated Coulomb potential shares properties which are closer to those of a Yukawa potential than to the usual 3D Coulomb potential. In some sense, the Dirichlet boundary conditions screen the electrostatic potential. 
     
     \medskip
     
    Let us define the Green's function $\cG(x, z ; x',  z') := \cG_{x', z'}(x, z)$ as the distributional solution to
    \begin{equation*}
    - {\rm div}( \epsilon \nabla \cG_{x', z'}) = 4 \pi \delta_{x', z'}, \qquad \cG_{x', z'}(x, \pm L/2) = 0.
    \end{equation*}
    This function can be seen as the kernel of the operator $\cL^{-1}$ from $L^2(S_L)$ to itself. By longitudinal-translation invariance, we have $\cG(x, z ; x', z') = \cG(x - x' ; z, z')$. 
    
    Using an adaptation of Combes--Thomas estimates~\cite{ComTho-73} and the spectral gap of $\cL$, we can prove the exponential decay of the Green's kernel (see Appendix~\ref{appendix:GF} for a proof).
    
    \begin{lemma} \label{lem:CombesThomas}
        For all compact $\Gamma \subset \R^2$ containing $0$, there exist $C_\Gamma > 0$ and $\mu > 0$ so that
        \begin{align*}
            \forall (x, z), (x',z') \in S_L, \qquad | x - x' | \notin \Gamma \implies  | \cG(x, z;x', z') |  \le C_\Gamma \, \re^{ - \sqrt{\mu} | x - x' |}.
        \end{align*}
    \end{lemma}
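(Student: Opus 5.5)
Since $\epsilon$ depends only on $z$, I would not run an abstract Combes--Thomas argument directly. Instead I would diagonalize $\widetilde{\cL}$ and reduce to two-dimensional Yukawa kernels. Let $(\mu_n, e_n)_{n\ge1}$ be the eigenpairs of $\widetilde{\cL}$ on $L^2(I_L)$, with $e_n$ real and orthonormal. $\widetilde{\cL}$ is not of the form $-\partial_z^2$, so I would use the weighted problem: by~\eqref{eq:L_and_tildeL}, $\cL = -\Delta_x\otimes\epsilon + \bbI\otimes\widetilde{\cL}$. Let $\cM$ denote multiplication by $\epsilon$ on $L^2(I_L)$. Consider the generalized eigenproblem $-\partial_z(\epsilon\partial_z e_n) = \lambda_n \epsilon e_n$ with $e_n$ orthonormal in $L^2(I_L,\epsilon\,\rd z)$. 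This is a regular Sturm--Liouville problem with $1\le\epsilon\le\epsilon_{\rm r}$, and the same comparison as in Lemma~\ref{lem:spectral_gap} gives $\lambda_n \ge \pi^2/(L^2\epsilon_{\rm r})$ and $\lambda_n \gtrsim n^2$.

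Expanding in this basis, $\cL$ acts on the $n$-th mode as $\epsilon(-\Delta_x+\lambda_n)$, so that
\[
\cG(x-x';z,z') = 4\pi \sum_{n\ge1} e_n(z)e_n(z')\,Y_{\lambda_n}(x-x'), \qquad Y_\lambda(x) = \frac{1}{2\pi}K_0(\sqrt{\lambda}|x|),
\]
where $K_0$ is the modified Bessel function, i.e.\ the 2D Yukawa kernel. The factor $\epsilon(z')$ cancels against the weight in the orthonormality, and I would check this carefully.

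Next I would bound the terms. One-dimensional Sobolev embedding on $I_L$ gives $\|e_n\|_{L^\infty}^2 \le C(1+\lambda_n)^{1/2}$, since $\|e_n'\|_{L^2}^2\le\lambda_n\epsilon_{\rm r}$. Fix $\Gamma$ and pick $r_0>0$ with $|x|\ge r_0$ whenever $x\notin\Gamma$. For $|x|\ge r_0$ we have $K_0(s)\le C_{r_0}\,s^{-1/2}\re^{-s}$, uniformly for $s\ge \sqrt{\lambda_1}r_0$. Set $\mu:=\lambda_1-\delta$, or simply $\mu<\lambda_1$. Then
\[
|\cG| \le C\sum_n (1+\lambda_n)^{1/2}\re^{-\sqrt{\lambda_n}|x|} \le C\re^{-\sqrt{\mu}|x|}\sum_n (1+\lambda_n)^{1/2}\re^{-(\sqrt{\lambda_n}-\sqrt{\mu})r_0}.
\]
The last sum converges because $\sqrt{\lambda_n}\gtrsim n$. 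This gives $C_\Gamma$. In fact one can take $\mu=\lambda_1$ by splitting off the $n=1$ term, and $\lambda_1\ge\pi^2/(L^2\epsilon_{\rm r})$.

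The main obstacle is justifying the series representation rigorously when $\epsilon$ is merely bounded measurable. Concretely, one must show that the distributional Green's function coincides with the pointwise-convergent series away from the diagonal, i.e.\ for $x\neq x'$. I would do this by testing against $\varphi\otimes\psi$. By a partial Fourier transform in $x$, $\cL^{-1}$ acts on the $n$-th mode by the multiplier $(|k|^2+\lambda_n)^{-1}$, composed with $\cM^{-1}$. Summing in $L^2$ and then using the absolute convergence shown above for $|x-x'|\ge r_0$ identifies the kernel there. If this route proves too delicate, the fallback is the genuine Combes--Thomas argument. That argument conjugates by $\re^{\eta\cdot x}$, uses $\cL_\eta=\cL-\epsilon|\eta|^2+\text{(first order)}$, and keeps it invertible for $|\eta|<\sqrt{\mu}$ via Lemma~\ref{lem:spectral_gap}. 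Pointwise bounds would then follow from local elliptic estimates (De Giorgi--Nash--Moser) away from the singularity.
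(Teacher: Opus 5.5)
Your proposal is correct and follows essentially the same route as the paper. Both arguments expand in the $L^2(I_L,\epsilon\,\rd z)$-orthonormal eigenbasis of $-\partial_z(\epsilon\partial_z\,\cdot)=\lambda\epsilon\,\cdot$ to obtain $\cG = 2\sum_k K_0(\sqrt{\lambda_k}|x-x'|)\,g_k(z)g_k(z')$, bound $\|g_k\|_\infty^2\lesssim\sqrt{\lambda_k}$ by the one-dimensional Gagliardo--Nirenberg argument, and sum the series using the exponential decay of $K_0$ and the quadratic growth of $\lambda_k$. Your min-max bound $\lambda_k\ge \pi^2k^2/(\epsilon_{\rm r}L^2)$ is a slightly more elementary substitute for the Weyl asymptotics the paper invokes. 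Your remarks on justifying the series identity off the diagonal, and on the tacit requirement that $\Gamma$ contain a neighbourhood of $0$ so that $r_0>0$ exists, address points the paper leaves implicit.
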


   \subsection{Charge distributions in uniform Lebesgue spaces.}
    \label{ssec:Coulomb_Lpunif}
    As we already mentioned, the electronic density of an encapsulated 2D material is not in $H^{-1}(S_L)$, so we also need to extend the definition of $V_f$ to larger classes of charge distributions $f$. For instance, in the periodic setting, the electronic density is a locally-integrable $\L$-periodic function on $S_L$. For completeness, let us work in the larger $L^p_{\rm unif}(S_L)$ class, defined for $1 \le p \le \infty$ by
    \begin{align*}
        L^p_\unif(S_L) &:= \left\{ f \in L^p_\loc(S_L), \quad  \|f\|_{L^p_\unif(S_L)}:= \sup_{x\in\R^2} \| f \|_{L^p((x+[0,1]^2)\times I_L)} <\infty \right\}.
    \end{align*}
    Note that $L^\infty_{\rm unif}(S_L)=L^\infty(S_L)$, that $L^q_{\rm unif} \subset L^p_{\rm unif}$ if $p \le q$ (this comes from Hölder's inequality), and that $L^p_{\rm loc}$ functions which are $\L$-periodic with respect to the longitudinal variable $x$ are in $L^p_{\unif}(S_L)$ as well.
    
    \begin{lemma} \label{lem:potential_Lp_unif}
        If $f \in L^p_{\rm unif}(S_L)$ for some $p > \tfrac{3}{2}$, then for all $(x,z) \in S_L$, the integral
        \begin{equation*}
            V_f(x,z) := \int_{S_L} \cG(x, z ; x', z') f(x', z') \rd x' \rd z'
        \end{equation*}
        is well-defined. The function $V_f$ satisfies the Dirichlet problem~\eqref{eq:VCoulomb} with $v_0 = 0$. In addition, $V_f \in L^\infty(S_L)$, and there is a constant $C_{p,L}^{\rm unif}>0$ such that 
        \begin{equation*}
        \forall f \in L^p_{\rm unif}(S_L), \qquad \| V_f\|_{L^\infty(S_L)}  \le C_{p,L}^{\rm unif} \| f \|_{L^p_{\rm unif}(S_L)}.
        \end{equation*}
    \end{lemma}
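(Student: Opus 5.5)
The plan is to decompose $f$ with a partition of unity in the longitudinal variable into pieces localized on unit cells, treat each piece with the finite-charge theory of Lemma~\ref{lem:Stampacchia}, and use the exponential decay of Lemma~\ref{lem:CombesThomas} to sum the contributions. Concretely, for $k \in \Z^2$ set $Q_k := (k+[0,1)^2)\times I_L$ and $f_k := f \mathds{1}_{Q_k}$, so that $f = \sum_k f_k$ pointwise. Since $Q_k$ is bounded and $p>\tfrac32$, we may assume without loss of generality $p\le 2$ (as $L^p_{\rm unif}\subset L^{q}_{\rm unif}$ for $q\le p$). Then $f_k \in L^p(S_L)\subset H^{-1}(S_L)$, and \eqref{eq:continuity_Coulomb_Lp_Linfty} gives $\|V_{f_k}\|_{L^\infty}\le C_{p,L}\|f\|_{L^p_{\rm unif}}$, uniformly in $k$. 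Here $V_{f_k}=\cL^{-1}(4\pi f_k)$ coincides with the integral of $\cG$ against $f_k$, because $\cG$ is the kernel of $\cL^{-1}$.

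\textbf{Near cells.} Fix $(x,z)$ and choose a compact $\Gamma$, say the ball of radius $3$. For the finitely many $k$ with $Q_k$ within distance $3$ of $(x,z)$ (a number bounded independently of $x$), we use the uniform $L^\infty$ bound above.

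\textbf{Far cells.} For the remaining $k$, we have $|x-x'|\notin\Gamma$ for all $(x',z')\in Q_k$. Lemma~\ref{lem:CombesThomas} then gives
\[
|V_{f_k}(x,z)|\le C_\Gamma \re^{-\sqrt{\mu}(|x-k|-2)}\|f_k\|_{L^1}\le C \re^{-\sqrt{\mu}|x-k|}\, L^{1/p'}\|f\|_{L^p_{\rm unif}},
\]
by Hölder's inequality on the bounded cell. The series $\sum_{k}\re^{-\sqrt\mu|x-k|}$ is bounded uniformly in $x$. Hence the integral defining $V_f(x,z)$ converges absolutely, which shows it is well-defined: absolute convergence holds near the point too, since $|\cG|\,|f_k|$ is integrable, as follows from applying the same estimates to $|f_k|$ and using positivity of $\cG$ given by the maximum principle lemma. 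Summing the near and far contributions yields $\|V_f\|_{L^\infty}\le C^{\rm unif}_{p,L}\|f\|_{L^p_{\rm unif}}$.

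\textbf{Solving the Dirichlet problem.} It remains to show that $V_f$ solves~\eqref{eq:VCoulomb} with $v_0=0$, in the local weak sense: $V_f\in H^1_{\rm loc}$ with zero trace on $z=\pm L/2$, and $\int\epsilon\nabla V_f\cdot\nabla h=4\pi\int fh$ for compactly supported $h\in H^1_0(S_L)$. The partial sums $V_N:=\sum_{|k|\le N}V_{f_k}$ lie in $H^1_0(S_L)$ and each satisfies~\eqref{eq:LaxMilgram} with $\sum_{|k|\le N} f_k$. On a bounded window $W=B_R\times I_L$, the tail $V_f-V_N$ solves the homogeneous equation on $\widetilde W = B_{2R}\times I_L$ with zero Dirichlet data on the electrodes, for $N$ large. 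It also tends to $0$ uniformly there, by the exponential tail bound. A Caccioppoli inequality (test with $\chi^2 (V_f-V_N)$, with $\chi$ a cutoff in $x$ only, which is allowed since the Dirichlet condition holds at $z=\pm L/2$) then gives $\nabla(V_f-V_N)\to0$ in $L^2(W)$. Passing to the limit in the weak formulation concludes.

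\textbf{Main obstacle.} I expect the main obstacle to be this last step. One must justify carefully that the tails are genuinely $H^1$ near $W$ and satisfy the homogeneous equation there, so that the Caccioppoli argument applies. An alternative I would try first is to avoid Caccioppoli altogether: bound $\|\nabla V_{f_k}\|_{L^2(W)}$ directly by testing the equation for $V_{f_k}$ against $\chi^2V_{f_k}$, and use the exponential decay of $V_{f_k}$ on the support of $\chi$ away from $Q_k$ to get a summable series in $k$.
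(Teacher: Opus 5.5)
Your proposal is correct and follows essentially the paper's route: the charge near the evaluation point is handled by the Stampacchia bound of Lemma~\ref{lem:Stampacchia}, and the remainder by the exponential decay of $\cG$ from Lemma~\ref{lem:CombesThomas}, summed over unit cells. The only difference is that the paper uses a single near piece $f\mathds{1}(x\in\Gamma)$ instead of finitely many cells $f_k$. Your cell-by-cell decomposition also has a real advantage: each $V_{f_k}$ lies in $H^1_0(S_L)$, so your second suggestion (testing against $\chi^2 V_{f_k}$ to get exponentially small local gradient bounds) cleanly proves that $V_f$ solves the Dirichlet problem locally, a claim the paper's proof does not address.
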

    In particular, $V_f$ belongs to all $L^q_{\rm unif}(S_L)$ spaces for all $1 \le q \le \infty$.
    
    \begin{proof}
        By longitudinal-translation invariance of the problem, it is enough to bound $V_f(x_0, z_0)$ for $x_0 = 0$. Consider a compact subset $\Gamma$ of $\R^2$ containing $x_0 = 0$. We write $f = f_1 + f_2$ with
        \begin{equation*}
        f_1 (x, z) = f(x, z) \1(x \in \Gamma), \qquad f_2(x, z) = f(x, z) \1(x \notin \Gamma).
        \end{equation*}
        We have similarly $V_f = V_{f_1} + V_{f_2}$. We bound the two terms independently.
        Since $f_1 \in L^p(S_L)$, Lemma~\ref{lem:Stampacchia} shows that $V_{f_1} \in L^\infty(S_L)$, with
        \begin{equation*}
           \left\| V_{f_1} \right\|_{L^\infty(S_L)} \le C_p \| f_1 \|_{L^p(S_L)}^{\frac{p}{5p-6}} \| f_1 \|_{H^{-1}(S_L)}^{\frac{4p-6}{5p-6}} \le C_{p, \Gamma, L} \| f_1 \|_{L^p(S_L)} = C_{p, \Gamma, L} \| f_1 \|_{L^p_{\rm unif}(S_L)}.
        \end{equation*}
        The last inequality comes from the fact that $f_1$ has compact support in $\Gamma \times I_L$, so that $L^p(\Gamma \times I_L) \hookrightarrow L^{6/5}(S_L) \hookrightarrow H^{-1}(S_L)$.
        
        \medskip
        
        To bound $V_{f_2}(0, z_0)$, we use the exponential decay of $\cG$ in Lemma~\ref{lem:CombesThomas}. We have
        \begin{equation*}
        | V_{f_2}(0, z_0) | \le C_\Gamma \int_{S_L} \re^{- \sqrt{\mu} |x' |} | f(x', z') | \rd x' \rd z'.
        \end{equation*}
        We decompose $\R^2$ as $\R^2= \bigcup_{k \in \Z^2} Q_k$ with $Q_k = k + [0, 1]^2$, and write 
        \begin{equation*}
        | V_{f_2}(0, z_0) | \le C' \sum_{k \in \Z^2}  \re^{- \sqrt{\mu} |k |} \int_{Q_k \times I_L} | f|  \le C' \| f \|_{L^1_{\rm unif}} \sum_{k \in \Z^2}  \re^{- \sqrt{\mu} | k |}.
        \end{equation*}
        The last sum is convergent. Finally, Hölder's inequality shows that
        $$\| f \|_{L^1_{\rm unif}(S_L)} \le L^{1/p'} \| f \|_{L^p_{\rm unif}(S_L)},$$
        and the result follows.
    \end{proof}
    
    In the case where $f$ is a $\L $--periodic function in the $x$ variable, then so is $V_f$. In this case, we can define the {\bf periodic Hartree energy per unit cell} as
    \begin{equation} \label{eq:def:Hartree_Periodic}
        \cD_{\rm per}(f,f) := \int_{(\R^2 / \L ) \times I_L} V_f(x,z) f(x,z) \rd x \rd z
         = \frac{1}{4 \pi} \int_{(\R^2 / \L ) \times I_L} \epsilon(z) | \nabla V_f |^2 \rd x \rd z \quad \ge 0.
    \end{equation}
    In the case where $f \in L^p_{\rm per}(S_L)$ for some $p > \frac32$, we have $V_f \in L^\infty(S_L)$, and the first integral is well defined (recall that $L^p_{\rm per}(S_L) \hookrightarrow L^1_{\rm per}(S_L)$).

    \subsection{Stationary charge distributions}
    \label{ssec:Coulomb_stationary}
    We now focus on the stationary case. Recall that $f_\omega(x,z)$ is stationary if it is of the form $\ff(\alpha_x \omega, z)$ for some generating function $\ff : \Omega \times I_L \to \C$, that is
    \begin{equation*}
        f_\omega(x, z) = \ff(\alpha_x \omega, z) = \ff(\omega_1 - x, \omega_2 - x, z).
    \end{equation*}
    If $\ff$ is a continuous function (hence bounded) in $C^0(\Omega \times I_L)$, then so is $f_\omega$. In this case, $f_\omega \in L^\infty(S_L)$, and one can define the potential $V_{f_\omega}$ as in Lemma~\ref{lem:potential_Lp_unif}. By uniqueness of the solution to~\eqref{eq:VCoulomb_bis}, the potential $V_f(\omega, x, z)$ is a bounded stationary function, of the form 
    \begin{equation*}
    V_f(\omega, x, z) := \cV_\ff(\alpha_x \omega, z).
    \end{equation*}
          In the sequel, we will work with generating functions $\ff$ in the Lebesgue space $L^p(\Omega \times I_L)$, which are not necessarily continuous. In this case, $f_\omega(x, z) := \ff(\alpha_x \omega, z)$ may not belong to any $L^p_{\rm unif}(S_L)$ space, and it is unclear at this point that the corresponding potential $V_f$ is still well-defined. The spaces $L^{r,p}$ introduced below are precisely designed to handle this difficulty
    
    \medskip
    
    For $1 \le  r,p < \infty$, we introduce the Lebesgue spaces
    $$
    L^{r,p}:=\left\{ \nu \in L^1(\Omega \times I_L), \quad
    \|\nu\|_{L^{r,p}}^r:=\int_{\Omega_2} \left( \int_{\Omega_1 \times I_L} |\nu(\omega_1,\omega_1+\omega_2,z)|^p \,\rd\omega_1 \,\rd z \right)^{r/p} \,\rd\omega_2 < \infty \right\}, 
    $$
    and analogous definitions for $r=\infty$ or $p=\infty$. After a change of variable $\omega_2' := \omega_1 + \omega_2$, this is also the mixed Lebesgue space
    \begin{equation*}
    L^{r,p} \simeq L^r(\Omega_2, L^p(\Omega_1 \times I_L)).
    \end{equation*}
    Note that when $r = p$, then we have $L^{p,p} = L^p(\Omega \times I_L)$. In addition, since $\Omega_1 \times I_L$ is bounded, Hölder's inequality shows that $L^{r,q}  \subset L^{r, p}$ if $p \le q$, and since $\Omega_2$ is compact, we also have $L^{s, p} \subset L^{r,p}$ if $r \le s$.
    \medskip
    
    The key result of this section is the following, which is the stationary equivalent to Lemma~\ref{lem:potential_Lp_unif}.
    
    \begin{theorem} \label{th:potential_Lrp}
        If $\ff \in L^{r,p}$ with $1 < r \le \infty$ and $\tfrac32 < p \le \infty$ and $f_\omega(x,z):=\ff(\alpha_x\omega,z)$. Then for all $(x,z) \in S_L$ and almost all $\omega \in \Omega$, the integral
        \begin{equation*}
            V_{f_\omega} (x,z) := \int_{S_L} \cG(x, z ; x', z') f_\omega(x', z') \rd x' \rd z'
        \end{equation*}
        is well-defined. The function $V_{f_\omega}$ is solution to the Dirichlet problem~\eqref{eq:VCoulomb} for $f=f_\omega$ and $v_0 = 0$. The function $V_f(\omega, x, z) := V_{f_\omega}(x,z)$ is stationary, of the form $V_f(\omega, x, z) = \cV_\ff(\alpha_x \omega, z)$. In addition, $\cV_\ff \in L^{r,\infty}$, and there exists a constant $C_{r,p, L}^{\rm stat} > 0$ such that 
        \begin{equation*}
            \forall \ff \in L^{r,p}, \qquad \| \cV_\ff \|_{L^{r, \infty}} \le C_{r,p, L}^{\rm stat} \| \ff \|_{L^{r,p}}.
        \end{equation*}
    \end{theorem}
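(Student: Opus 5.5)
The plan is to reduce the stationary case to the local-plus-tail splitting used in the proof of Lemma~\ref{lem:potential_Lp_unif}, and then integrate in the configuration variable. The norm $L^{r,p}$ is exactly what this averaging produces.

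\emph{Step 1: rewrite $f_\omega$ and the norm.} Write $\omega=(\omega_1,\omega_2)$ and set $\omega_2'=\omega_2-\omega_1$. Then $\alpha_x\omega=(\omega_1-x,\omega_1-x+\omega_2')$, so $f_\omega(x,z)=\ff(\omega_1-x,(\omega_1-x)+\omega_2',z)$. For fixed $\omega_2'$, the function $y\mapsto \ff(y,y+\omega_2',z)$ is $\L_1$-periodic in $y\in\R^2$. Therefore, for every square $Q=k+[0,1]^2$,
\[
\|f_\omega\|_{L^p(Q\times I_L)}\le C\, g(\omega_2'), \qquad g(\omega_2'):=\Big(\int_{\Omega_1\times I_L}|\ff(\omega_1,\omega_1+\omega_2',z)|^p\,\rd\omega_1\,\rd z\Big)^{1/p}.
\]
The constant $C$ counts how many unit cells of $\L_1$ cover a unit square. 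Since $g\in L^r(\Omega_2)$ with $\|g\|_{L^r}=\|\ff\|_{L^{r,p}}$, the function $f_\omega$ belongs to $L^p_{\rm unif}(S_L)$ for almost every $\omega$, namely those with $g(\omega_2-\omega_1)<\infty$, and
\[
\|f_\omega\|_{L^p_{\rm unif}}\le C g(\omega_2-\omega_1).
\]
This is the key observation. The obstacle the paper flagged, that $f_\omega$ need not lie in any $L^p_{\rm unif}$ space, does not arise for almost every $\omega$, because the only non-periodic direction is "frozen" by the quantity $\omega_2-\omega_1$, which is invariant under $\alpha_x$. If $p>2$, I would first replace $p$ by $\min(p,2)$ using the inclusion $L^{r,q}\subset L^{r,p}$, so that Lemma~\ref{lem:Stampacchia} applies.

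\emph{Step 2: apply Lemma~\ref{lem:potential_Lp_unif}.} For almost every $\omega$, Lemma~\ref{lem:potential_Lp_unif} gives that the integral defining $V_{f_\omega}$ converges absolutely for every $(x,z)$, that it solves the Dirichlet problem, and that
\[
\|V_{f_\omega}\|_{L^\infty(S_L)}\le C^{\rm unif}_{p,L}\,C\,g(\omega_2-\omega_1).
\]
For stationarity, substitute $x'\to x'+y$ in the Green's function integral and use $\cG(x,z;x',z')=\cG(x-x';z,z')$ together with $f_{\alpha_y\omega}(x')=f_\omega(x'+y)$. This gives $V_{f_{\alpha_y\omega}}(x,z)=V_{f_\omega}(x+y,z)$. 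Hence $V_f(\omega,x,z)=\cV_\ff(\alpha_x\omega,z)$ with $\cV_\ff(\omega,z):=V_{f_\omega}(0,z)$. Measurability of $\cV_\ff$ in $(\omega,z)$ follows from Fubini, since the integrand is jointly measurable and absolutely integrable.

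\emph{Step 3: the norm bound.} Because $\alpha_x$ preserves $\omega_2-\omega_1$, the bound from Step 2 gives, for almost every $(\omega_1,\omega_2)$,
\[
\sup_z|\cV_\ff(\omega_1,\omega_1+\omega_2,z)|\le C' g(\omega_2).
\]
Taking the $L^\infty$ norm in $(\omega_1,z)$ and then the $L^r$ norm in $\omega_2$ yields
\[
\|\cV_\ff\|_{L^{r,\infty}}\le C'\|\ff\|_{L^{r,p}}.
\]

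The main delicate point is the almost-everywhere bookkeeping. "For almost all $\omega$" must hold simultaneously for all $x$. This is handled because the exceptional set, $\{\omega: g(\omega_2-\omega_1)=\infty\}$, is $\alpha$-invariant. The pointwise sup over $(\omega_1,z)$ also has to be justified as an essential sup, which follows from the uniform-in-$x$ bound of Step 2 together with Fubini along the orbits $\omega_1\mapsto\omega_1-x$. The hypothesis $r>1$ plays no role in this argument; the proof works for any $r\ge 1$.
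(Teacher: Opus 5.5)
Your Step~1 rests on a claim that fails for incommensurate lattices. For fixed $\omega_2'$, the map $y\mapsto\ff(y,y+\omega_2',z)$ is \emph{not} $\L_1$-periodic. Replacing $y$ by $y+\ell$ with $\ell\in\L_1$ leaves the first argument unchanged in $\Omega_1$. It shifts the second argument by $\ell$ in $\Omega_2=\R^2/\L_2$, and $\ell\notin\L_2$.

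Likewise, $\omega_2-\omega_1$ is only defined modulo $\L_1+\L_2$, which is dense in $\R^2$. So $g(\omega_2-\omega_1)$ is not a function on $\Omega$: were it an $\alpha$-invariant function, ergodicity would force it to be a.e.\ constant.

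Fix the representative of $\omega_1$ in a cell $\Gamma_1$. What you really get is $\|f_\omega\|^p_{L^p(Q\times I_L)}\le\sum_{\ell\in F_Q}g(\omega_2'+\ell)^p$, with $F_Q\subset\L_1$ finite. As $Q$ runs over $\R^2$, the sets $F_Q$ exhaust $\L_1$. Hence $\|f_\omega\|_{L^p_{\rm unif}}$ is comparable to $\sup_{\ell\in\L_1}g(\omega_2'+\ell)$. Since $\L_1$ acts ergodically by translations on $\R^2/\L_2$, this supremum equals $\|g\|_{L^\infty(\Omega_2)}$ for a.e.\ $\omega$. It is infinite whenever $g\notin L^\infty$.

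A concrete example is $\ff(\omega_1,\omega_2,z)=\mathrm{dist}(\omega_1,\L_1)^{-3/4}\,\mathrm{dist}(\omega_2,\L_2)^{-3/4}$, which lies in $L^{2,2}=L^2(\Omega\times I_L)$. The singular sets $\omega_1+\L_1$ and $\omega_2+\L_2$ of $f_\omega$ come arbitrarily close, and there $\int|u|^{-3/2}|u-\delta|^{-3/2}\,\rd u\sim|\delta|^{-1}$. So $f_\omega\notin L^2_{\rm unif}(S_L)$ for every $\omega$. The obstacle flagged in the paper is therefore genuine, and Steps~2--3 have no valid input. Your argument only covers $r=\infty$, or commensurate lattices.

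The missing idea is that no bound uniform on $S_L$ should be sought. Indeed, $\omega\mapsto\|V_{f_\omega}\|_{L^\infty(S_L)}$ is $\alpha$-invariant, hence a.e.\ constant, so it cannot carry the $L^r$-dependence on $\omega_2$ that the statement encodes. Since $(\omega_1,\omega_1+\omega_2)=\alpha_{-\omega_1}(0,\omega_2)$, the $L^{r,\infty}$ norm only involves, for each $\omega_2$, the supremum of $V_{f_{(0,\omega_2)}}$ over a \emph{single} $\L_1$-cell, integrated in $L^r(\Omega_2)$.

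The paper splits the charge into the part in one cell $\Gamma$ near the evaluation point and the remainder. The first part is handled by Lemma~\ref{lem:Stampacchia} through local $L^p$ averages of $\ff$. For the remainder, the exponential decay of $\cG$ bounds the contribution of the cell labelled by $\ell\in\L_1$ by $\re^{-\mu|\ell|}f_\ell(\omega_2)$. Here $f_\ell(\omega_2)$ is a cell average of $\ff$ at the \emph{shifted} parameter $\omega_2-\ell$.

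These shifted averages admit no pointwise bound in terms of the unshifted one. They do all have the same $L^r(\Omega_2)$ norm, by translation invariance of the Lebesgue measure on $\Omega_2$. Jensen's inequality with the summable weights $\re^{-\mu|\ell|}$ then yields $\|\cV^{(2)}_\ff\|_{L^{r,\infty}}\lesssim\|\ff\|_{L^{r,1}}$. This combination of exponential decay with translation invariance in $\omega_2$ is what replaces the uniform local bound your proof assumes.
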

    
    In particular, $V_{f_\omega}(\cdot)$ is a bounded function for almost all $\omega \in \Omega$.
    
    \begin{proof}
        By linearity and density, it is enough to prove the result for $\ff$ a nonnegative continuous function. In this case, $f_\omega$ and $V_{f_\omega}$ are well-defined and continuous for all $\omega \in \Omega$. 
        
        It is useful to work with a configuration of the form $(\omega_1, \omega_1 + \omega_2)$. We have, 
        \begin{align*}
            \cV_\ff(\omega_1, \omega_1 + \omega_2, z) & := V_{f_{\omega_1, \omega_1 + \omega_2}}(x = 0, z) 
            = \int_{S_L} \cG(- x', z, z') f_{\omega_1, \omega_1 + \omega_2}(x', z') \rd x' \rd z'.
        \end{align*}
        As in the proof of Lemma~\ref{lem:potential_Lp_unif}, we write
        \begin{equation*}
        f_{\omega_1, \omega_1 + \omega_2} = f_{\omega_1, \omega_1 + \omega_2}^{(1)} + f_{\omega_1, \omega_1 + \omega_2}^{(2)},
        \end{equation*}
        with 
        \begin{equation*}
        f_{\omega_1, \omega_1 + \omega_2}^{(1)}(x, z) := f_{\omega_1, \omega_1 + \omega_2}(x,z)  \1 (x \in \Gamma), \quad \text{and} \quad
        f_{\omega_1, \omega_1 + \omega_2}^{(2)}(x,z) := f_{\omega_1, \omega_1 + \omega_2}(x,z) \1 (x \notin \Gamma).
        \end{equation*}
        Here $\Gamma$ is a compact subset of  $\R^2$ containing $0$, that will be specified later.
        We denote by $V^{(1)}$ and $V^{(2)}$ the corresponding potentials. To control $V^{(1)}$, we use that $f_{\omega_1, \omega_1 + \omega_2}^{(1)}$ is a compactly-supported continuous function, hence belongs to $L^p(S_L)$. As in the proof of Lemma~\ref{lem:potential_Lp_unif}, we get
        \begin{equation*}
         \| V^{(1)} \|_{L^\infty(S_L)}^p \le C \| f^{(1)} \|_{L^p(S_L)}^p
        = \int_{S_L} \1 (x' \in \Gamma) \ff^p(\omega_1 - x', \omega_1 + \omega_2 - x', z') \rd x' \rd z'.
        \end{equation*}
        Taking $\Gamma$ a shifted version of $\Omega_1$ containing $0$, we can make the change of variable $\omega_1' := - x' + \omega_1$, and get
        \begin{equation*}
        \left|  \cV_\ff^{(1)}(\omega_1, \omega_1 + \omega_2, z)  \right|^p 
        = \left|  V^{(1)}(0, z)  \right|^p  \le C \int_{\Omega_1} \ff^p(\omega_1', \omega_1' + \omega_2, z') \rd \omega_1' \rd z'.
        \end{equation*}
        The right-hand side no longer depends on $\omega_1$, so we can take the supremum in $\omega_1$ on the left-hand side. Integrating first in $z$ (and using that $| I_L | = L$), raising to the power $r/p$, and integrating in $\omega_2$ gives
        \begin{equation*}
        \left|  \cV_\ff^{(1)}  \right|_{L^{r, \infty}} \le C L \| \ff \|_{L^{r, p}}.
        \end{equation*}
        
        We now focus on the term $V^{(2)}$. This time, we use the exponential decay of $\cG$ in Lemma~\ref{lem:CombesThomas}, and get that there is a constant $C_\Gamma > 0$ such that
        \begin{equation*}
        | V^{(2)}(0, z) | \le C \int_{S_L} \re^{- \mu | x' |} f_{\omega_1, \omega_1 + \omega_2}(x', z') \rd x' \rd z' 
        = C \int_{S_L} \re^{- \mu | x' |} \ff(\omega_1 - x', \omega_1 + \omega_2- x', z') \rd x' \rd z' .
        \end{equation*}
        We write $x' = \omega_1' + \ell$ with $\omega_1' \in \Omega_1$ and $\ell \in \L_1$ and get
        \begin{equation*}
        | V^{(2)}(0, z) | \le C \sum_{\ell \in \L_1} \int_{\Omega_1 \times I_L} \re^{- \mu |  \omega_1' + \ell |} \ff(\omega_1 - \omega_1', \omega_1 + \omega_2- \omega_1' - \ell, z') \rd \omega_1' \rd z'.
        \end{equation*}
        Using that $\re^{- \mu | \omega_1' + \ell |} \le \re^{- \mu | \ell |} \re^{ \mu | \omega_1' |} \le C \re^{- \mu | \ell |}$, and making the change of variable $\omega_1' = - \omega_1'' + \omega_1$ gives
        \begin{equation*}
        \left| \cV_\ff^{(2)} (\omega_1, \omega_1 + \omega_2, z) \right| =: \left|  V^{(2)}(0, z)  \right|  \le C' \sum_{\ell \in \L_1} \re^{- \mu | \ell |} \underbrace{\int_{\Omega_1 \times I_L}  \ff(\omega_1'', \omega_1'' + \omega_2 - \ell, z') \rd \omega_1'' \rd z'}_{f_\ell(\omega_2)}.
        \end{equation*}
        Again, the right-hand side no longer depends on $\omega_1$, and we can take the supremum in $\omega_1$ on the left-hand side. On the other hand, Jensen's inequality with the convex function $t \mapsto t^r$ shows that
        \begin{equation*}
        \left( \sum_{\ell \in \L_1} w_\ell f_\ell  \right)^r \le W^{r-1} \left(\sum_{\ell \in \L_1} w_\ell f_\ell^r \right), \qquad W := \sum_{\ell \in \L_1} w_\ell.
        \end{equation*}
        With $w_\ell := \re^{- \mu | \ell |}$ and $f_\ell = f_\ell(\omega_2)$, we deduce that
        \begin{equation*}
        \sup_{\omega_1 \in \Omega_1}\left| \cV_\ff^{(2)} (\omega_1, \omega_1 + \omega_2, z) \right|^r \le C' W^{r-1} \sum_{\ell \in \L_1} \re^{- \mu | \ell | } \left( \int_{\Omega_1 \times I_L}  \ff(\omega_1', \omega_1' + \omega_2 - \ell, z') \rd \omega_1' \rd z' \right)^r.
        \end{equation*}
        We integrate in $\omega_2 \in \Omega_2$ and $z \in I_L$ (the integral no longer depends on $\ell$), and get
        \begin{equation*}
        \left\| V_\ff^{(2)} \right\|_{L^{r, \infty}} \le C' L W^r \| \ff \|_{L^{r, 1}} \le C'' \| \ff \|_{L^{r, p}},
        \end{equation*}
        where we used that $L^{r, p} \subset L^{r, 1}$ with continuous embedding by Hölder's inequality. The proof follows.
    \end{proof}

    This previous result allows one to define the stationary Hartree energy, as follows.
    
    \begin{lemma}
        \label{lem:definition_Hartree_stationary}
        Let $f_\omega(x, z)$ be a stationary function with generating function $\ff \in L^{r,p}$ with $\tfrac32 < p \le \infty$ and $2 \le r \le \infty$, and let $V_{f_\omega}$ be its associated potential, with generating function $\cV_\ff \in L^{r, \infty}$. Then, for almost all $\omega \in \Omega$, the Hartree energy per unit area
        \begin{equation}\label{eq:Hartree_energy_stat}
        \lim_{R \to \infty} \int_{I_L} \fint_{B_R} V_{f_\omega}(x, z) f_\omega(x, z) \rd x \rd z
        \end{equation}
        is well defined. In addition,
        \begin{enumerate}
            \item If the two lattices $\L_1$ and $\L_2$ are {\bf incommensurate}, then this energy is almost surely independent of $\omega$, and we have for almost all $\omega \in \Omega$,
            \begin{equation*}
            \lim_{R \to \infty} \int_{I_L} \fint_{B_R} V_\omega(x, z) f_\omega(x, z) \rd x \rd z = \int_{I_L} \fint_{\Omega} \cV_\ff(\omega', z) \ff(\omega', z) \rd \omega' \rd z =: \cD_{\rm erg}(\ff, \ff).
            \end{equation*}
            \item If the two lattices $\L_1$ and $\L_2$ are {\bf commensurate}, then, the Hartree energy depends in general on the configuration $\omega$.
        \end{enumerate}
    \end{lemma}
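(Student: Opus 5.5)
The plan is to view the product $V_{f_\omega} f_\omega$ as itself a stationary function and apply the Birkhoff--von Neumann ergodic theorem~\eqref{eq:Birkhoff_thm}. By Theorem~\ref{th:potential_Lrp}, $V_f(\omega,x,z) = \cV_\ff(\alpha_x\omega,z)$ with $\cV_\ff \in L^{r,\infty}$. Hence the product $h_\omega(x,z) := V_{f_\omega}(x,z) f_\omega(x,z)$ is stationary, with generating function $\mathfrak h := \cV_\ff \, \ff$.

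The first step is to check that $\mathfrak h \in L^1(\Omega\times I_L)$. In the mixed variables $(\omega_1,\omega_2'=\omega_1+\omega_2)$, Hölder's inequality in $(\omega_1,z)$ gives
\[
\int_{\Omega_1\times I_L} |\cV_\ff \ff| \le \|\cV_\ff(\cdot,\omega_2)\|_{L^\infty(\Omega_1\times I_L)}\, \|\ff(\cdot,\omega_2)\|_{L^1(\Omega_1\times I_L)}
\le C\,\|\cV_\ff(\cdot,\omega_2)\|_{L^\infty}\, \|\ff(\cdot,\omega_2)\|_{L^p}.
\]
Integrating in $\omega_2$ and using Hölder with exponents $r,r'$ (here $r\ge 2$ ensures $r'\le r$, so $L^r(\Omega_2)\subset L^{r'}(\Omega_2)$ on the compact $\Omega_2$) yields $\|\mathfrak h\|_{L^1} \le C\|\cV_\ff\|_{L^{r,\infty}}\|\ff\|_{L^{r,p}} \le C\|\ff\|_{L^{r,p}}^2$. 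By Fubini, $z\mapsto \mathfrak h(\cdot,z)$ is integrable on $\Omega$ for a.e.\ $z$, and $\int_{I_L}\int_\Omega|\mathfrak h|<\infty$.

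Next comes existence of the limit for a.e.\ $\omega$. Rather than applying Birkhoff pointwise in $z$ and then exchanging limit and $z$-integral (which would need a domination argument), I would apply the ergodic theorem directly to the $z$-integrated generating function $\mathfrak H(\omega) := \int_{I_L}\mathfrak h(\omega,z)\,\rd z \in L^1(\Omega)$. By Fubini, $\int_{I_L}\fint_{B_R} h_\omega = \fint_{B_R}\mathfrak H(\alpha_x\omega)\,\rd x$. The Birkhoff theorem in its general (non-ergodic) form states that these averages converge for a.e.\ $\omega$ to the conditional expectation $\mathbb E[\mathfrak H\,|\,\mathcal I](\omega)$ onto the $\sigma$-algebra $\mathcal I$ of $\alpha$-invariant sets. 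This gives the well-definedness of~\eqref{eq:Hartree_energy_stat}. The main technical point is that the version of~\eqref{eq:Birkhoff_thm} quoted in the paper is phrased for the ergodic case, so I would invoke the standard non-ergodic statement for measure-preserving $\R^2$-actions with ball averages (e.g.\ Wiener's ergodic theorem), citing~\cite{Moo-15}.

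For item (1), in the incommensurate case the action is ergodic, so $\mathcal I$ is trivial and the limit equals $\fint_\Omega\mathfrak H = \int_{I_L}\fint_\Omega \cV_\ff\ff$. This is $\cD_{\rm erg}(\ff,\ff)$ and is independent of $\omega$. For item (2), in the commensurate case $h_\omega$ is $\L$-periodic in $x$, and the limit equals the cell average $|\R^2/\L|^{-1}\cD_{\rm per}(f_\omega,f_\omega)$. To show genuine $\omega$-dependence I would exhibit an example rather than prove a general statement, e.g.\ $\L_1=\L_2$ with $m$ giving different energies at AA and AB stackings. The dependence arises because $\omega\mapsto\omega_1-\omega_2$ is an invariant that is not constant.
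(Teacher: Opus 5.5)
Your proposal is correct and takes the same route the paper indicates. The product $V_{f_\omega}f_\omega$ is stationary with generating function $\cV_\ff\ff$. This is integrable by the mixed-norm Hölder bound, which the paper phrases as the duality $\cV_\ff\in L^{r,\infty}\subset L^{r',p'}=(L^{r,p})^*$. The Birkhoff--von Neumann theorem then gives existence of the limit and, in the ergodic case, identifies it with $\cD_{\rm erg}(\ff,\ff)$.

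The paper skips this proof, and your extra care is welcome: you integrate out $z$ first via $\mathfrak H$, and you invoke the non-ergodic (Wiener) form of the theorem to get existence of the limit for a.e.\ $\omega$ in the commensurate case as well.
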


    In the first case, we call this common energy the {\bf ergodic Hartree energy per unit area}. The proof is a direct consequence of the Birkhoff--Von Neumann ergodic theorem, and we skip it for brevity. To prove that the energy $\cD_{\rm erg}(\ff, \ff)$ is indeed well-defined and finite, we recall that the dual space of $L^{r,p}$ is $L^{r', p'}$ with $\frac{1}{r} + \frac{1}{r'} = 1$ and $\frac{1}{p} + \frac{1}{p'} = 1$. So, using that $p< \infty$ and $r' \le 2 \le r$, we get
    \begin{equation*}
    \cV_\ff \in L^{r, \infty} \subset L^{r, p'} \subset L^{r', p'} = (L^{r,p})^*. 
    \end{equation*}
    In particular, we have the inequality
    \begin{equation*}
    \cD_{\rm erg}(\ff, \ff) \le C \| \ff \|_{L^{r,p}} \| \cV_\ff \|_{L^{r, \infty}} \le C' \| \ff \|_{L^{r,p}}^2,
    \end{equation*}
    so the quadratic form $\ff \mapsto \cD_{\rm erg}(\ff, \ff)$ is continuous for the $L^{r,p}$ topology, for all $\tfrac32 < p \le \infty$ and $2 \le r \le \infty$. 
    
    \medskip
    
    The original equation $- {\rm div}(\epsilon \nabla V_{f_\omega}) = 4 \pi f_\omega$ can be recast in terms of the generating functions $\ff$ and $V_\ff$. Indeed, Since $f_\omega(x, z) = \ff(\omega_1 - x, \omega_2 - x, z)$, we have
    \begin{equation} \label{eq:def:Lambda}
        (\nabla f_\omega)(0, z) = (\Lambda \ff) (\omega_1, \omega_2, z), \qquad \text{with} \quad
        \Lambda = \begin{pmatrix} 
        \partial_{\omega_{1,1}} + \partial_{\omega_{2, 1}} \\ 
        \partial_{\omega_{1,2}} + \partial_{\omega_{2, 2}} \\ 
        \partial_z \end{pmatrix}.
    \end{equation}
    The usual gradient operator therefore becomes the $\Lambda$ operator for the corresponding generating functions.  The two first components of $\Lambda$ are also $\nabla_{\omega_1}+\nabla_{\omega_2}$, and describes a derivation operator in a specific direction, induced by the group action $\alpha$. In particular, the operator $\Lambda^* \Lambda$ is not coercive. Nevertheless, we deduce that 
    \begin{equation*}
        - \Lambda^* (\epsilon \Lambda \cV_\ff) = 4 \pi \ff \quad \text{on} \quad \Omega \times I_L.
    \end{equation*}
    Multiplying by $\ff$ and integrating shows that (compare with~\eqref{eq:def:Hartree_Periodic})
    \begin{equation*}
        \cD_{\rm erg}(\ff, \ff) = \int_{I_L} \fint_{\Omega} \cV_\ff(\omega', z) \ff(\omega', z) \rd \omega' \rd z = \frac{1}{4 \pi} \int_{I_L} \fint_{\Omega} \epsilon(z) | \Lambda \cV_\ff |^2(\omega, z) \rd \omega \, \rd z \qquad \ge 0.
    \end{equation*}
    In particular, $\cD_{\rm erg}$ defines a positive definite quadratic form on its domain, hence is convex. This is even clearer in the explicit formula of the Hartree energy given in Appendix~\ref{sec:app:explicit_hartree_energy}.
    
    \medskip
    
    In the second case, since the lattices $\L_1$ and $\L_2$ are commensurate, we have that $f_\omega$ is $\L $-periodic, with $\L=\L_1\cap \L_2$, for each fixed configuration $\omega \in \Omega$, and we are back to the case of periodic material. However, there is no relation between the different configurations (think of a shift between $\L_1$ and $\L_2$). So the Hartree energy depends in general on the configuration $\omega$.

    \section{Reduced Hartree-Fock models for encapsulated quasi-periodic 2D materials}
    \label{sec:quasi-periodic}
    
    In this section, we introduce a reduced Hartree--Fock (rHF) model for encapsulated quasi-periodic 2D materials, prove the existence of ground-state density matrix and the uniqueness of the ground-state density, and derive the corresponding rHF equations.
    
        \medskip
    
    The rHF model is convex and strictly convex in the density, which greatly simplifies its analysis. It has been extensively studied in the mathematical literature~\cite{LieSim-HF-77,Sol-91,CatLio-93,CatLe-Lio-98,CatLe-Lio-01,CanDelLew-08,CanLahLew-13,GonLah-16}. In the periodic setting (Section~\ref{sec:periodic}), convexity is not so essential, and existence results can be established for non-convex models actually used in practice in condensed-matter physics, such that Kohn--Sham LDA.
    
    \medskip
    
  In what follows, we consider two incommensurate lattices $\L_1$ and $\L_2$, the ergodic group action $\R^2 \ni x \mapsto \alpha : \Omega \to \Omega$ on $\Omega = (\R^2 / \L_1) \times (\R^2 / \L_2)$ defined by $\alpha_x(\omega_1, \omega_2) = (\omega_1 - x, \omega_2-x)$, and a nuclear charge distribution $m : \Omega \times S_L \to \R$, stationary in the sense that
    \begin{equation*}
        \forall \omega \in \Omega, \quad m_\omega(x, z) = \fm(\alpha_x \omega, z),
    \end{equation*}
    for some generating function $\fm : \Omega \times I_L \to \R$. Our goal is to define an rHF energy functional for a stationary one-body density operator $\gamma$ describing electrons interacting with the nuclear charge distribution $m$, in the encapsulated setting.

    \subsection{Stationary operators and trace per unit area}
    \label{sec:stationary_operators}
    
    We first need to extend the notion of stationary functions to operators. We introduce the von Neumann algebra of (bounded) {\bf  fibered stationary operators} 
        \begin{equation} \label{eq:def:cA}
        \cA := \left\{ A = \int_\Omega^\oplus A_\omega\rd\omega \in L^\infty(\Omega,\cB(L^2(S_L))) \text{ such that } \forall x\in\R^2, U_x A_{\omega} U_x^* = A_{\alpha_{-x}\omega} \right\},
    \end{equation}
    where $U_x$ denotes the usual translation operator on $L^2(S_L)$ defined by $U_{x}(f)(x', z) = f(x' - x, z)$. This algebra can in fact be identified with the crossed product
    \begin{equation*}
    \cA = L^\infty(\Omega)\rtimes_\alpha S_L.
    \end{equation*}
    Such operators act on the Hilbert space $L^2(\Omega, L^2( S_L))$ as follows: if $f(\omega, x, z) =: f_\omega(x, z)$ is in $L^2(\Omega, L^2( S_L))$, then 
    \begin{equation*}
    A f = g \quad \text{means} \quad A_\omega f_\omega = g_\omega \quad \text{in $L^2(S_L)$, for a.a. in $\omega \in \Omega$}.
    \end{equation*}
   A fibered self-adjoint operator $A=\int^\oplus_\Omega A_\omega \, d\omega$ is called affiliated to $\cA$ if its resolvent is in $\cA$. A typical example of self-adjoint operator affiliated to $\cA$ is the ergodic Schrödinger operator $H = \int_\Omega^\oplus H_\omega \, d\omega$ with $H_\omega := - \frac 12 \Delta_0 + V_\omega$, where $\Delta_0$ is the Dirichlet Laplacian on $S_L$ and $V_\omega(x) = V(\omega, x)$ a stationary potential in a suitable function space. 
    
    \medskip
    
    The von Neumann algebra $\cA$ is a type $\mathrm{II}_\infty$ factor and therefore has a unique semifinite faithful normal trace (up to a scaling factor), which we denote by $\tau$. In what follows, we denote, for $1 \le p < \infty$,
    \begin{equation} \label{eq:def:ufS1}
        L^p(\cA, \tau) := \{ A \in \widetilde{\cA} \ : \ \tau(|A|^p)<\infty \}, \quad \text{and} \quad 
        L^\infty(\cA, \tau) := \cA.
    \end{equation}
    The non-commutative $L^p(A, \tau)$ spaces share properties which are similar to the usual Lebesgue spaces, see~\cite{Seg-53, Nel-74}. The following result highlights the fact that one can fix the scaling factor in $\tau$ so that it corresponds to the notion of trace per unit area. For a function $\eta \in L^\infty(\R^2)$, we denote by $M_\eta$ the multiplication operator by $\eta(x)$ (independent of $z$).
    
    \begin{proposition} \label{prop:ergodic-trace}
        Let $A\in L^1(\cA, \tau)$. Then, for a.e. $\omega\in\Omega$, $A_\omega$ is locally trace-class on $L^2(S_L)$. If $\rho_{A_\omega}$ denotes the density of $A_\omega$, then $\rho(\omega, x, z) := \rho_{A_\omega}(x, z)$ is a stationary function. The corresponding generating function $\varrho_A$ belongs to $L^1(\Omega \times I_L)$, and is called the {\bf generalized density}  of $A$.
        
        \medskip
        
        For any $\eta\in L^\infty(\R^2) \cap L^2(\R^2)$, with $\| \eta \|_2 = 1$, we have $\int_\Omega \Tr_{L^2(\R^d)}(|M_\eta A_\omega M_\eta|)\rd\omega<\infty$, and the formulas
        \begin{equation}
            \label{eq:trace_per_unitV_formula}
            \tau(A) = \fint_\Omega \Tr_{L^2(S_L)}(M_\eta A_\omega M_\eta)\rd\omega, \quad
            \text{and} \quad
            \tau(A)  = \int_{I_L} \fint_\Omega \varrho_A(\omega, z) \rd \omega \rd z.
        \end{equation}
    \end{proposition}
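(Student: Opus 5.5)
The approach is the standard Bellissard-type argument: construct a positive normal trace on $\cA$, identify it with the right-hand sides of \eqref{eq:trace_per_unitV_formula}, and then use the uniqueness of the trace on the $\mathrm{II}_\infty$ factor to fix the normalization. Everything is first reduced to nonnegative operators. Any $A\in L^1(\cA,\tau)$ is a linear combination of four nonnegative elements of $L^1(\cA,\tau)$, obtained by splitting into real and imaginary parts and then into positive and negative parts. The density, the trace and the local trace-class property are all linear or positive-cone notions, so it suffices to treat $A\ge0$.

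For $A\ge 0$ in $\cA$, I define the candidate
\[
\tau_\eta(A):=\fint_\Omega \Tr_{L^2(S_L)}(M_\eta A_\omega M_\eta)\,\rd\omega\in[0,\infty],
\]
and the main task is to show that $\tau_\eta$ does not depend on $\eta$ (with $\|\eta\|_2=1$).

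1. Write $A=B^*B$ with $B=A^{1/2}\in\cA$, so that $\Tr(M_\eta A_\omega M_\eta)=\|B_\omega M_\eta\|_{\mathfrak S_2}^2$.
2. Use the Hilbert--Schmidt kernel $b_\omega(x,z;x',z')$ of $B_\omega M_{\1_Q}$ on unit cells $Q$. Stationarity $U_xB_\omega U_x^*=B_{\alpha_{-x}\omega}$ gives $b_\omega(x+y,z;x'+y,z')=b_{\alpha_{y}\omega}(x,z;x',z')$, up to the sign convention.
3. Expand
\[
\int_\Omega\|B_\omega M_\eta\|_{\mathfrak S_2}^2\,\rd\omega=\int_\Omega\iint |b_\omega(x,z;x',z')|^2|\eta(x')|^2\,\rd\omega.
\]
4. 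Translate in $x'$ and use the invariance of the measure $\rd\omega$ under $\alpha$. This shows that $\int_\Omega |b_\omega(\cdot;x',\cdot)|^2\rd\omega$ is independent of $x'$.
5. The integral therefore factorizes as $\|\eta\|_2^2\cdot c(A)$, so $\tau_\eta(A)=c(A)$ is independent of $\eta$.

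Next I check the trace axioms for $\tau_\eta$. Additivity and positive homogeneity are immediate. Normality follows from monotone convergence, since $\Tr(M_\eta\,\cdot\,M_\eta)$ is normal and integrals commute with increasing limits. Faithfulness holds because $\tau_\eta(A)=0$ forces $A_\omega^{1/2}M_\eta=0$ for every $\eta$, hence $A=0$. For the trace property $\tau(B^*B)=\tau(BB^*)$, I use the $\eta$-independence just proved. Taking a partition of unity $\sum_k|\eta_k|^2=1$ with $\eta_k=\1_{Q_k}$ and weighting by cells gives $\sum_k\|B_\omega\1_{Q_k}\|_2^2=\|B_\omega\|_2^2$ formally, and symmetrically for $B^*$. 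I expect this to be the main obstacle: the sums over cells are infinite, so the $x$-translation must be traded for the $\omega$-average cell by cell, and care is needed to keep the infinite sums and the $\omega$-integral controlled. Semifiniteness also needs to be checked, for instance with operators $M_\chi e^{-H}M_\chi$, or with the projections onto $\ran$ of stationary localized families. Once these properties hold, uniqueness of the trace on the factor gives $\tau=c\,\tau_\eta$, and fixing the normalization $c=1$ yields the first formula.

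For the density, take $A\ge0$ with $\tau(A)<\infty$. Then $\Tr(M_{\1_K}A_\omega M_{\1_K})<\infty$ for almost every $\omega$ and every compact $K$, by covering $K$ with finitely many translates of a unit cell $Q$ and applying stationarity together with Fubini. This is exactly the statement that $A_\omega$ is locally trace class. Its density $\rho_{A_\omega}$ is the diagonal of the kernel, and the covariance of the kernel from step 2 makes it stationary, $\rho(\omega,x,z)=\varrho_A(\alpha_x\omega,z)$.

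Finally, choose $\eta=|Q|^{-1/2}\1_Q$ with $Q$ a fundamental cell of $\L_1$. Then
\[
\Tr(M_\eta A_\omega M_\eta)=|Q|^{-1}\int_{Q\times I_L}\varrho_A(\alpha_x\omega,z)\,\rd x\,\rd z.
\]
Averaging over $\omega$ and using the invariance of $\rd\omega$ under $\alpha_x$ gives $\int_{I_L}\fint_\Omega\varrho_A$, which is the second formula. It also gives $\varrho_A\in L^1(\Omega\times I_L)$, since $\varrho_A\ge0$ has finite integral. Integrability of $\Tr|M_\eta A_\omega M_\eta|$ for general $A$ follows from the four-positive-parts decomposition.
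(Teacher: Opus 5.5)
Your proposal is correct and follows the same route as the paper. Both build a functional from $\omega$-averaged localized traces (the paper via the mean density, you via $\tau_\eta$), show it is independent of the localization using the covariance $U_xA_\omega U_x^*=A_{\alpha_{-x}\omega}$ and the $\alpha$-invariance of $\rd\omega$, and identify it with $\tau$ by uniqueness of the trace on the $\mathrm{II}_\infty$ factor. You are more careful than the paper in two respects: you verify normality, faithfulness, the trace property and semifiniteness, which the paper compresses into ``linear and positive''; and you deduce that $A_\omega$ is locally trace-class from $\tau(A)<\infty$ rather than presupposing it.

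Two small corrections are needed. First, for a cut-off $\chi$ in $x$, the operator $M_\chi e^{-H}M_\chi$ is not stationary, so it does not lie in $\cA$ and cannot witness semifiniteness; use instead the projections $\1(-\Delta_0\le n)\uparrow 1$, which lie in $\cA$ and have finite trace by Example~\ref{ex:FourierMultiplier}. Second, elements of $L^1(\cA,\tau)$ need not be bounded, whereas your construction assumes $A\in\cA_+$. Pass from $\cA_+$ to a general $\tau$-measurable $A\ge 0$ through the truncations $A\1(A\le n)$, using normality and monotone convergence on both sides of the identity.
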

    
    \begin{proof}
        Denote by $\rho_{A_\omega}$ the density of $A_\omega$, defined by duality as
        \begin{equation} \label{eq:def:rho}
            \forall \eta \in C^\infty_0(S_L), \quad \Tr_{L^2(S_L)}( M_\eta A_\omega M_\eta) =: \int_{S_L} \rho_\omega | \eta |^2.
        \end{equation}
        Since $U_x A_\omega U_x^* = A_{\alpha_{-x} \omega}$, we have $\rho_\omega(x' - x, z) = \rho_{\alpha_{-x}(\omega)}(x', z)$, so $\rho(\omega, x, z) := \rho_\omega(x,z)$ is stationary. In particular, since $\alpha_x$ is measure preserving, the quantity
        \begin{equation*}
            \int_{I_L} \fint_\Omega \rho(\omega, x, z) \rd \omega \rd z 
            = \int_{I_L} \fint_\Omega \rho(\alpha_x \omega, 0, z) \rd \omega \rd z 
            = \int_{I_L} \fint_\Omega \rho(\omega, 0, z) \rd \omega \rd z 
            = \int_{I_L} \fint_\Omega \varrho(\omega, z) \rd \omega \rd z
        \end{equation*}
        is independent of $x \in \R^2$. We define $\tau(A)$ the trace per unit surface of $A$ as this common quantity. We can check that $\tau(\cdot)$ is linear and that $\tau(A) \ge 0$ if $A \ge 0$, so this is the von Neumann trace (up to a scaling factor). We now integrate~\eqref{eq:def:rho} in $\omega \in \Omega$, and get that
        \begin{equation*}
            \fint_{\Omega} \Tr_{L^2(S_L)}( M_\eta A_\omega M_\eta)\rd\omega = \int_{S_L} \left( \fint_{\Omega}  \rho(\omega, x, z) \rd \omega \rd z \right) | \eta | (x) \rd x = \tau(A) \int_{\R^2} | \eta |^2(x) \rd x = \tau(A).
        \end{equation*}
    \end{proof}
    
    Taking $\eta = |B_R|^{-1/2}\mathds{1}_{B_R}$ and the limit $R\to +\infty$, we get
    \begin{equation*}
        \tau(A) = \lim_{R\to +\infty} \fint_{\Omega} \frac{1}{|B_R|}\Tr_{L^2(S_L)}(M_{\mathds{1}_{B_R}}A_\omega M_{\mathds{1}_{B_R}}) \rd \omega
        = \lim_{R \to \infty} \fint_\Omega \int_{I_L} \fint_{B_R} \rho_{A_\omega}(x, z) \rd x \rd z \rd \omega.
    \end{equation*}
    In the ergodic case, Birkhoff theorem implies that we also have
    \begin{equation*}
        \tau(A) = \lim_{R\to +\infty} \frac{1}{|B_R|}\Tr (M_{\mathds{1}_{B_R}}A_\omega M_{\mathds{1}_{B_R}})
        = \lim_{R \to \infty} \int_{I_L} \fint_{B_R} \rho_{A_\omega}(x, z) \rd x \rd z \quad \mbox{for a.a. } \omega \in \Omega.
    \end{equation*}
    
    \begin{example}[Fourier multipliers]
        \label{ex:FourierMultiplier}
         As an example which we will use several times in the sequel, consider an operator of the following form. We denote by $J$ the convolution operator acting on $L^2(\R^2)$ with kernel $J(x, x') = j(x - x')$. Then $J$ is a Fourier multiplier by $\widehat{j}(k)$, and we can write $J = \widehat{j}( - \ri \nabla_x)$. We set
        \begin{equation*}
            A := \int_\Omega^\oplus A_\omega \rd \omega, \quad \text{with} \quad A_\omega := J \otimes  | h \rangle \langle h | ,
        \end{equation*}
        where $h \in L^2(I_L)$ is normalized $\| h \|_{L^2(I_L)} = 1$. The function $h$ in the $z$-variable is only here to handle the $z$-direction. Since $A_\omega$ is independent of $\omega$ and commutes with all translations, $A$ is a fibered stationary operator affiliated to $\cA$. A simple computation shows that $\varrho_A(x, z) = j(0) | h(z) |^2$. In particular, $\varrho_A$ is independent of $x \in \R^2$. Similarly, we get
        \begin{equation*}
        \tau(| A |^p) = \frac{1}{(2 \pi)^2} \int_{\R^2} | \widehat{j}(k) |^p \rd k.
        \end{equation*}
        We deduce that $A \in L^p(\cA, \tau)$ iff $\widehat{j} \in L^p(\R^2)$. In particular, $A \in L^p(\cA, \tau)$ is not necessarily a bounded operator in $\cA = L^\infty(\cA, \tau)$. Note that in the usual Schatten classes $\fS_p := L^p(\cB(L^2), \Tr )$ consist of compact -hence bounded- operators.
    \end{example}

    \subsection{The encapsulated kinetic energy}
    \label{ssec:kinetic_energy}
     We also define the subspace $W_0^{1,1}(\cA, \tau)$ of stationary operators of finite kinetic energy (per unit area) by
    \begin{equation} \label{eq:def:W11_0}
        W^{1,1}_0(\cA, \tau) := \left\{ A \in L^1(\cA, \tau) \ | \ \sqrt{- \Delta_0} | A |\sqrt{- \Delta_0} \in L^1(\cA, \tau) \right\},
    \end{equation}
    where $-\Delta_0$ is the Dirichlet Laplacian on $S_L$, that is with domain $H^2(S_L) \cap H^1_0(S_L)$ (we put a $0$ subscript to emphasize Dirichlet boundary conditions). With a slight abuse of notation, we extended the operator $-\Delta_0$ as an affiliated fibered stationary operator by setting $\Delta _0 =\int_\Omega^\oplus \Delta_0 \rd\omega$ on $L^2(\Omega \times S_L)$, that is with an action independent of the configuration $\omega \in \Omega$. 
    
    \medskip
    
    The condition in~\eqref{eq:def:W11_0} is equivalent to $\sqrt{- \Delta_0} | A |^{1/2} \in L^2(\cA, \tau)$. It implies that all eigenspaces of $A$ are in the form domain of $- \Delta_0$, hence satisfy Dirichlet boundary conditions. It implies for instance that its generalized density $\varrho_A$ satisfies the boundary conditions
    \begin{equation*}
           \varrho_A\left( \cdot, \pm \tfrac{L}{2} \right) = 0 \quad \mbox{in the sense of the trace}.
    \end{equation*}
    For $A \in W^{1,1}_0(\cA, \tau)$, we define the \textbf{encapsulated kinetic energy per unit area} of $A$ by 
    \begin{equation}
        \label{eq:def:kin_energy_bulk}
        \frac12\tau(-\Delta_0 A) := \frac12\tau \left( \sqrt{- \Delta_0} A \sqrt{- \Delta_0} \right)
    \end{equation}

    Using the spectral decomposition of the 1D Dirichlet Laplacian given by 
    \begin{equation*}
        -\partial_{zz,0}^2 = \sum_{k=1}^\infty \mu_k | h_k \rangle \langle h_k | , \quad \mu_k := \frac{\pi^2 k^2}{L^2}, \quad h_k(z) := \sqrt{\frac{2}{L}} \sin \left(\pi k \left[\frac{z}{L} + \frac12 \right] \right),
    \end{equation*}
    we have
    \begin{equation} \label{eq:decomposition_kinetic}
        - \Delta_0 = ( - \Delta_x) \otimes \bbI_{L^2(I_L)} + \bbI_{L^2(\R^2)} \otimes (- \partial_{zz,0}^2) 
        = \sum_{k=1}^\infty ( - \Delta_x + \mu_k) \otimes | h_k \rangle \langle h_k |.
    \end{equation}
    Again, we see the effect of the boundary conditions, which implies the spectral gap $(- \Delta_0) \ge \mu_1 > 0$, as in Lemma~\ref{lem:spectral_gap}.

    \subsection{Stationary one-body density matrices, and kinetic inequalities}
    In what follows, we are interested in \textbf{stationary one-body density matrices}. We therefore set
    \begin{equation*}
        \cK := \left\{ \gamma \in W^{1,1}_0(\cA, \tau), \quad  \ 0\leq\gamma = \gamma^*\leq 1\right\}.
    \end{equation*}
    If we write $\gamma = \int_\Omega^\oplus \gamma_\omega \rd \omega$ with $\gamma_\omega$ acting on $L^2(S_L)$, then we have
    \begin{equation*}
    0 \le \gamma_\omega = \gamma_\omega^* \le 1, \quad \text{and} \quad
    \forall x \in \R^2, \quad U_x \gamma_\omega U_x^* = \gamma_{\alpha_{x} \omega},
    \end{equation*}
    where $U_x$ is the longitudinal-translation operator introduced in Section~\ref{sec:stationary_operators}. Proposition~\ref{prop:ergodic-trace} above shows that any such one-body density matrix admits a generalized density $\varrho_\gamma \in L^1(\Omega \times I_L)$. We record several inequalities for stationary operators, including Lieb--Thirring inequality~\cite{LieThi-75, LieThi-76}, and Hoffmann-Ostenhof inequality~\cite{HofHof-77}. Recall that the operator $\Lambda$ has been defined in~\eqref{eq:def:Lambda}.
    \begin{lemma} \label{lem:inequalities}
        For all $\gamma \in \cK$, we have
        \begin{align}
            & \tau( - \Delta_0 \gamma) \ge \frac{\pi^2}{L^2} \tau(\gamma) =  \int_{I_L} \fint_{\Omega} \varrho_\gamma & \text{(Poincaré's inequality)},  \label{eq:Poincare_op} \\
           & \tau( - \Delta_0 \gamma) \ge C_{\rm LT} \int_{I_L} \fint_{\Omega} \varrho_\gamma^{5/3} & \text{(Lieb--Thirring inequality)}, \label{eq:Lieb-Thirring}  \\
           & \tau( - \Delta_0 \gamma) \ge 
        \int_{I_L} \fint_{\Omega} |\Lambda \sqrt{\varrho} |^2 \ge
           C_S \left( \int_{I_L} \fint_{\Omega} \varrho_\gamma^3\right)^{1/3} & \text{(Hoffmann--Ostenhof inequality)}.\label{eq:Hoffmann-Ostenhof}
        \end{align}
    \end{lemma}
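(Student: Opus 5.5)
The plan is to reduce all three inequalities to a statement about a single fiber $\gamma_\omega$, localized to a unit cell, and then average over $\Omega$ using Proposition~\ref{prop:ergodic-trace}. Fix $\eta\in C^\infty_0(\R^2)$ with $\|\eta\|_2=1$. By the trace formula \eqref{eq:trace_per_unitV_formula} applied to $\sqrt{-\Delta_0}\gamma\sqrt{-\Delta_0}$, we can write $\tau(-\Delta_0\gamma)=\fint_\Omega \Tr(M_\eta\sqrt{-\Delta_0}\gamma_\omega\sqrt{-\Delta_0}M_\eta)\rd\omega$. It is more convenient to use the spectral decomposition $\gamma_\omega=\sum_j n_j|\phi_j\rangle\langle\phi_j|$ locally, which is justified by writing things in terms of $\gamma^{1/2}$. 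Formally this gives the kinetic energy density $t_\omega(x,z)=\sum_j n_j|\nabla\phi_j|^2$. This density is stationary, and $\tau(-\Delta_0\gamma)=\int_{I_L}\fint_\Omega \mathfrak t$, where $\mathfrak t$ is its generating function. To make this rigorous, I will show that $\sqrt{-\Delta_0}\gamma^{1/2}\in L^2(\cA,\tau)$ and that the density of $(\nabla\gamma^{1/2})^*(\nabla\gamma^{1/2})$ is well defined. All three inequalities then become pointwise or per-cell inequalities between $t_\omega$ and $\rho_\omega$, averaged in $\omega$.

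\textbf{Poincaré.} I will use the operator inequality $-\Delta_0\ge \frac{\pi^2}{L^2}$ from \eqref{eq:decomposition_kinetic}. Since $\gamma\ge0$, this gives $\gamma^{1/2}(-\Delta_0)\gamma^{1/2}\ge\frac{\pi^2}{L^2}\gamma$ in $\cA$. Positivity and cyclicity of $\tau$ then yield \eqref{eq:Poincare_op}, with $\tau(\gamma)=\int_{I_L}\fint_\Omega\varrho_\gamma$. The stated equality should carry the factor $\frac{\pi^2}{L^2}$ on the right.

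\textbf{Hoffmann-Ostenhof.} Pointwise, Cauchy--Schwarz gives $|\nabla\sqrt{\rho_\omega}|^2\le\sum_j n_j|\nabla\phi_j|^2$. After integrating over $\omega$ and passing to generating functions, $\nabla\sqrt{\rho_\omega}(0,z)=\Lambda\sqrt{\varrho}$, and this gives the first inequality. For the $L^3$ bound, I apply the 3D Sobolev inequality to $\chi\sqrt{\rho_\omega}$, where $\chi$ is a cutoff in $x$ on a cell of the lattice $\L_1$, extended by zero across $z=\pm L/2$ using the Dirichlet condition. This gives
\[
\Big(\int_{Q\times I_L}\rho_\omega^3\Big)^{1/3}\le C\int_{Q'\times I_L}\big(|\nabla\sqrt{\rho_\omega}|^2+\rho_\omega\big).
\]
The term $\rho_\omega$ is controlled by Poincaré. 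Averaging over $\omega$, with stationarity turning cell integrals into $\int_{I_L}\fint_\Omega$, and applying Jensen to the concave map $t\mapsto t^{1/3}$ gives \eqref{eq:Hoffmann-Ostenhof}.

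\textbf{Lieb--Thirring.} I plan to use a localized Lieb--Thirring inequality on bounded domains with Dirichlet condition in $z$ and an IMS-type localization in $x$. Take a partition of unity $\sum_k\chi_k^2=1$ with $\chi_k(x)=\chi(x-k)$, so that $-\Delta_0=\sum_k\chi_k(-\Delta_0)\chi_k-\sum_k|\nabla\chi_k|^2$. The operator $\chi_k\gamma_\omega\chi_k$ is trace class on a bounded box, so the standard Lieb--Thirring inequality (valid for $0\le\gamma\le1$ with Dirichlet condition, after extension by zero) gives $\Tr(-\Delta\,\chi_k\gamma_\omega\chi_k)\ge C\int\chi_k^{10/3}\rho_\omega^{5/3}$ up to the bounded IMS error $C\int\chi_k^2\rho_\omega$. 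That error is absorbed by Poincaré. Averaging over $\omega$ via stationarity gives \eqref{eq:Lieb-Thirring} with some constant.

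\textbf{Main obstacle.} I expect the main difficulty to be technical: justifying the identification of $\tau(-\Delta_0\gamma)$ with the averaged local kinetic density and the localization formulas for unbounded affiliated operators. I would handle this by approximating $\gamma$ by $\gamma_\epsilon=(1-\epsilon\Delta_0)^{-1}\gamma(1-\epsilon\Delta_0)^{-1}$ and passing to the limit using monotone convergence in the normal trace $\tau$.
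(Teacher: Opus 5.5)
Your Poincaré step, your proof of the first inequality in \eqref{eq:Hoffmann-Ostenhof} (pointwise Cauchy--Schwarz on the kinetic energy density, then averaging), and your IMS proof of \eqref{eq:Lieb-Thirring} are fine. For Lieb--Thirring, the paper uses a single cutoff $\chi_R(x)=\chi(x/R)$ and lets $R\to\infty$, which gives the sharp 3D constant. Your unit cells give an $L$-dependent constant; cells of size $R\to\infty$ would make the IMS error $O(R^{-2})\tau(\gamma)$ and remove the need for Poincaré.

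The gap is the last step of the $L^3$ bound. Set $a_\omega:=\int_{Q\times I_L}\rho_\omega^3$ and let $b_\omega$ be your localized kinetic term. You have $a_\omega^{1/3}\le C b_\omega$, so averaging gives only $\fint_\Omega a_\omega^{1/3}\,\rd\omega\le C\fint_\Omega b_\omega\,\rd\omega$. To bound $(\int_{I_L}\fint_\Omega\varrho_\gamma^3)^{1/3}=(|Q|^{-1}\fint_\Omega a_\omega\,\rd\omega)^{1/3}$ you would need $(\fint_\Omega a_\omega)^{1/3}\lesssim\fint_\Omega a_\omega^{1/3}$. Jensen for the concave map $t\mapsto t^{1/3}$ gives exactly the reverse inequality.

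This cannot be repaired: for incommensurate lattices the second inequality in \eqref{eq:Hoffmann-Ostenhof} is false. Under dilution its left side is homogeneous of degree $1$ and its right side of degree $1/3$. Concretely, take $\phi=\psi(x)h(z)$ with $\|\phi\|_2=1$, where $\psi\in C^\infty_0(\R^2)$ is supported in a ball of radius less than half the shortest nonzero vector of $\L_1$ and $h\in H^1_0(I_L)$. Let $X_\omega:=\{y\in\omega_1+\L_1:\ \omega_2-y\in B_\delta+\L_2\}$. Then $X_{\alpha_{-a}\omega}=X_\omega+a$, so $\gamma_\omega:=\sum_{y\in X_\omega}|\phi(\cdot-y)\rangle\langle\phi(\cdot-y)|$ is a stationary orthogonal projection in $\cK$. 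Since $\L_1$ is equidistributed modulo $\L_2$, the set $X_\omega$ has density $n_\delta=\pi\delta^2/(|\Omega_1||\Omega_2|)$. Hence $\tau(-\Delta_0\gamma)=n_\delta\|\nabla\phi\|_2^2$ while $(\int_{I_L}\fint_\Omega\varrho_\gamma^3)^{1/3}=n_\delta^{1/3}\|\phi\|_6^2$, and the inequality fails as $\delta\to0$ for every constant.

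One can go further. Place rescaled bumps $2^{3k/2}\varphi(2^k\,\cdot)$, with $\varphi\in C^\infty_0(\R^3)$ and $k\ge k_0$, on such sets built from disjoint balls $B(c_k,\delta_k)$ with $\delta_k^2\propto 2^{-3k}$. This gives $\gamma\in\cK$ with $\varrho_\gamma\notin L^3(\Omega\times I_L)$. So in the quasi-periodic setting the kinetic energy per unit area only controls $\varrho_\gamma$ in $L^1\cap L^{5/3}$.

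The paper's own argument for this part has the same defect. After averaging in $\omega$ and dividing by $\|\chi_R\|_2^2\sim R^2$, the term $C_S(\int_{S_L}\rho_\omega^3\chi_R^6)^{1/3}$ is of order $R^{2/3}$ and vanishes as $R\to\infty$. The $\omega$-average also runs into the same Jensen obstruction. What survives is the first inequality in \eqref{eq:Hoffmann-Ostenhof}. The $L^3$ bound does hold in the periodic setting of Section~\ref{sec:periodic}, by Sobolev on the compact cell $(\R^2/\L)\times I_L$, with constants depending on $\L$ and $L$.
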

    The kinetic energy of $\gamma$ therefore controls the $L^p$-norm of $\varrho_\gamma$ for all $1 \le p \le 3$, by interpolation. Note that the Poincaré inequality is specific to our encapsulated setting, while the two other inequalities also hold for standalone 2D materials.
        
    \begin{proof}
        Poincaré's inequality comes from $\gamma \ge 0$ together with the operator inequality $-\Delta_0 \ge \mu_1$.

    In order to prove the Lieb--Thirring inequality in the encapsulated stationary case, we use two tools. To handle the boundary conditions in the $z$-direction, we introduce the extension-by-$0$ operator $i : L^2(S_L) \to L^2(\R^3)$, and set
    \begin{equation*}
        \widetilde{\gamma}_\omega := i  \gamma_\omega  i^*, \qquad \text{acting on $L^2(\R^3)$.}
    \end{equation*} 
    Since $\gamma_\omega$ satisfies Dirichlet boundary conditions, we have 
    \begin{equation*}  
        \sqrt{ - \Delta_0 } \gamma_\omega \sqrt{ - \Delta_0 } = \sqrt{ - \Delta }  \widetilde{\gamma}_\omega \sqrt{ - \Delta},
    \end{equation*}
    where $- \Delta$ denotes the usual three--dimensional Laplacian.
    
    \medskip
    
    The second tools we use is  a thermodynamic procedure to control the fact that the operators $\gamma_\omega$ and $\widetilde{\gamma}_\omega$ do not decay at infinity in the longitudinal directions. For $R > 0$, we set $\chi_R(x, z) := \chi \left( \frac{x}{R} \right)$ (independent of $z$), where $\chi : \R^2 \to [0, 1]$ is a smooth cut-off radial nonincreasing function, with $\chi(x) = 1$ for $| x | < 1$ and $\chi(x) = 0$ for $| x | > 2$. We set
    \begin{equation*}
        \gamma_{\omega, R} := \chi_R \widetilde{\gamma}_\omega \chi_R = \chi_R  i \, \gamma_\omega   \, i^* \chi_R.
    \end{equation*} 
    The operator $\gamma_{\omega, R}$ satisfies $0 \le\gamma_{\omega, R} \le 1$ and is trace class. We denote by $\rho_{\omega, R}$ its density. The usual Lieb-Thirring inequality~\cite{LieThi-75, LieThi-76} shows that $\rho_{\omega, R} \in L^{5/3}(\R^3)$ with compact support, and that
    \begin{equation*}
    C_{\rm LT} \int_{S_L} \rho_\omega^{5/3} \chi_R^{10/3} = C_{\rm LT} \int_{\R^3} \rho_{\omega, R}^{5/3} \le   \Tr_{L^2(\R^3)} \left(  - \Delta \gamma_{\omega, R} \right) := \sum_{j=1}^3 \Tr \left( P_j \gamma_{\omega, R} P_j \right),
    \end{equation*}
    where $P_j := - \ri \partial_{x_j}$ the momentum operator along the $x_j$-direction. We have, for $j \in \{ 1, 2 \}$, 
    \begin{equation*}
    P_j \gamma_{\omega, R} P_j = \chi_R (P_j \widetilde{\gamma}_\omega P_j) \chi_R + [P_j, \chi_R] \widetilde{\gamma}_\omega \chi_R + \chi_R \widetilde{\gamma}_\omega [P_j, \chi_R],
    \end{equation*}
    with $[P_j, \chi_R] = (- \ri \partial_j \chi_R)(x) = \frac{1}{R}  (- \ri \partial_j \chi)(x/R)$. Taking the trace shows that, uniformly in $\omega \in \Omega$, we have
    \begin{equation*}
    \Tr \left( P_j \gamma_{\omega, R} P_j \right) = \Tr \left(  \chi_R (P_j \widetilde{\gamma_\omega} P_j) \chi_R \right)  + O \left( \frac{1}{R} \right).
    \end{equation*}
    Integrating in $\omega$ and using Proposition~\ref{prop:ergodic-trace} with $\eta := \frac{\chi_R}{\| \chi_R \|_2}$ gives
    \begin{equation*}
    C_{\rm LT}  \left( \int_{I_L} \fint_\Omega  \varrho^{\frac{5}{3}} \right) \frac{ \| \chi_R \|_{\frac{10}{3}}^{\frac{10}{3}}}{\| \chi_R \|_2^2} \le \tau( - \Delta_0 \gamma ) + O (R^{-1}).
    \end{equation*}
    Note that the ratio in the left-hand side is independent of $R$. Taking $R \to \infty$ shows that the reminder is actually null. For the previous computation to be valid, we need $\chi$ to be differentiable. However, since the result is valid for all $\chi$, one can consider a sequence $\chi_n$ converging to $\1(| x | < 1)$, and we get, as wanted,
    \begin{equation*}
        C_{\rm LT}  \left( \int_{I_L} \fint_\Omega   \varrho^{\frac{5}{3}} \right) \le \tau( - \Delta_0 \gamma ),
    \end{equation*}
    with the same 3D Lieb-Thirring constant.
     
     \medskip

     We proceed similarly for the proof of Hoffmann--Ostenhof inequality. We get
     \begin{equation} \label{eq:HO}
         \Tr (- \Delta \gamma_{\omega, R}) \ge \int_{\R^3}  | \nabla \sqrt{\rho_{\omega, R}} |^2 \ge C_S \left( \int_{S_L} \rho_{ \omega}^3 \chi_R^6 \right)^{1/3},
     \end{equation}
     where we used the 3D Sobolev embedding $H^1(\R^3) \hookrightarrow L^6(\R^3)$ in the last equality. We conclude as for the proof of the Lieb--Thirring inequality. The passage from $| \nabla \sqrt{\rho} |$ to $| \Lambda \sqrt{\varrho} |$ comes from~\eqref{eq:def:Lambda}.
    \end{proof}
    
      \begin{remark}[Lieb--Thirring inequality for Neumann boundary conditions]
            The bound \eqref{eq:Lieb-Thirring} does not hold if we replace the Dirichlet boundary conditions by the Neumann ones. Indeed, consider a smooth compactly supported 2D function $j \in C^\infty_0(\R^2)$ with $0 \le \widehat{j}(k) \le 1$ pointwise. For $\lambda > 0$ a scaling factor, we denote by $J_\lambda$ the convolution operator acting on $L^2(\R^2)$, and with kernel $J_\lambda(x, x') = \lambda^2 j(\lambda (x - x'))$. The scaling is chosen so that $J_\lambda$ is the Fourier multiplier by $\widehat{j}(k/\lambda)$. We finally consider the operator (see Example~\ref{ex:FourierMultiplier})
            \begin{equation*}
            \gamma_\lambda := J_\lambda \otimes \left( L^{-1} | \1 \rangle \langle \1 | \right).
            \end{equation*}
            Since $0 \le \widehat{j}(k/ \lambda) \le 1$, we have $0 \le \gamma_\lambda = \gamma_\lambda^* \le 1$. A computation shows that the density of $\gamma_\lambda$ is constant, independent of $(x, z) \in S_L$, given by
            \begin{equation*}
            \rho_\lambda(x) = \gamma_\lambda(x, x) = \lambda^2 \rho_{\lambda = 1}(x) = \frac{1}{L}\lambda^2 j(0).
            \end{equation*}
            On the other hand, the kinetic energy per unit area scales as the two-dimensional one, namely
            \begin{equation*}
            \tau ( - \Delta_{\rm N} \gamma_\lambda) = \lambda^4 \VTr( - \Delta_{\rm N} \gamma_{\lambda = 1}) = \lambda^4 \frac{1}{(2 \pi)^2} \int_{\R^2} | k |^2 \widehat{j}(k) \rd k.
            \end{equation*}
            The scaling in $\lambda$ of $\VTr ( - \Delta_{\rm N} \gamma_\lambda) $ therefore matches the one of $\rho_\lambda^2$ rather than $\rho_\lambda^{5/3}$. We recover the 2D Lieb--Thirring inequality in this example.
        \end{remark}

    \subsection{Reduced Hartree--Fock model}
    
    We define the rHF energy (per unit area) functional of a trial stationary density-matrix $\gamma \in \cK$ as
    \begin{equation}\label{eq:rhf_energy_functional}
    \boxed{ \cE^{\rm rHF}(\gamma) := \frac12 \tau(- \Delta_0 \gamma) + \frac 12 D_{\rm erg}(\varrho_\gamma - \fm, \varrho_\gamma - \fm) + v_0 \int_{I_L} \fint_{\Omega} g_0(z) \rho_\gamma(\omega, z) \rd \omega \rd z .}
    \end{equation}
    Here $\fm$ is the generating function of the charge density $m$, that is $\fm(\alpha_x\omega,z)=m(\omega,x,z)$. The first term is the kinetic energy introduced in Section~\ref{ssec:kinetic_energy}, and the second term is the Hartree energy, introduced in Section~\ref{ssec:Coulomb_stationary}. The last term is the potential energy induced by the gating potential $v_0$ between the two electrodes, see~\eqref{eq:Vv0-V0}.
    
        \medskip
    
    Our main result is contained in the following theorem. For brevity, we focus on the microcanonical formulation of the rHF ground-state problem, but similar results can be obtained for the grand-canonical formulation
    $$
     \min \left\{ \cE^{\rm rHF}(\gamma)-\mu \tau(\gamma), \; \gamma \in \cK \right\},
     $$
     where $\mu \in \R$ is a given arbitrary chemical potential.
    
    \begin{theorem} \label{th:main_KS_moire}
        Let $\fm \in L^{\infty,p}$, with $\frac{3}{2}<p\le \infty$ and $\lambda > 0$. The minimization problem
        \begin{equation}\label{eq:min_rHF_quasi-periodic}
        \min \left\{ \cE^{\rm rHF}(\gamma), \; \gamma \in \cK, \ \tau(\gamma) = \lambda \right\}
       \end{equation}
        admits a minimizer and all the minimizers share the same density $\rho_\star$. Let $\gamma_*$ be a minimizer of \eqref{eq:min_rHF_quasi-periodic}, and $\varrho_*$ the generating function of $\rho_\star$. Then $\varrho_* \in L^\infty(\Omega \times I_L)$. In addition, for almost all $\omega \in \Omega$, we have the Euler--Lagrange equations
        \begin{equation}\label{eq:characterization_minimizers}
        \gamma_{*,\omega} = \1 (H_{\omega} < \eps_F) + \delta_\omega,
        \end{equation}
        where $\varepsilon_F\in\R$ is some Fermi level, i.e. the Lagrange multiplier of the constraint $\tau(\gamma) = \lambda$, and $0 \le \delta_\omega \le \1(H_{\omega} = \eps_F)$. Here, $H_\omega$ is the mean-field operator, defined by
        \begin{equation*}
        H_{\omega} := -  \frac 12 \Delta_0 + V_{\rho_{*,\omega} - m_\omega} + v_0 g_0(z).
        \end{equation*}
         where $V_{\rho_{*,\omega} - m_\omega} \in L^\infty(S_L)$ is the encapsulated mean-field potential generated by the total charge density $\rho_\omega - m$, as defined in Section~\ref{ssec:Coulomb_stationary}. Finally, it holds that the generating function $\cV_{\varrho_*-\fm}$ is in $L^\infty(\Omega \times I_L)$ as well.
    \end{theorem}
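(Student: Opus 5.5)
We follow the direct method of the calculus of variations in the non-commutative $L^p(\cA,\tau)$ framework, as in~\cite{CanLahLew-13}. We first show that $\cE^{\rm rHF}$ is bounded below and coercive on $\{\gamma\in\cK,\ \tau(\gamma)=\lambda\}$. The Hartree term is nonnegative. The gating term is bounded by $|v_0|\,\|g_0\|_{L^\infty}\lambda$, since $|g_0|\le 1/2$. Hence along a minimizing sequence $(\gamma_n)$ the kinetic energy $\tau(-\Delta_0\gamma_n)$ is bounded. By Lemma~\ref{lem:inequalities} (Lieb--Thirring and Hoffmann--Ostenhof), $\varrho_{\gamma_n}$ is then bounded in $L^q(\Omega\times I_L)$ for all $1\le q\le 3$. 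Moreover $\sqrt{-\Delta_0}\gamma_n^{1/2}$ is bounded in $L^2(\cA,\tau)$, and $0\le\gamma_n\le1$ is bounded in $\cA=L^\infty(\cA,\tau)$.

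Next we pass to the limit. We extract a subsequence with $\gamma_n\to\gamma_*$ weakly-$*$ in $\cA$ and $\sqrt{-\Delta_0}\gamma_n\sqrt{-\Delta_0}$ weakly in the appropriate sense, and with $\varrho_{\gamma_n}\rightharpoonup\varrho_*$ weakly in $L^2(\Omega\times I_L)$. We then check that $\varrho_*=\varrho_{\gamma_*}$ by testing against $\tau(\gamma_n M_\eta)$ for bounded stationary multiplication operators. The key point is the constraint $\tau(\gamma_*)=\lambda$. Unlike the whole-space case there is no loss of mass at infinity: $\tau(\gamma)=\int_{I_L}\fint_\Omega\varrho_\gamma$ is a linear functional that is continuous for weak convergence in $L^1$ on the compact space $\Omega\times I_L$. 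The remaining concern is mass escaping to high $z$-frequencies or to the boundary, which uniform integrability in $L^{5/3}$ excludes. Lower semicontinuity follows from three facts. The kinetic term is weakly-$*$ l.s.c., being a supremum of linear functionals $\tau(\gamma\, P_K(-\Delta_0)P_K)$ over spectral truncations. The quadratic form $\cD_{\rm erg}$ is convex and continuous on $L^{2,2}\subset L^{r,p}$ with $r=2$, $p=2>3/2$, by Theorem~\ref{th:potential_Lrp} and Lemma~\ref{lem:definition_Hartree_stationary}; here we use $\fm\in L^{\infty,p}\subset L^{2,\min(p,2)}$. The gating term is linear and continuous. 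Hence $\gamma_*$ is a minimizer. Uniqueness of the density comes from the convexity of $\cE^{\rm rHF}$ in $\gamma$ together with the strict convexity of $\cD_{\rm erg}$ in $\varrho$, which holds because the form is positive definite through $\epsilon|\Lambda\cV|^2$. Two minimizers with different densities would yield a strictly lower energy at their midpoint.

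For the Euler--Lagrange equation, we perturb $\gamma_t=(1-t)\gamma_*+t\gamma$ with $\tau(\gamma)=\lambda$. The first-order optimality condition reads
\[
\tau(H(\gamma-\gamma_*))\ge0, \qquad H=\int^\oplus_\Omega H_\omega\,\rd\omega .
\]
Here $H_\omega$ is well defined because $\cV_{\varrho_*-\fm}\in L^{r,\infty}$ by Theorem~\ref{th:potential_Lrp}, so $V_{\rho_{*,\omega}-m_\omega}$ is bounded for a.e. $\omega$. Hence $H$ is a bounded perturbation of $-\frac12\Delta_0$ fiberwise, and $H$ is affiliated with $\cA$. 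The standard bathtub argument in the von Neumann algebra, using the spectral projections of $H$, which lie in $\cA$, with $\varepsilon_F$ chosen through the integrated density of states $\tau(\1(H<\varepsilon))$, then gives $\gamma_*=\1(H<\varepsilon_F)+\delta$ with $0\le\delta\le\1(H=\varepsilon_F)$. This holds fiberwise for a.e. $\omega$.

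The main obstacle is the $L^\infty$ regularity of $\varrho_*$ and $\cV_{\varrho_*-\fm}$. We would argue in two steps. First we bootstrap the $L^{r,p}$ bounds so that $r=\infty$: we already have $\fm\in L^{\infty,p}$, and we need $\varrho_*\in L^{\infty,p}$. Once this is known, Theorem~\ref{th:potential_Lrp} gives $\cV_{\varrho_*-\fm}\in L^{\infty,\infty}=L^\infty$. To obtain it, we note from the Euler--Lagrange equation that $\gamma_{*,\omega}\le\1(H_\omega\le\varepsilon_F)$. If $\|V_\omega\|_\infty$ is controlled uniformly, then $H_\omega\ge-\frac12\Delta_0-C$, and the density of a spectral projection below a fixed energy of such an operator is bounded pointwise. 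This follows from heat kernel bounds, namely $\rho\le e^{t(\varepsilon_F+C)}\,e^{t\Delta_0/2}(x,x)\le C'$, using a Feynman--Kac or Golden--Thompson type estimate in $S_L$. The difficulty is circularity, since a uniform bound on $V$ requires one on $\varrho_*$. To break it, we would exploit the sign structure. Because $\rho_*\ge0$, the potential satisfies $V_{\rho_*-m}\ge-V_m$ pointwise, and $V_m\in L^\infty$ since $\fm\in L^{\infty,p}$. So $H_\omega\ge-\frac12\Delta_0-\|V_m\|_\infty-|v_0|/2$ uniformly in $\omega$. This lower bound alone suffices for the heat-kernel estimate on the density of $\1(H_\omega\le\varepsilon_F)$, hence $\varrho_*\in L^\infty$. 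Finally $\cV_{\varrho_*-\fm}\in L^\infty$ follows from Theorem~\ref{th:potential_Lrp} with $r=\infty$.
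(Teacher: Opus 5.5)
Your overall strategy matches the paper's: lower bound and kinetic bound, weak-$*$ extraction, convexity of $\mathcal{D}_{\rm erg}$, and linearization around $\varrho_*$. You also use the same sign argument $V_{\rho_*-m}\ge -V_m$ to get a uniform lower bound on $H_\omega$, which via Feynman--Kac bounds the density of $\mathds{1}(H\le\varepsilon_F+1)\ge\gamma_*$.

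There is, however, a genuine gap at the step on which the constraint $\tau(\gamma_*)=\lambda$ rests. You need to identify the weak limit $\varrho_*$ of $\varrho_{\gamma_n}$ with the density of the weak-$*$ limit $\gamma_*$. Weak-$*$ convergence in $\mathcal{A}$ only tests against $L^1(\mathcal{A},\tau)$. A bounded stationary multiplication operator $M_\Phi$ is not in $L^1(\mathcal{A},\tau)$, so $\tau(M_\Phi\gamma_n)\to\tau(M_\Phi\gamma_*)$ does not follow from your extraction. The justification you offer, uniform integrability of $\varrho_{\gamma_n}$ in $L^{5/3}$, cannot close the gap. Take $\gamma_n=\mathds{1}_{B(\xi_n,r)}(-\mathrm{i}\nabla_x)\otimes|h\rangle\langle h|$ with $|\xi_n|\to\infty$. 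Then $\gamma_n$ lies in the admissible set, and both $\tau(\gamma_n)$ and $\varrho_{\gamma_n}$ are independent of $n$, hence perfectly uniformly integrable. Yet $\gamma_n\rightharpoonup_* 0$, so all the trace is lost. The loss occurs through high \emph{longitudinal} momenta; the $z$-direction is harmless. The only thing that prevents it is the uniform kinetic bound, which your argument never uses at this point.

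The fix uses a tool you already rely on. The projection $P_K:=\mathds{1}(-\Delta_0\le K)$ satisfies $\tau(P_K)<\infty$, which is implicit in your lower semicontinuity argument for the kinetic term. Write $\tau(M_\Phi\gamma_n)=\tau(M_\Phi P_K\gamma_n)+\tau(M_\Phi P_K^\perp\gamma_n)$. The first term converges because $M_\Phi P_K\in L^1(\mathcal{A},\tau)$. By H\"older, the second term is bounded by $\|\Phi\|_\infty K^{-1/2}\lambda^{1/2}\sup_n\|\sqrt{-\Delta_0}\,\gamma_n^{1/2}\|_{L^2(\mathcal{A},\tau)}$, which is small uniformly in $n$; the same bound holds for $\gamma_*$. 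The paper achieves the same thing differently. It extracts a weak limit of $T_n=\sqrt{-\Delta_0}\,\gamma_n$ in $L^1\cap L^2(\mathcal{A},\tau)$ and uses $(-\Delta_0)^{-1/2}\in L^s(\mathcal{A},\tau)$ for $s>3$, which follows from $\tau((-\Delta_0)^{-q})<\infty$ for $q>3/2$. This makes $M_\Phi(-\Delta_0)^{-1/2}$ an admissible test operator.

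With this step supplied, the rest of your proof works. Choosing $\varepsilon_F$ through the integrated density of states is a legitimate alternative to the paper's subgradient of $E(\mu)$. Note, though, that the uniform lower bound on $H_\omega$ and the Feynman--Kac estimate are already needed at that stage, to place $\mathds{1}(H<\varepsilon)$ in $W^{1,1}_0(\mathcal{A},\tau)$ with finite trace, and not only for the final $L^\infty$ bound.
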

    
    Note that the assumption $\fm\in L^{\infty,p}$ with $p>\frac{3}{2}$ is quite natural, in the context of smeared nuclei. Indeed, this assumption is satisfied if the atomic charge density $m_i$ corresponding to the lattice $\L_i$ belongs to $L^p_\unif(S_L)$, $p>\frac{3}{2}$. Recall that $\omega\in\Omega$ corresponds physically to a translation in space of the lattice. From~\eqref{eq:atomic_charge_density}, it follows that $m_\omega\in L^p_\unif(S_L)$ as well, and $\|m_\omega\|_{L^p_\unif(S_L)}$ is bounded uniformly in $\omega$, hence the generating function verifies $\fm\in L^{\infty,p}$.
    
    \begin{proof}[Proof of Theorem~\ref{th:main_KS_moire}]
    
    We split the proof into several steps for clarity.
    
    \medskip
    
    \noindent \underline{Step 1: The minimization set is non empty}. 
    Let us first prove that one can find $\gamma \in \cK$ with $\tau(\gamma)= \lambda$ and $\cE^{\rm rHF}(\gamma) < \infty$. Consider a smooth function $h \in H^1_0(I_L)$ with $\| h \|_{L^2(I_L)} = 1$, a smooth function $j : \R^2 \to \R$ satisfying $0 \le \widehat{j}(k) \le 1$ pointwise, and the operator (see also Example~\ref{ex:FourierMultiplier})
    \begin{equation*}
    \gamma = \int_\Omega^\oplus \gamma_\omega \rd \omega, \quad \gamma_\omega := \widehat{j}( - \ri \nabla_x) \otimes | h \rangle \langle h |.
    \end{equation*}
    Note that $\gamma_\omega$ is in fact independent of $\omega \in \Omega$, and acts as a Fourier multiplier in the longitudinal variables. Since $\gamma_\omega$ is translation-invariant in the longitudinal directions, $\gamma$ is a fibered stationary operator. The density of $\gamma$ is
    \begin{equation*}
    \rho_\gamma(x, z) = \left( \frac{1}{2 \pi} \int_{\R^2} \widehat{j}(k) \rd k \right) | h(z) |^2 = j(0) | h(z) |^2,
    \end{equation*}
    and in particular is independent of $x \in \R^2$. So 
    \begin{equation*}
    \varrho(\omega, z) = j(0) | h(z) |^2
    \quad \text{and} \quad
    \tau(\gamma) =  j(0) .
    \end{equation*}
    Since $h\in H^1_0(I_L)$, it is continuous by the Sobolev embedding, so that $\varrho$ is continuous as well. In particular, it is bounded, so that $\varrho - \fm \in L^{\infty,p}$. Since $p>\frac{3}{2}$ by assumption, the Hartree energy $\cD_{\rm erg}(\varrho - \fm, \varrho - \fm)$ is finite thanks to Lemma~\ref{lem:definition_Hartree_stationary}. Similarly, the kinetic energy of $\gamma$ is
    \begin{equation*}
    \frac{1}{2}\tau(- \Delta_0 \gamma) =  \frac{1}{2}\left( \frac{1}{2 \pi} \int_{\R^2} | k |^2 \widehat{j}(k) \rd k \right) + \frac{1}{2}j(0) \| \nabla h \|_{L^2(I_L)}^2,
    \end{equation*}
    and in particular is finite, since $j$ is smooth. This proves that the minimization set is non empty.
    
    \medskip
    
    \noindent \underline{Step 2: The energy $\cE^{\rm rHF}$~\eqref{eq:rhf_energy_functional} is bounded from below}. The first two terms are positive. We focus on the last term. The function $g_0(z)$ in~\eqref{eq:def:g0} is strictly increasing with $g_0(\pm L/2) = \pm 1/2$. In particular, it is bounded with $\| g_0 \|_\infty = \frac12$. This gives
    \begin{equation*}
        \cE^{\rm rHF} (\gamma) \ge - | v_0 | \cdot \| g_0 \|_\infty \tau(\gamma) = -  \frac12 | v_0 | \lambda.
    \end{equation*}
    
    \medskip
    
    \noindent \underline{Step 3: Extraction of a subsequence}. Consider $(\gamma_n)$ a minimizing sequence for \eqref{eq:min_rHF_quasi-periodic}. As
    $$
    0 \le \frac{1}{2}\tau(- \Delta_0 \gamma_n) \le \cE^{\rm rHF} (\gamma_n) +  \frac12 | v_0 | \lambda,
    $$
    $(\tau(- \Delta_0 \gamma_n))$ is a bounded sequence. Since $\gamma_n \in \cK$ satisfies $0 \le \gamma_n \le 1$ and $\tau(\gamma_n) = \lambda$, we deduce that the sequence $(\gamma_n)$ is bounded in $L^1(\cA, \tau) \cap L^\infty(\cA, \tau)$, hence in all $L^p(\cA, \tau)$ spaces.
    
    For $p > 1$, we may use the Banach--Alaoglu theorem. Up to a subsequence (undisplayed), we may assume that
    \begin{equation*}
    \begin{cases}
        \gamma_n \rightharpoonup \gamma_* \quad \text{weakly in $L^p(\cA, \tau)$ for all $1 < p < \infty$} \\
        \gamma_n \rightharpoonup_* \gamma_* \quad \text{weakly-$*$ in $L^\infty(\cA, \tau)$.}
    \end{cases}
    \end{equation*}
    In particular, we have $0 \le \gamma_* \le 1$ and $\tau(\gamma_*) \le \lambda$. Finally, Fatou's Lemma implies that
    \begin{equation*}
    \tau( - \Delta_0 \gamma_* ) = \tau( \sqrt{- \Delta_0} \gamma_* \sqrt{- \Delta_0}) \le \liminf_{n \to \infty} 
    \tau( \sqrt{- \Delta_0} \gamma_n \sqrt{- \Delta_0})  = \tau( - \Delta_0 \gamma_n) \le C
    \end{equation*}
    is finite. So $\gamma_* \in W^{1,1}_0(\cA, \tau)$.
    
    \medskip
    
    \noindent \underline{Step 4: Convergence of the densities.} We follow the approach in~\cite{CanLahLew-13}. This time, we use the bound
    \begin{equation*}
    \tau( \sqrt{- \Delta_0} \gamma_n \sqrt{- \Delta_0}) \le C.
    \end{equation*}
    Equivalently, $\sqrt{- \Delta_0} \gamma_n^{1/2}$ is bounded in $L^2(\cA, \tau)$. Let us focus on the sequence
    \begin{equation*}
    T_n := \sqrt{- \Delta_0} \gamma_n.
    \end{equation*}
    Since $T_n = \sqrt{- \Delta_0} \gamma_n = \sqrt{- \Delta_0} \gamma_n^{1/2} \gamma_n^{1/2}$ with  $\sqrt{- \Delta_0} \gamma_n^{1/2}$ bounded in $L^2(\cA, \tau)$ and $\gamma_n^{1/2}$ bounded in $L^2(\cA, \tau) \cap L^\infty(\cA, \tau)$, we deduce that $T_n$ is bounded in $L^1(\cA, \tau) \cap L^2(\cA, \tau)$. According to the Banach--Alaoglu theorem, we may extract a subsequence (undisplayed) and a limit $T_*$ so that
    \begin{equation*}
    T_n \rightharpoonup T_* \quad \text{in $L^p(\cA, \tau)$ for all $1 < p \le 2$}.
    \end{equation*}
    Finally, we set $\widetilde{\gamma}_* := \sqrt{- \Delta_0}^{-1} T_*$.
    
    \medskip
    
    Let us prove that $\gamma_n$ converges weakly to $\widetilde{\gamma}_*$ in $L^p(\cA, \tau)$ spaces for all $1 \le p < 2$ (this will eventually prove that $\widetilde{\gamma}_* = \gamma_*$). 
    \begin{lemma}
        We have  $(- \Delta_0)^{-1} \in L^q(\cA, \tau)$ for all $q > 3$, and
        \begin{equation*}
            \tau ((- \Delta_0)^{-q}) = \dfrac{ L^{2(q - 1)} }{4 (q - 1) \pi^{2q - 1 }} \zeta(2q - 2),
        \end{equation*}
        where $\zeta(\cdot)$ is the Riemann zeta function.
    \end{lemma}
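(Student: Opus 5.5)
The plan is to compute $\tau((-\Delta_0)^{-q})$ exactly from the spectral decomposition~\eqref{eq:decomposition_kinetic}, combined with the trace formula for Fourier multipliers in Example~\ref{ex:FourierMultiplier}. Since $-\Delta_0=\sum_{k\ge1}(-\Delta_x+\mu_k)\otimes|h_k\rangle\langle h_k|$ is a sum of commuting pieces acting on mutually orthogonal ranges $L^2(\R^2)\otimes h_k$, functional calculus gives
\begin{equation*}
(-\Delta_0)^{-q}=\sum_{k=1}^\infty(-\Delta_x+\mu_k)^{-q}\otimes|h_k\rangle\langle h_k|.
\end{equation*}
This operator is positive, independent of $\omega$, and commutes with longitudinal translations, so it is affiliated to $\cA$.

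Each summand is exactly of the form treated in Example~\ref{ex:FourierMultiplier}. The longitudinal factor is the Fourier multiplier $\widehat j_k(\xi)=(|\xi|^2+\mu_k)^{-q}$, and the transverse factor is the normalized $h_k$. The example therefore gives
\begin{equation*}
\tau\big((-\Delta_x+\mu_k)^{-q}\otimes|h_k\rangle\langle h_k|\big)=\frac{1}{(2\pi)^2}\int_{\R^2}\frac{\rd\xi}{(|\xi|^2+\mu_k)^q}=\frac{1}{4\pi^2}\cdot\frac{\pi\,\mu_k^{1-q}}{q-1}=\frac{\mu_k^{1-q}}{4\pi(q-1)},
\end{equation*}
where the radial integral is elementary (substitute $s=|\xi|^2$).

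By normality of $\tau$, the trace passes through the sum of positive operators (monotone convergence). Inserting $\mu_k=\pi^2k^2/L^2$ then gives
\begin{equation*}
\tau\big((-\Delta_0)^{-q}\big)=\frac{1}{4\pi(q-1)}\sum_{k\ge1}\Big(\frac{L^2}{\pi^2k^2}\Big)^{q-1}=\frac{L^{2(q-1)}}{4(q-1)\pi^{2q-1}}\,\zeta(2q-2).
\end{equation*}
This is finite as soon as $2q-2>1$, so in particular for all $q>3$. Since $(-\Delta_0)^{-1}\ge0$, this yields $\tau(|(-\Delta_0)^{-1}|^q)<\infty$, that is $(-\Delta_0)^{-1}\in L^q(\cA,\tau)$.

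The only delicate point is justifying the application of Example~\ref{ex:FourierMultiplier} to each summand and the interchange of $\tau$ with the infinite sum. I would handle it either through normality of the trace on increasing partial sums of positive operators, or directly through formula~\eqref{eq:trace_per_unitV_formula} with a cutoff $\eta$. In the latter approach, the kernel on the diagonal is $\sum_k |h_k(z)|^2\,(2\pi)^{-2}\int(|\xi|^2+\mu_k)^{-q}\rd\xi$, and integrating in $z$ uses $\|h_k\|_{L^2(I_L)}=1$.
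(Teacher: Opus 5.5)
Your proposal is correct and follows the paper's proof essentially verbatim. Both decompose $(-\Delta_0)^{-q}$ along the Dirichlet modes $h_k$ and evaluate each summand as a Fourier multiplier via Example~\ref{ex:FourierMultiplier}, using $\int_{\R^2}(|\xi|^2+1)^{-q}\,\rd\xi=\pi/(q-1)$, then sum to get $\zeta(2q-2)$, which is finite for all $q>3/2$. Your explicit appeal to normality of $\tau$ to interchange the trace and the infinite sum is a small justification that the paper leaves implicit.
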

    
    \begin{proof}
        From Eqn.~\eqref{eq:decomposition_kinetic}, we get
        \begin{equation*}
        - \Delta_0 = \sum_{k=1}^\infty ( - \Delta_x + \mu_k) \otimes | h_k \rangle \langle h_k |, \quad \text{hence} \quad
        \frac{1}{(- \Delta_0)^q} = \sum_{k=1}^\infty \dfrac{1}{( - \Delta_x + \mu_k)^q} \otimes | h_k \rangle \langle h_k |.
        \end{equation*}
        This is the sum of operators of the forms considered in Example~\ref{ex:FourierMultiplier}. We deduce that
        \begin{equation*}
        \tau \left(  \dfrac{1}{( - \Delta_x + \mu_k)^q} \otimes | h_k \rangle \langle h_k | \right) = \frac{1}{(2 \pi)^2} \int_{\R^2} \dfrac{\rd \xi}{(|\xi|^2 + \mu_k)^{q}} 
        = \frac{1}{(2 \pi)^2} \mu_k^{1 - q} \int_{\R^2} \underbrace{\dfrac{\rd \xi}{(|\xi|^2 + 1)^{q}}}_{ = \frac{\pi}{q - 1} }.
        \end{equation*}
        Recall that $\mu_k = \frac{\pi^2 k^2}{L^2}$. We obtain that for $q > \frac32$,
        \begin{equation*}
        \tau( ( - \Delta_0)^{-q} ) = \frac{1}{4 \pi (q - 1)} \sum_{k=1}^\infty \left( \frac{L^2}{\pi^2 k^2} \right)^{q - 1} = \dfrac{ L^{2(q - 1)} }{4 (q - 1)\pi^{2q - 1}} \sum_{k=1}^\infty \frac{1}{k^{2q - 2}} = \dfrac{ L^{2(q - 1)}}{4 (q - 1) \pi^{2q - 1}} \zeta(2q - 2),
        \end{equation*}
        which concludes the proof.
    \end{proof}
    
    Since $(- \Delta_0)^{-1/2}$ is in $L^q(\cA, \tau)$ for all $q > 3$, then for all $A \in L^r(\cA, \tau)$ with $r > 1$, we have $A (- \Delta_0)^{-1/2}$ in $L^{s}(\cA, \tau)$ for all $\frac{3r}{2 + 2r} < s \le r$. In particular, for all $r > 2$, we can tune $q$ so that $s > 2$ as well, in which case $A (- \Delta_0)^{-1/2}$ is a valid test operator for the weak-convergence of $T_n$.
    
    We deduce that, for all $A \in L^r(\cA, \tau)$ with $r > 2$, we have
    \begin{equation*}
    \tau(A \gamma_n) = \tau(A (- \Delta_0)^{-1/2} T_n) \xrightarrow[n \to \infty]{} \tau(A (- \Delta_0)^{-1/2} T_*) = \tau( A \widetilde{\gamma}_*).
    \end{equation*}
    This proves that $\gamma_n$ converges weakly to $\widetilde{\gamma}_*$ in $L^p(\cA, \tau)$ for all $1 \le p < 2$. In particular, we have $\gamma_* = \widetilde{\gamma}_*$.
    
    \medskip

    We denote by $\varrho_n$ and $\varrho_*$ the generalized density of $\gamma_n$ and $\gamma_*$ respectively. Consider any positive bounded function $\Phi(\omega, z): \Omega \times I_L \to \R$, and set 
    \begin{equation*}
    M_\Phi := \int_\Omega (M_\phi)_\omega \rd \omega, \qquad [(M_\phi)_\omega u](x,z) := \Phi(\alpha_x \omega, z) u(x,z).
    \end{equation*}
    In other words, $M_\Phi$ represents the multiplication operator by a stationary function
    $\phi_\omega(x, z) :=  \Phi(\alpha_x \omega, z)$. Since $\phi \in L^\infty$, we have $M_\Phi \in L^\infty(\cA, \tau)$. Taking $A = M_\Phi$ is the previous computation shows that
    \begin{equation*}
    \int_{\Omega \times I_L} \Phi \varrho_n = \tau(M_\Phi \gamma_n) \xrightarrow[n \to \infty]{}  \tau( M_\Phi \gamma_*) =  \int_{\Omega \times I_L} \Phi \varrho_*.
    \end{equation*}
    We deduce that $\varrho_n$ converges weakly to $\varrho_*$ in $L^1(\Omega \times I_L)$ (recall that $\left( L^1 \right)^* = L^\infty$ and that $L^1 \subsetneq (L^\infty)^*$. Recall also that the unit ball of $L^1$ is not compact for the weak topology). Taking $\Phi = 1$ shows that
    \begin{equation*}
    \tau(\gamma_*) = \int_{\Omega \times I_L} \varrho_* = \lambda.
    \end{equation*}
    
    \medskip
    
    On the other hand, recall that the sequence $\tau( - \Delta_0 \gamma_n)$ are bounded.  Using Lemma~\ref{lem:inequalities}, we obtain that $(\varrho_n)$ is a bounded sequence in $L^p(\Omega \times I_L)$ for all $1 \le q \le 3$. In particular, up to another (undisplayed) extraction, we may assume that $(\varrho_n)$ converges weakly to some $\widetilde{\varrho}_*$ in $L^q(\Omega \times I_L)$ for all $1 < q \le 3$. 
    Actually, we have $\widetilde{\varrho}_* = \varrho_*$. Indeed, consider any $\varphi \in L^\infty(\Omega \times I_L)$. Then $\varphi \in L^q(\Omega \times I_L)$ as well, by Hölder inequality (the set $\Omega \times I_L$ is bounded), so $
    \int_{\Omega \times I_L} \varphi \rho_n$
    converges to both $\int_{\Omega \times I_L} \varphi \rho_*$ and $\int_{\Omega \times I_L} \varphi \widetilde{\rho_*}$. We deduce that $\int \varphi \rho_* = \int \varphi \widetilde{\rho_*}$ for all such $\varphi$. So $\widetilde{\varrho}_* = \varrho_*$.
    
    \medskip
    
    \noindent \underline{Step 5: Convergence of the energy.}
    At this point, we proved the following convergences:
    \begin{equation*}
    \begin{cases}
        \gamma_n \rightharpoonup \gamma_* \quad \text{weakly in $L^p(\cA, \tau)$ for all $1 < p < \infty$}, \\
        \gamma_n \rightharpoonup_* \gamma_* \quad \text{weakly-$*$ in $L^\infty(\cA, \tau)$}, \\
        \varrho_n \rightharpoonup \varrho_* \quad \text{weakly in $L^p(\Omega \times I_L)$ for all $1 < p \le 3$}.
    \end{cases}
    \end{equation*}
    In addition, we have $\tau(\gamma_*) = \lambda$ and 
    \begin{equation*}
    \tau ( - \Delta \gamma_*) \le \liminf_{n \to \infty} \tau(- \Delta \gamma_n).
    \end{equation*}
    We also proved in Lemma~\ref{lem:definition_Hartree_stationary} that the bilinear form $\varrho \mapsto \cD_{\rm erg}(\varrho, \varrho)$ is strongly continuous for the $L^{r,q}$ topologies, for all $\frac{3}{2} < q \le \infty$ and all $2 \le r \le \infty$. In particular, it is continuous for the $L^q(\Omega \times I_L) = L^{q,q}(\Omega \times I_L)$ topology for all $q \ge 2$. Since this quadratic form is positive, it is convex. According to the Mazur's Theorem, the map $\varrho \mapsto \cD_{\rm erg}(\varrho, \varrho)$ is weakly lower semicontinuous in $L^q(\Omega \times I_L)$ for all $2 \le q < 3$.
    
    \medskip
    
    We directly deduce that
    \begin{equation*}
    \cE^{\rm rHF}(\gamma_*) \le \liminf_{n \to \infty} \cE^{\rm rHF}(\gamma_n).
    \end{equation*}
    In addition, since $\gamma_* \in \cK$ with $\tau(\gamma_*) = \lambda$, we deduce that $\gamma_*$ is a minimizer of~\eqref{eq:min_rHF_quasi-periodic}.

    \medskip
    
    \noindent \underline{Step 6: Uniqueness of the ground-state density.}
    The kinetic term $\tau(- \Delta_0 \gamma)$ and the gating potential term $v_0 \int g_0 \varrho$ are linear in $\gamma$, while the Hartree term $\gamma \mapsto  \cD_{\rm erg}(\varrho_\gamma - \fm, \varrho_\gamma - \fm)$ is convex in $\gamma$ and strictly convex in the density. Hence all minimizers must share the same density, which we denote by $\varrho_*$.

    \medskip
    
    \noindent \underline{Step 7: Euler--Lagrange equations.} The set $\{ \gamma \in \cK, \ \tau(\gamma) = \lambda \}$ is convex. For any $\gamma$ in this set with density $\varrho$, and any $0 \le t \le 1$, we have
    \begin{align*}
        \cE^{\rm rHF}(\gamma_*) & \le \cE^{\rm rHF} ((1 - t) \gamma_* + t \gamma) \\
        & \le \cE^{\rm rHF}(\gamma_*) + t \left[ \tau( - \Delta_0 (\gamma - \gamma_*)) + \cD_{\rm erg} (\varrho_* - \fm, \varrho - \varrho_*) + v_0 \int g_0 (\varrho - \varrho_*)\right] \\
        & \qquad + \frac{t^2}{2}\cD_{\rm erg} (\varrho - \varrho_*, \varrho - \varrho_*).
    \end{align*}
    Taking the limit $t \to 0^+$ shows that for all such $\gamma$, we have
    \begin{equation*}
    \frac12\tau( - \Delta_0 \gamma) + \cD_{\rm erg} (\varrho_* - \fm, \varrho) + v_0 \int_{\Omega \times I_L} g_0 \varrho
    \ge 
    \tau( - \Delta_0 \gamma_*) + \cD_{\rm erg} (\varrho_* - \fm, \varrho_*) + + v_0 \int_{\Omega \times I_L} g_0 \varrho_*.
    \end{equation*}
    In other words, $\gamma_*$ is also the minimizer of the linearized problem
    \begin{equation} \label{eq:inf_linearized}
    \min \left\{ \frac12\tau( - \Delta_0 \gamma) + \cD_{\rm erg} (\varrho_* - \fm, \varrho_\gamma) + v_0 \int_{\Omega \times I_L} g_0 \varrho_\gamma, \; \gamma \in \cK, \ \tau(\gamma) = \lambda   \right\}.
    \end{equation}
    
    \medskip
    
    Recall that $\varrho_*$ is independent of the minimizer. Let us set $\rho_{*, \omega} := \rho_*(\omega, x,z) := \varrho_*(\alpha_x \omega,z)$, and let us introduce the mean-field operator
    \begin{equation*}
    H_* := \int_\Omega H_{*, \omega} \rd \omega, \quad \text{with} \quad
    H_{*, \omega} := - \frac12\Delta_0 + V_{*, \omega}, \quad \text{with} \quad V_{*, \omega}  := V_{(\rho_{*} - m)_\omega}+v_0 g_0.
    \end{equation*}
    Since $\varrho_* - \fm$ belongs to $L^{q,q}=L^q(\Omega\times I_L)$ for $q = \min(3,p)>\frac32$, Theorem~\ref{th:potential_Lrp} shows that the generating function $\cV_*$ of the stationary potential $V_{*,\omega}$ belongs to $L^{q, \infty}$.
    This implies that $V_{*, \omega}$ belongs to $L^\infty$ for a.a. $\omega \in \Omega$. For any such $\omega$, $H_{*, \omega}$ is a self-adjoint operator on $L^2(S_L)$ with domain $H^2(S_L) \cap H^1_0(S_L)$, by the Rellich-Kato theorem~\cite{rellich1937storungstheorie,kato1951fundamental}. In addition,
    $V_{*,\omega}(x,z)$ is a stationary function by Theorem~\ref{th:potential_Lrp}, so that the operator $H_*$ is a fibered stationary (unbounded) self-adjoint operator affiliated to $\cA$, since $(H_*+i)^{-1}\in\cA$ and the map $\omega\mapsto H_{*,\omega}$ is measurable.
    
    \medskip
    
    Since the density of any $\gamma \in \cK$ satisfies $\varrho_\gamma \in L^{3,3}=L^3(\Omega\times I_L)$ by the Hoffmann-Ostenhof inequality~\eqref{eq:Hoffmann-Ostenhof}, and $\cV_* \in L^{q, \infty}\subset L^{3/2,\infty}$, we have $\cV_*\varrho_\gamma \in L^1(\Omega\times I_L)$ and
    \begin{equation*}
    \cD_{\rm erg} (\varrho_* - \fm, \varrho_\gamma) = \int_{\Omega \times I_L} \cV_* \varrho_\gamma = \tau(\cV_* \gamma).
    \end{equation*}
    So the linearized problem~\eqref{eq:inf_linearized} also reads
    \begin{equation}\label{eq:linearized_min_pb_rhf}
    \min  \left\{ \tau( H_* \gamma ), \; \gamma \in \cK, \ \tau(\gamma) = \lambda   \right\}.
    \end{equation}
    
    \medskip
    
By linearity and the uniqueness following from Theorem~\ref{th:potential_Lrp}, we have that $\cV_*=\cV_{\varrho_*}-\cV_\fm+v_0g_0$. Since $\varrho_*\ge 0$, we have that $\cV_{\varrho_*}$ is nonnegative. Hence the negative part of the potential only comes from the atomic charge and the external potential: $0\le \cV_*^-=\cV_\fm^+ + |v_0g_0|\le \cV_\fm^+ + |v_0|/2$ (recall that $\|g_0\|_{L^\infty(I_L)}=1/2$).
Then, letting $V_m$ be the stationary function associated to the generating function $\cV_\fm$, since $\fm\in L^{\infty,p}$, with $p>3/2$, we have, by Theorem~\eqref{th:potential_Lrp}, that  $\cV_m\in L^\infty(\Omega\times I_L)$. Hence, $\cV_*^-\in L^\infty(\Omega\times I_L)$. In particular, $V_{*,\omega}^-\in L^\infty(S_L)$ and is uniformly bounded in $\omega$:
\begin{equation}\label{eq:bound_potential_atomic}
    \|V_{*,\omega}^-\|_{L^\infty(S_L)} \le C_{\infty,p, L}^{\rm stat} \| \fm \|_{L^{\infty,p}}+\frac{|v_0|}{2}.
\end{equation}

\begin{lemma}\label{lem:EulerLagrange}
For all $\eps \in \R$, $\1_{(-\infty,\eps)}(H_*)$ is a fibered stationary operator in $W_0^{1,1}(\cA,\tau)$ and all the minimizers of the problem
\begin{equation} \label{eq:min_rHF_fixed_potential}
\inf \left\{ \frac12\tau(-\Delta_0\gamma) + \int_{\Omega \times I_L} \mathcal V \varrho_{\gamma}+ v_0 \int g_0 \varrho_\gamma  - \eps \tau(\gamma), \; \gamma \in \mathcal K \right\},
\end{equation}
where $\varrho_\gamma$ is the generating function of the density $\rho_\gamma$ of $\gamma$, are of the form
$$
\gamma_\eps = \1_{(-\infty,\eps)}(H_{*})+\delta, \quad \mbox{with} \quad \delta \in  W_0^{1,1}(\cA,\tau) \text{ self-adjoint}, \quad 0 \le \delta \le \1_{\{\eps\}}(H_{*}).
$$
\end{lemma}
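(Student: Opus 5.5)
The functional in \eqref{eq:min_rHF_fixed_potential} is linear in $\gamma$. Here $\mathcal V$ is understood as $\cV_*$ minus the gating term, so that the whole functional equals $\tau((H_*-\eps)\gamma)$. The plan is the usual bathtub argument, transposed to the von Neumann algebra $\cA$ with its trace $\tau$. It has three steps.

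\emph{Step 1: show that $P_\eps:=\1_{(-\infty,\eps)}(H_*)$ belongs to $W^{1,1}_0(\cA,\tau)$.} Since $H_*$ is affiliated to $\cA$, bounded Borel functions of $H_*$ lie in $\cA$, so $P_\eps\in\cA$. By \eqref{eq:bound_potential_atomic} we have $V_{*,\omega}\ge -c$ uniformly in $\omega$, with $c:=C^{\rm stat}_{\infty,p,L}\|\fm\|_{L^{\infty,p}}+|v_0|/2$. Hence, in the sense of forms, $H_*\ge -\frac12\Delta_0-c$, so
\begin{equation*}
P_\eps(-\tfrac12\Delta_0)P_\eps\le P_\eps(H_*+c)P_\eps\le(\eps+c)P_\eps .
\end{equation*}
This reduces everything to proving $\tau(P_\eps)<\infty$. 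For that, take $a>c$ and write $P_\eps\le (\eps+a)^{q}(H_*+a)^{-q}P_\eps$. The goal is $(H_*+a)^{-1}\in L^q(\cA,\tau)$ for some $q>3$, to be compared with $(-\tfrac12\Delta_0+a-c)^{-1}$, whose $q$-th power has finite trace by the preceding lemma. Operator monotonicity only gives $(H_*+a)^{-1}\le(-\tfrac12\Delta_0+a-c)^{-1}$, not the inequality for $q$-th powers. I would get around this with a min-max/Lieb--Thirring-type estimate instead: $\tau(\1(A\le E))\le\tau(\1(B\le E))$ whenever $A\ge B$, which holds for the spectral distribution functions of affiliated operators in a semifinite factor, since they are monotone under the form order. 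With $B=-\frac12\Delta_0-c$, the quantity $\tau(\1(B\le \eps))$ is computed explicitly via Example~\ref{ex:FourierMultiplier}: only finitely many $k$ satisfy $\mu_k/2\le\eps+c$, and each contributes a disc of finite area in momentum space. Combining with the display above gives $\tau(\sqrt{-\Delta_0}P_\eps\sqrt{-\Delta_0})\le 2(\eps+c)\tau(P_\eps)<\infty$.

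\emph{Step 2: prove the bathtub inequality.} For $\gamma\in\cK$, decompose with $P:=P_\eps$ and $Q:=1-P$. The key estimate is
\begin{equation*}
\tau((H_*-\eps)\gamma)-\tau((H_*-\eps)P)=\tau\big(|H_*-\eps|^{1/2}(P(1-\gamma)P+Q\gamma Q)|H_*-\eps|^{1/2}\big)\ge0 .
\end{equation*}
The off-diagonal terms vanish because $P$ commutes with $H_*$ and $\tau$ is cyclic. Cyclicity must be justified on the relevant trace-class pieces. I would do this by first working with $\gamma$ truncated by spectral projections $\1(|H_*|\le M)$ and then letting $M\to\infty$ by normality of $\tau$. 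Finiteness of the individual terms follows from $\gamma\in W^{1,1}_0$ and $V_{*}^-\in L^\infty$. The positive part of the potential is handled by the same identity, since $\tau(V^+\gamma)=\int\cV^+\varrho_\gamma$ with $\cV^+\in L^{q,\infty}$ and $\varrho_\gamma\in L^3$, as noted before the lemma. This shows that $P_\eps$ is a minimizer, and that $\gamma$ is a minimizer if and only if both terms on the right vanish.

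\emph{Step 3: characterize the minimizers.} Since $\tau$ is faithful, vanishing of the right-hand side in Step 2 forces
\begin{equation*}
|H_*-\eps|^{1/2}P(1-\gamma)P=0 \quad\text{and}\quad |H_*-\eps|^{1/2}Q\gamma Q=0 .
\end{equation*}
Because $H_*-\eps<0$ on $\operatorname{Ran}P$, the first gives $P\gamma P=P$. Because $0\le\gamma\le1$, this in turn gives $P\gamma=\gamma P=P$. The second gives $Q\gamma Q=\1_{\{\eps\}}(H_*)\gamma\1_{\{\eps\}}(H_*)$, and for $0\le\gamma\le1$ the off-diagonal blocks then vanish. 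Setting $\delta:=\gamma-P$, we obtain $0\le\delta\le\1_{\{\eps\}}(H_*)$, and $\delta\in W^{1,1}_0$ because both $\gamma$ and $P$ are.

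The main obstacle is Step 1, namely showing $\tau(P_\eps)<\infty$ through the form comparison of spectral distribution functions in $(\cA,\tau)$. The next most delicate point is the cyclicity argument in Step 2 for unbounded affiliated operators.
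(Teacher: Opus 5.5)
Your proposal is correct, and its overall architecture matches the paper's: first show $\1_{(-\infty,\eps)}(H_*)\in W^{1,1}_0(\cA,\tau)$, then run a bathtub argument. Both halves are carried out differently, however.

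\textbf{Finiteness of $\tau(P_\eps)$.} The paper uses the Feynman--Kac formula, together with domination of the Dirichlet heat kernel on $S_L$ by the free one on $\R^3$. This gives the pointwise bound \eqref{eq:bound_rho_A_omega} on the density of $e^{-H_{*,\omega}}$, uniformly in $\omega$, and then $\1_{(-\infty,\eps)}(x)\le e^{\eps-x}$ yields $0\le\varrho_{\Pi_\eps}\le Ce^{\eps}$. You instead compare spectral counting functions, $\tau(\1_{(-\infty,\eps]}(H_*))\le\tau(\1_{(-\infty,\eps]}(-\frac12\Delta_0-c))$. This is valid: with $p=\1_{(-\infty,\eps]}(H_*)$ and $q=\1_{(\eps,\infty)}(-\frac12\Delta_0-c)$, the form inequality forces $p\wedge q=0$, and Kaplansky's formula $p-p\wedge q\sim p\vee q-q\le 1-q$ gives the claim. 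The right-hand side is then explicit by Example~\ref{ex:FourierMultiplier}.

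Your route is softer and needs no heat-kernel or path-integral input. It only yields $\varrho_{P_\eps}\in L^1$, however. The paper's pointwise bound \eqref{eq:Feynman-Kac} is reused at the end of the proof of Theorem~\ref{th:main_KS_moire}, via $\gamma_*\le\Pi_{\eps_F+1}$, to get $\varrho_*\in L^\infty$ and hence boundedness of $\cV_*$. With your route, that conclusion would need a separate argument.

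Incidentally, the resolvent route you abandoned also works. In a semifinite algebra, $0\le X\le Y$ implies $\tau(X^q)\le\tau(Y^q)$ through generalized singular values, even though $X^q\le Y^q$ fails as an operator inequality.

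\textbf{Characterization of minimizers.} The paper uses the operator inequality $\Pi_\eps^\perp(\gamma-\Pi_\eps)\Pi_\eps^\perp-\Pi_\eps(\gamma-\Pi_\eps)\Pi_\eps\ge(\gamma-\Pi_\eps)^2$. This gives the coercive bound $\cF(\gamma)-\cF(\Pi_\eps)\ge\tau(|H_*-\eps|^{1/2}(\gamma-\Pi_\eps)^2|H_*-\eps|^{1/2})$, and hence $|H_*-\eps|^{1/2}(\gamma-\Pi_\eps)=0$ in one stroke. You use the exact block-diagonal identity and then kill the off-diagonal blocks by hand from $0\le\gamma\le 1$. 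This is equivalent and slightly more elementary.

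Your reading of $\mathcal V$ as $\cV_*$ without the gating term is the right one: as written, the paper's $\cF$ counts $v_0g_0$ twice. Your truncation plan for cyclicity is also more careful than the paper's formal step. When carrying it out, split $H_*-\eps$ into its bounded negative part and its positive part, and apply normality or monotone convergence only to the latter.
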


\begin{proof}
We first show that for $\eps\in\R$, the projector $\Pi_\eps:=\1_{(-\infty,\eps)}(H_*)$ belongs to the set $\cK$. First, $\Pi_\eps$ is a self-adjoint fibered stationary operator and $0\le \Pi_\eps\le 1$, which is clear since the function $0\le \1_{(-\infty,\eps)}\le 1$ is measurable and real-valued, and $H_*$ is affiliated with $\cA$. We then have to show that $\Pi_\eps\in W_0^{1,1}(\cA,\tau)$. We claim that, for a.e. $\omega\in\Omega$, we have
\begin{equation} \label{eq:bound_rho_A_omega}
    \rho_{A_\omega} \leq C e^{\|V_{*,\omega}^-\|_{L^\infty(S_L)}}, \quad \mbox{where} \quad A_\omega:=e^{-H_{*,\omega}}=e^{\frac12\Delta_0-V_{*,\omega}}.
\end{equation}
This follows from the Feynman-Kac formula for the Dirichlet Laplacian~\cite{simon1979functional}, and uses the fact that the heat kernel of $-\Delta_0$ on $S_L$ is pointwise bounded by the heat kernel of $-\Delta$ on $\R^3$. The remainder of the argument is then identical to~\cite[Prop.~2.11]{CanLahLew-13}.

The right-hand-side of the inequality in~\eqref{eq:bound_rho_A_omega} is finite and bounded uniformly in $\omega$, by~\eqref{eq:bound_potential_atomic}.
Using the inequality $\1_{(-\infty,\eps)}(x)\leq e^{\eps-x}$, we get that $\varrho_{\Pi_\eps}\in L^\infty(\Omega\times I_L)\subset L^1(\Omega\times I_L)$, so that $\Pi_\eps\in L^1(\cA,\tau)$:
\begin{equation}\label{eq:Feynman-Kac}
    0\leq \varrho_{\Pi_\eps}\leq C_{p,\fm,L}e^\eps,
\end{equation}
where the constant $C_{p,\fm,L}>0$ comes from~\eqref{eq:bound_potential_atomic}.
Furthermore, the inequality $x\1_{(-\infty,\eps)}(x)\leq\eps\1_{\eps\ge 0}e^{\eps-x}$ combined to the fact that $H_*\ge -C$ yields
$$-C\Pi_\eps\leq H_*\Pi_\eps\leq \eps\1_{\eps\ge 0}\, e^{-(H_*-\eps)}.$$
Hence $H_*\Pi_\eps\in L^1(\cA,\tau)$, which implies that $\Pi_\eps\in W_0^{1,1}(\cA,\tau)$.

\medskip 

Finally, adapting~\cite[Prop. 2.12]{CanLahLew-13}, we show that the minimizers of~\eqref{eq:min_rHF_fixed_potential} are of the stated form. Let $\gamma\in\cK$. Since $0\leq\gamma\leq 1$, we get that
\begin{equation*}
    \Pi_\eps^\perp(\gamma-\Pi_\eps)\Pi_\eps^\perp-\Pi_\eps(\gamma-\Pi_\eps)\Pi_\eps \geq (\gamma-\Pi_\eps)^2,
\end{equation*}
yielding formally that 
\begin{equation*}
    \tau(\Pi_\eps^\perp|H_*-\eps|\Pi_\eps^\perp(\gamma-\Pi_\eps))-\tau(\Pi_\eps|H_*-\eps|\Pi_\eps(\gamma-\Pi_\eps))\geq \tau(|H_*-\eps|(\gamma-\Pi_\eps)^2).
\end{equation*}
Now, the left-hand-side, using the definition of $\Pi_\eps$, is equal to $\tau((H_*-\eps)(\gamma-\Pi_\eps))$, leading to
\begin{equation}\label{eq:ineq_functional_projectors}
    \tau((H_*-\eps)(\gamma-\Pi_\eps))\geq \tau(|H_*-\eps|(\gamma-\Pi_\eps)^2).
\end{equation}
Define the functional
\begin{equation}\label{eq:grand_canonical_energy_func}
    \cF(\gamma):=\frac{1}{2}\tau(-\Delta_0\gamma)+\int_{\Omega\times I_L}\cV_*\varrho_\gamma+v_0\int_{\Omega\times I_L}\varrho_\gamma g_0-\eps\tau(\gamma).
\end{equation}
As mentioned above, the regularity of $\cV_*$ allows to write that
\begin{equation}\label{eq:rep_func_trace}
    \cF(\gamma)=\tau((H_*-\eps)\gamma).
\end{equation}
Then,~\eqref{eq:ineq_functional_projectors} makes sense, and it yields for $\gamma\in\cK$, that 
\begin{equation}\label{eq:minimality_projectors}
    \cF(\gamma)-\cF(\Pi_\eps) \geq \tau(|H_*-\eps|^{1/2}(\gamma-\Pi_\eps)^2|H_*-\eps|^{1/2})\geq 0.
\end{equation}
Hence $\Pi_\eps$ is clearly a minimizer of~\eqref{eq:linearized_min_pb_rhf}, and every minimizer $\gamma$ must verify that $|H_*-\eps|^{1/2}(\gamma-\Pi_\eps)=0$, so that $\mathrm{Ran}(\gamma-\Pi_\eps)\subset\mathrm{Ker}(H_*-\eps)$.
\end{proof}

To conclude the proof of the theorem, we proceed as in~\cite[Prop. 4.5]{CanLahLew-13}. As mentioned above, $\gamma_*$ is a minimizer of the linearized problem~\eqref{eq:linearized_min_pb_rhf}, that is, for $\gamma\in\cK$ such that $\tau(\gamma)=\lambda$,
\begin{equation*}
    \tau(H_*(\gamma-\gamma_*))\geq 0.
\end{equation*}
Define, for $\mu \in \R$,
\begin{equation*}
    E(\mu) := \inf \left\{\tau(H_*(\gamma-\gamma_*)), \; \gamma\in\cK,\ \tau(\gamma)=\mu\right\}.
\end{equation*}
The function $E:\R \to \R$ is convex, hence continuous and left and right differentiable at each $\mu \in \R$. Choose any 
\begin{equation*}
    \eps_F\in [E'(\lambda^-),E'(\lambda^+)],
\end{equation*}
where $E'(\lambda^-)$ and $E'(\lambda^+)$ respectively denote the left and right derivatives of $E$ at $\lambda$. Using the fact that $E(\lambda)=0$, we have that for any $\gamma\in\cK$,
\begin{equation*}
    \eps_F\tau(\gamma-\gamma_*)=\eps_F(\tau(\gamma)-\lambda) \leq E(\tau(\gamma))-E(\lambda) = E(\tau(\gamma) \leq \tau(H_*(\gamma-\gamma_*)),
\end{equation*}
which gives $\tau((H_*-\eps_F)(\gamma-\gamma_*))\geq 0$. It follows that $\gamma_*$ is also a minimizer of~\eqref{eq:min_rHF_fixed_potential} with $\eps=\eps_F$, so that, by Lemma~\ref{lem:EulerLagrange}, $\gamma_*$ is of the form~\eqref{eq:characterization_minimizers}.

\medskip
Finally, since any minimizer $\gamma_*$ satisfies $\gamma_* \le \Pi_{\eps_F+1}$, we have $\varrho_* \le  \varrho_{\Pi_{\eps_F+1}}$, which, in view of the bound~\eqref{eq:Feynman-Kac}, implies that $\varrho_* \in L^\infty(\Omega\times I_L)$. It follows from Theorem~\ref{th:potential_Lrp} that the potential $\cV_*$ is bounded as well.
This concludes the proof of Theorem~\ref{th:main_KS_moire}.
\end{proof}

\section{Kohn--Sham models for encapsulated periodic 2D materials}
\label{sec:periodic}
    
For completeness, we provide a proof of existence of a Kohn-Sham LDA  ground state for encapsulated 2D periodic materials. Unfortunately, our proof does not extend to the quasi-periodic setting. The most usual and practical way to write down a periodic Kohn--Sham model is to rely on Bloch theory. In order to recycle what has been done in the previous section, we will however follow a different (but equivalent) approach consisting in reformulating the periodic problem as a stationary ergodic problem.

\subsection{The periodic setting as a stationary one}
Consider a 2D material with $\L$-periodic nuclear charge distribution $m : S_L \to \R$, where $\L$ is a periodic lattice of $\R^2$. We introduce the configuration space $\widetilde{\Omega} := \R^2 / \L $, and the group action  $\R^2 \ni x \mapsto \widetilde{\alpha}_x$, defined by
\begin{equation*}        \widetilde{\alpha}_x : \widetilde{\Omega} \to \widetilde{\Omega}, \qquad \widetilde{\alpha}_x(\widetilde \omega)  = \widetilde \omega - x  \mod(\L).
\end{equation*}
For a function $f:S_L \to \R$, periodic in the sense $f(x + \ell, z) = f(x, z)$ for all $\ell \in \L $, we introduce the function $\widetilde f : \widetilde{\Omega} \times \R^2 \times I_L \to \R$ defined by
$$
\widetilde f(\widetilde\omega, x,z) := \widetilde f_{\widetilde\omega}(x,z) := f(x-\widetilde\omega,z).
$$
Then $\widetilde{f}$ is stationary for the group action $\widetilde\alpha$. In this context, the configuration $\widetilde \omega \in \widetilde \Omega$ encodes a global shift of the lattice $\L $.  Note that $\widetilde f_{\widetilde{\omega}=0} = f$ and that the generating function $\widetilde{\ff}$ of $\widetilde f$ can be seen as the original function $f$, up to inversion:
    $$
    \forall \widetilde\omega \in \widetilde\Omega, \quad \widetilde{\ff}(\omega,z)=f(-\omega,z).
    $$
    This proves that periodic functions can be identified with stationary function, with this specific group action.
    
    The action $\widetilde\alpha$ is ergodic (the only functions invariant by all shifts are the constant functions). Actually, Birkhoff's theorem becomes trivial in this special case: for any configuration $\widetilde\omega$, and any locally-integrable $\L$-periodic function $f$, we have with the above notation 
    \begin{align*}
    & \lim_{R \to \infty} \frac{1}{(2R)^2} \int_{[-R,R]^2 \times I_L} \widetilde f_{\widetilde \omega}(x,z) \, \rd x \, \rd z = 
    \lim_{R \to \infty} \frac{1}{(2R)^2} \int_{[-R,R]^2 \times I_L} f(x-\widetilde\omega,z) \, \rd x \, \rd z \\
    & \qquad = 
    \lim_{R \to \infty} \frac{1}{(2R)^2} \int_{([-R,R]^2-\widetilde\omega) \times I_L} f(x,z) \, dx
      = \int_{I_L}  \fint_{\R^2/\L} f(x,z) \, \rd x \, \rd z  = \int_{I_L} \fint_{\R^2/\L} \widetilde {\ff}(\omega,z) \, \rd \omega \rd z.
    \end{align*}
    Using this observation, the original periodic Kohn--Sham model for the nuclear charge distribution $m$, can be transformed into a stationary Kohn--Sham problem with stationary density $\widetilde m$. 
    
    \medskip
    
    Before writing down the corresponding encapsulated LDA model, let us emphasize the fundamental difference between the stationary ergodic settings corresponding to periodic and quasi-periodic 2D materials respectively. To illustrate this point, consider the simple case of twisted bilayer graphene (TBG) with twist angle $\theta$. We define the configuration space $\Omega$ as $\Omega = \R^2/\L_1 \times \R^2/\L_2$, with $\L_1:=R_{-\theta/2}\L_{\rm mg}$, $\L_2:=R_{\theta/2}\L_{\rm mg}$, where $\L_{\rm mg}$ is the monolayer graphene triangular lattice, and $R_{\varphi}$ the 2D rotation matrix of angle $\varphi$, and the action $\alpha$ of $\R^2$ on $\Omega$  by $\alpha_x(\omega_1,\omega_2):=(\omega_1-x \mod(\L_1),\omega_2-x \mod(\L_2))$. Recall that the system is in the configuration $\omega=(\omega_1,\omega_2) \in \Omega$ if the first graphene layer has a carbon atom of type A at $(\omega_1,z_1) \in S_L$ and the second layer a carbon atom of type A at $(\omega_2,z_2) \in S_L$, where $z_j$ is the height of the plane containing the $j^{\rm th}$ monolayer. The action $\alpha$ encodes uniform longitudinal translations of the system.
    Let 
    $$
    \Theta_{\rm per}:=\left\{ \theta \in [0,\pi/3] : \; \cos\theta = \frac{3m^2+3m+\frac 12}{3m^2+3m+1} \; \mbox{for some } m \in \N \right\}.
    $$
    If $\theta \in \Theta_{\rm per}$, then the system is $\L$-periodic, with $\L:=\L_1 \cap \L_2$. If $\theta \in [0,\pi/3] \setminus \Theta_{\rm per}$, it is quasi-periodic (the lattices are incommensurate).
    If $\theta \in [0,\pi/3]\setminus \Theta_{\rm per}$, $\alpha$ is an ergodic action and all configurations have the same energy. If $\theta \in \Theta_{\rm per}$, then $\alpha$ is not ergodic and the energy depends on the configuration~$\omega$. Of course, if $\omega_0$ is the configuration corresponding to the periodic charge distribution $m$ and $O_{\omega_0}:=\{\alpha_x\omega_0, \, x \in \R^2\}$ the orbit of $\omega_0$, all the configurations in $O_{\omega_0}$ have the same energy per unit area.
    These are precisely the configurations sampled by the action of $\widetilde{\alpha}$ on $\widetilde \Omega$. In the periodic setting, all the configurations in $\widetilde \Omega$ are strictly identical from a physical viewpoint (they are mere translations of one another), while in the quasi-periodic setting, all the configurations in $\Omega$ are not identical (only the ones in a given orbit $O_{\omega_0}$ are identical), although they have the same energy per unit area.
    
    \subsection{The periodic Kohn--Sham in the stationary framework} That being said, let us now formulate the encapsulated Kohn--Sham LDA model for a 2D material with $\L$-periodic nuclear charge distribution $m$. 
    We make the assumption that the ground-state one-body density operator (and its corresponding density) is also $\L $-periodic, i.e. that there is no symmetry breaking although this assumption may fail due to the non convexity of the problem, see~\cite[Section 3]{GonLewNaz-21} (following~\cite{Ric-18}). 
    
    \medskip
    
    It is easy to see that the von Neumann algebra $\mathcal A$ associated with $(S_L,\widetilde\Omega,\widetilde \alpha)$ is isometric to the von Neumann algebra of $\L$-periodic bounded operator on $L^2(S_L)$, and that the trace $\widetilde\tau$ on $\widetilde \cA$ agrees (for a suitable scaling factor) with the trace per unit area $\underline{\rm Tr}$ on the space of $\L$-periodic operators on $L^2(S_L)$. More precisely, if $A$ is a locally-trace-class bounded $\L$-periodic operator on $L^2(S_L)$ and $\widetilde A$ its counterpart in $\widetilde\cA$, it holds that $\widetilde A \in L^1(\widetilde \cA,\widetilde\tau)$ and we have
    $$
    \widetilde\tau(\widetilde A) = \frac{1}{|\R^2/\L|} \fint_{\R^2/\L^*} {\rm Tr}_{L^2(\R^2/\L \times I_L)} \left( A_k \right) \, \rd k = \underline{{\rm Tr}}(A),
    $$
    where 
    $$
    A = \fint_{\R^2/\L^*}^\oplus A_k \, \rd k
    $$
    is the Bloch decomposition of $A$. Likewise, the ergodic Hartree energy per unit area $\widetilde D_{\rm erg}$ of the generating function $\widetilde {\ff}$ is related to the periodic Hartree energy per unit cell of the periodic charge distribution $f$ (see~\eqref{eq:def:Hartree_Periodic}) by
    $$
    \widetilde D_{\rm erg}(\widetilde {\ff},\widetilde {\ff})= \frac{1}{|\R^2/\L|}  D_{\rm per}(f,f).
    $$
    Finally, the LDA exchange-correlation energy per unit area of a periodic electronic density $\rho$, or equivalently of the generating function $\widetilde{\varrho}$ since $\widetilde{\varrho}(\omega,z)=\rho(-\omega,z)$, is given by 
     \begin{equation*}
    E^{\rm xc}(\rho) :=  E^{\rm xc}(\widetilde\varrho) = \int_{I_L} \fint_{(\R^2/\L)}  \re^{\rm xc} \big( \rho(x, z) \big) \rd \omega \rd z,
    \end{equation*}
    where the function $\re^{\rm xc}:\R_+ \to \R_-$ is the exchange-correlation energy per unit volume of the homogeneous electron gas (or an approximation of this function). Recall that the Local Density Approximation (LDA) was first introduced by Kohn and Sham~\cite{KohSha-65}, and was recently justified by rigorous mathematical arguments in some regime~\cite{LewLieSei-19}. In what follows, we assume that the function $\re^{\rm xc} : \R^+ \to \R$ is continuous and satisfies the bound
    \begin{equation} \label{eq:Assumption_1}
        \forall r \ge 0, \qquad \re^{\rm xc}(r) \ge - C (r^{\beta_1} + r^{\beta_2} ), \qquad 1 \le \beta_1 \le \beta_2  <  \frac53.
    \end{equation}
    This assumption is satisfied for the Dirac exchange term $\re^{\rm xc}(r) = - c_D r^{4/3}$, as well as for most functionals used in practice~\cite{PerZun-81, PerWan-92}.
    
    \medskip
    
    We finally define the Kohn-Sham energy functional per unit area of the periodic system as (we set $v_0 = 0$ for simplicity) of the $\L$-periodic density matrix $\gamma$ as
    \begin{align*}
         \cE^{\rm KS}_{\rm per} (\gamma) := & \frac{1}{2|\R^2/\L|} \fint_{\R^2/\L^*} {\rm Tr}_{L^2(\R^2/\L \times I_L)} \left( \left( (-i\nabla_x+k)^2-\partial_{zz,0} \right) \gamma_k \right) \, dk   \\ & + \frac1{2|\R^2/\L|} \cD_{\rm per}(\rho_\gamma - m, \rho_\gamma - m) + E^{\rm xc}(\rho_\gamma),
    \end{align*}
    or equivalently, in terms of the stationary operator $\widetilde\gamma$ associated with $\gamma$, as
    \begin{equation*}
         \boxed{ \widetilde \cE^{\rm KS}_{\rm per} (\widetilde \gamma) := \frac12 \widetilde\tau( - \Delta_0 \widetilde\gamma) + \frac12 \cD_{\rm per}(\widetilde\varrho_{\widetilde\gamma} - \widetilde\fm, \widetilde\varrho_{\widetilde\gamma} - \widetilde\fm) + E^{\rm xc}(\widetilde\varrho_\gamma).}
    \end{equation*}

    \begin{theorem} \label{th:main_KS_moire_per}
    For all $\lambda > 0$, the encapsulated periodic model
    \begin{equation*}
    \inf \left\{ \cE^{\rm KS}_{\rm per}(\gamma) , \quad \gamma \in \mathcal B(L^2(S_L)), \quad 0 \le  \gamma=\gamma^* \le 1,  \ \gamma \mbox{ $\L$-periodic}, \ \underline{\rm Tr}(\gamma) = \lambda, \  \underline{\rm Tr}(-\Delta_0\gamma) < \infty \right\},
    \end{equation*}
    or equivalently
    \begin{equation}\label{eq:minimization_pb_periodic}
    \inf \left\{ \widetilde\cE^{\rm KS}_{\rm per}(\widetilde\gamma) , \quad \widetilde\gamma \in W_0^{1,1}(\widetilde\cA,\widetilde\tau), \quad 0 \le \widetilde \gamma=\widetilde\gamma^* \le 1,  \ \widetilde\tau(\widetilde\gamma) = \lambda \right\}.
    \end{equation}
    has a ground state. 
\end{theorem}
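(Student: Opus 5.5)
We follow the direct method of the calculus of variations, working in the Bloch (periodic) formulation, where compactness is much stronger than in the quasi-periodic setting. The key new feature compared to the rHF case of Theorem~\ref{th:main_KS_moire} is the non-convex term $E^{\rm xc}$. Its lower semicontinuity cannot come from convexity; instead we will prove \emph{strong} convergence of the densities in $L^q_{\rm per}(S_L)$ for $1\le q<3$ and use the growth bound~\eqref{eq:Assumption_1}.

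\textbf{Step 1: boundedness from below and coercivity.} By Lemma~\ref{lem:inequalities} (Lieb--Thirring and Poincaré, valid here through the identification $\widetilde\tau=\underline{\rm Tr}$), we have $\int_{(\R^2/\L)\times I_L}\rho_\gamma^{5/3}\le C\,\underline{\rm Tr}(-\Delta_0\gamma)$ and $\int\rho_\gamma=\lambda|\R^2/\L|$. Since $\beta_1,\beta_2\in[1,5/3)$, Hölder interpolation between $L^1$ and $L^{5/3}$ on the bounded cell gives $|E^{\rm xc}(\rho_\gamma)|_- \le C(\lambda)\bigl(1+\underline{\rm Tr}(-\Delta_0\gamma)^{\theta}\bigr)$ for some $\theta<1$. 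The Hartree term is nonnegative by~\eqref{eq:def:Hartree_Periodic}. Hence $\cE^{\rm KS}_{\rm per}$ is bounded below on the constraint set, and along a minimizing sequence $\gamma_n$ the quantity $\underline{\rm Tr}(-\Delta_0\gamma_n)$ stays bounded. Non-emptiness of the constraint set follows as in Step~1 of the proof of Theorem~\ref{th:main_KS_moire}.

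\textbf{Step 2: compactness.} In Bloch form, $\gamma_{n,k}$ acts on $L^2((\R^2/\L)\times I_L)$, a \emph{bounded} domain with Dirichlet conditions in $z$, on which $(1-\Delta_{k})^{-1}$ is compact. The bound $\fint {\rm Tr}((1-\Delta_k)^{1/2}\gamma_{n,k}(1-\Delta_k)^{1/2})\,dk\le C$ allows us to extract a weak-$*$ limit $\gamma_*$ (in the sense of $L^\infty(\R^2/\L^*;\fS_1)$ with kinetic weights, as in~\cite{CatLe-Lio-01}). The kinetic energy is lower semicontinuous by Fatou. For the densities, the Hoffmann--Ostenhof inequality~\eqref{eq:Hoffmann-Ostenhof} gives that $\sqrt{\rho_n}$ is bounded in $H^1_{\rm per}((\R^2/\L)\times I_L)$. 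By Rellich's theorem on the compact cell, $\sqrt{\rho_n}\to\sqrt{\rho}$ strongly in $L^2$, hence $\rho_n\to\rho$ strongly in $L^q$ for $1\le q<3$. We must then identify $\rho=\rho_{\gamma_*}$. Testing against bounded periodic functions, as in Step~4 of the proof of Theorem~\ref{th:main_KS_moire}, shows that $\rho_{\gamma_n}\rightharpoonup\rho_{\gamma_*}$, so the two limits agree. Strong $L^1$ convergence on the compact cell then gives $\underline{\rm Tr}(\gamma_*)=\lambda$: no mass is lost. This is exactly where the encapsulation, through the bounded $z$-direction, and periodicity, through the compact cell, are used.

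\textbf{Step 3: passage to the limit.} By Lemma~\ref{lem:potential_Lp_unif} and~\eqref{eq:def:Hartree_Periodic}, $\cD_{\rm per}$ is continuous on $L^q_{\rm per}$ for $q>3/2$, so the Hartree term converges, using that $m\in L^p_{\rm per}$ with $p>3/2$ as in the rHF setting. For the exchange-correlation term, continuity of $\re^{\rm xc}$ together with the two-sided growth control (we assume, as is standard, an upper bound of the same type, e.g. $|\re^{\rm xc}(r)|\le C(r^{\beta_1}+r^{\beta_2})$) and strong $L^{\beta_2}$ convergence of $\rho_n$ ($\beta_2<5/3<3$) give $E^{\rm xc}(\rho_n)\to E^{\rm xc}(\rho_*)$ by dominated convergence along a subsequence (Vitali). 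If only the lower bound~\eqref{eq:Assumption_1} is available, we instead apply Fatou's lemma to $\re^{\rm xc}(\rho_n)+C(\rho_n^{\beta_1}+\rho_n^{\beta_2})\ge0$, which yields lower semicontinuity; this suffices. Thus $\cE^{\rm KS}_{\rm per}(\gamma_*)\le\liminf\cE^{\rm KS}_{\rm per}(\gamma_n)$, and $\gamma_*$ is a ground state.

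The main obstacle is Step~2: making the compactness of the densities rigorous for periodic density matrices. Concretely, this means transferring the Hoffmann--Ostenhof bound to the periodic cell and checking that the weak limit of the $\gamma_n$ has density equal to the strong limit of the $\rho_n$. We would follow~\cite{CatLe-Lio-01} fiberwise in $k$.
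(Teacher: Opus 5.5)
Your proposal is correct and follows essentially the same route as the paper. The common steps are: coercivity from Lieb--Thirring combined with Hölder interpolation between $L^1$ and $L^{5/3}$; extraction of a weak limit, with Fatou for the kinetic term; strong $L^q$ convergence ($1\le q<3$, and a.e. along a subsequence) of the densities, obtained from the periodic Hoffmann--Ostenhof bound on $\nabla\sqrt{\rho_n}$ and the compact Sobolev embedding on the cell; and Fatou's lemma applied to $\re^{\rm xc}(\rho_n)+C(\rho_n^{\beta_1}+\rho_n^{\beta_2})\ge 0$.

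The differences are cosmetic. You work fiberwise in the Bloch picture rather than through the stationary reformulation. You also observe that the Hartree term actually converges rather than being merely lower semicontinuous. Finally, in Step~1 you should add, as the paper does, that choosing $h$ continuous makes the trial density bounded, so that $E^{\rm xc}$ is finite for the trial state.
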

    
This time, the energy is no longer convex, due to the LDA term, so do not expect uniqueness in general.

\begin{proof}
    We only highlight the main differences with the previous proof, focusing on the minimization problem~\eqref{eq:minimization_pb_periodic}. For the sake of readability, we drop the tilde notations in the proof.
    
    \noindent\underline{Step 1} (the minimization set is non empty) is similar. If we choose $h$ continuous, then $\varrho$ is continuous on $\Omega \times I_L$, and $E^{\rm xc}(\rho)$ is bounded. 
    \smallskip
    
    \noindent \underline{Step 2 bis. The energy is bounded from below.} The LDA term is negative, and we need to control it. Using hypothesis~\eqref{eq:Assumption_1}, we get
    \begin{equation*}
        E^{\rm xc} (\varrho) \ge - C \left( \|  \varrho \|_{\beta_1}^{\beta_1} + \|  \varrho \|_{\beta_2}^{\beta_2} \right).
    \end{equation*}
    Since $1 \le \beta_1 < 5/3$, we have the Hölder's inequality
    \begin{equation*}
        \| \varrho \|_{\beta_1}^{\beta_1} \le \| \varrho \|_1^{\theta_1 \beta_1} \| \varrho \|_{5/3}^{(1 - \theta_1) \beta_1 }, \qquad 
        \text{with $0 < \theta_1 \le 1$ chosen so that} \quad \theta_1 + \frac{3(1 - \theta_1)}{5} = \frac{1}{\beta_1}.
    \end{equation*}
    Together with the fact that $\| \varrho \|_1 = \lambda$ and Lieb Thirring inequality, we obtain
    \begin{equation*}
        \| \varrho \|_{\beta_1}^{\beta_1} \le C \lambda^{\theta_1 \beta_1} \, \tau( - \Delta_0 \gamma)^{\frac{3(1 - \theta_1) \beta_1}{5}} = C \lambda^{\theta_1 \beta_1} \, \tau( - \Delta_0 \gamma)^{1 - \theta_1 \beta_1}.
    \end{equation*}
    We have a similar computation for $\beta_2$, with some $0<\theta_2\le 1$. This proves that
    \begin{equation*}
        \cE^{\rm KS}(\gamma) \ge \frac12\tau( - \Delta_0 \gamma) - C_\lambda \left[ \tau( - \Delta_0 \gamma)^{1 - \theta_1 \beta_1} + \tau( - \Delta_0 \gamma)^{1 - \theta_2 \beta_2}\right].
    \end{equation*}
    The function $X \mapsto X - C_\lambda (X^{1 - \theta_1 \beta_1} + X^{1 - \theta_2 \beta_2})$ is coercive. We deduce first that the energy is bounded from below, and that any minimizing sequence $\gamma_n$ has $\tau( - \Delta_0 \gamma_n)$ bounded.
    
    \medskip
    
    \noindent \underline{Step 3} and \underline{Step 4} are similar as before (extraction and convergence of a subsequence). From the periodic Hoffmann-Ostenhof inequality
    \begin{equation} \label{eq:periodic_HO}
        \tau (- \Delta_0 \gamma) \ge \int_{I_L} \fint_{\R^2 / \L } | \nabla \sqrt{\varrho} |^2 \rd \omega \rd z,
    \end{equation}
    we also deduce that the sequence $\nabla \sqrt{\varrho_n}$ is bounded in $L^2(\Omega \times I_L)$. Together with Morrey's embedding (compact Sobolev embedding), we deduce that $\varrho_n$ converges {\bf strongly} to $\varrho_*$ in $L^p(\Omega \times I_L)$ for all $1 \le p < 3$, and almost everywhere up to a subsequence (we used the fact that the set $\Omega \times I_L$ is compact). \\
    
    \medskip
    
     \underline{Step 5 bis. Convergence of the LDA energy}. Reasoning as before, we can prove that
     \begin{equation*}
        \tau(- \Delta_0 \gamma_*)  \le \liminf_{n \to \infty} \tau( - \Delta_0 \gamma_n) \quad \text{and} \quad
        \cD_{\rm per} (\varrho_* - \fm, \varrho_* - \fm) \le \liminf_{n \to \infty } \cD_{\rm per} (\varrho_n - \fm, \varrho_n - \fm).
     \end{equation*}
    It remains to prove the convergence of the LDA term. We write
    \begin{equation*}
        \re^{\rm xc}(r) := - C (r^{\beta_1} + r^{\beta_2}) + f(r), \quad \text{with} \quad f(r) \ge 0.
    \end{equation*}
    Since $\re^{\rm xc}$ is continuous, so is $f$. This gives
    \begin{equation*}
        \re^{\rm xc}(\varrho_n) = - C ( \varrho_n^{\beta_1} + \varrho_n^{\beta_2} ) + f(\varrho_n).
    \end{equation*}
    From the strong convergence of $\varrho_n$ to $\varrho_*$ in $L^p(\Omega \times I_L)$ for all $1 \le p < 3$, we get
    \begin{equation*}
        \int_{I_L} \fint_\Omega \varrho_n^{\beta_1} + \varrho_n^{\beta_2} \xrightarrow[n \to \infty]{} 
        \int_{I_L} \fint_\Omega \varrho_*^{\beta_1} + \varrho_*^{\beta_2} .
    \end{equation*}
    On the other hand, $\varrho_n$ converges pointwise to $\varrho_*$ a.e. By continuity of $f$, we deduce that $f(\varrho_n)$ pointwise to $f(\varrho_*)$ a.e. In addition, $f$ is positive, so Fatou's lemma implies that
    \begin{equation*}
        \int_{I_L} \fint_\Omega f(\varrho_*) \le \liminf_{n \to \infty }\int_{I_L} \fint_\Omega f(\varrho_n).
    \end{equation*}
    Altogether, we proved that
    \begin{equation*}
        E^{\rm xc}(\varrho_*) \le \liminf_{n \to \infty }E^{\rm xc}(\varrho_n), \quad \text{hence} \quad 
        \cE^{\rm KS}(\varrho_*) \le \liminf_{n \to \infty } \cE^{\rm KS}(\varrho_n).
    \end{equation*}
    This proves as before that $\gamma_*$ is a minimizer of the Kohn--Sham energy $\cE^{\rm KS}$.
\end{proof}

\begin{remark}
    The additional property that we have in the periodic setting, is the strong convergence of $\widetilde{\varrho}_n$ to $\widetilde{\varrho}_*$ in $L^p(\widetilde{\Omega} \times I_L)$, which allows to prove the convergence of the LDA term. This strong convergence comes the Hoffmann-Ostenhof inequality~\eqref{eq:periodic_HO} involving the $\nabla$ term, which is such that  $\nabla^* \nabla $ is coercive. In the quasi-periodic setting, the same inequality involves $\Lambda$, and $\Lambda^* \Lambda$ is not coercive. There is no equivalent of the compact Sobolev embedding with the operator $\Lambda$.
\end{remark}
    
    \section*{Acknowledgements}  This project has received funding from the European Research Council (ERC) under the European Union's Horizon 2020 research and innovation program (grant agreement EMC2 No 810367), the Simons Targeted Grant in Mathematics and Physical Sciences on Moir\'e Materials Magic (Award No. 896630, E.C., L.L.). This work has benefited from French State support managed by ANR under the France 2030 program through the MaQui CNRS Risky and High-Impact Research program (RI)$^2$ (grant agreement ANR-24-RRII-0001). The authors are grateful to Mathieu Lewin, Amine Marrakchi and Basile Morando for useful discussions.

    \appendix
    
    \section{The encapsulated Green's function}
    \label{appendix:GF}
    
    \subsection{Explicit formula}
    In this section, we give an explicit formula for the Green's function~$\cG$ introduced in Section~\ref{ssec:Greenfunction}. This formula will allow us to prove the exponential decay of this function, see Lemma~\ref{lem:CombesThomas}. Recall that $\cG$ is the kernel of the operator $\cL$ with
    \begin{equation*}
        \cL := - \Delta_x \otimes \epsilon(z) + \bbI_{L^2(\R^2)} \otimes \widetilde{\cL},
    \end{equation*}
    and where $\widetilde{\cL}$ is the self-adjoint operator on $L^2(I_L)$ with form domain $H^1_0(I_L)$ defined by $\widetilde{\cL}(u) := - \partial_z (\epsilon \partial_z u)$. Because of the $\epsilon$ factor in the first term, we consider the modified 1D operator
    \begin{equation*}
    \widetilde{\cL}_\epsilon u := - \frac{1}{\epsilon} \partial_z (\epsilon \partial_z u),
    \end{equation*}
    acting on the weighted Hilbert space $L^2_\epsilon := L^2(I_L, \epsilon(z) \rd z)$, with form domain $H^1_0(I_L)$. We denote by
    \begin{equation*}
    \langle f, g \rangle_{L^2_\epsilon} := \int_{I_L} \overline{f(z)} g(z) \epsilon(z) \rd z.
    \end{equation*}
    It can be checked that $\widetilde{\cL}_\epsilon$, with form domain $H^1_0(I_L)$, is self-adjoint on $L^2_\epsilon$, positive and with compact resolvent. We consider its spectral decomposition, of the form
    \begin{equation*}
    \widetilde{\cL}_\epsilon = \sum_{k=1}^\infty \lambda_k | g_k \rangle \langle g_k |,
    \qquad 0 < \lambda_{1} \le \lambda_{2} \le \cdots, 
    \qquad  \langle g_{k}, g_{\ell} \rangle_{L^2_{\epsilon}} := \int_{I_L} \epsilon g_{k} g_{\ell} = \delta_{k \ell}.
    \end{equation*}
    In particular, the eigenstates $g_k$ are solution to the generalized elliptic eigenproblem
    \begin{equation*} 
        - \partial_z \left(\epsilon \partial_z g_{k} \right) =  \lambda_{k} \epsilon g_{k}, \qquad g_k \in H^1_0(I_L).
    \end{equation*} 
    
    \medskip
    
    Consider a charge distribution $f \in C^\infty_0(S_L)$ and the corresponding potential $V_f$, so that
    \begin{equation} \label{eq:eqt_V_f_expansion}
        \epsilon(z)  (- \Delta_x V_f(x,z))  - \partial_z \big( \epsilon(z) \partial_z V_f (x,z) \big) = 4 \pi f(x,z) = 4 \pi \epsilon(z) \left( \epsilon(z)^{-1} f(x, z) \right).
    \end{equation}
    For each $x \in \R^2$, we expand $\epsilon^{-1} f(x, \cdot)$ and $V_f(x, \cdot)$ in the $(g_{k})$ basis (for the $L^2_\epsilon$ inner product), giving an expansion of the form
    \begin{equation} \label{eq:expansion_rho_V}
        f(x,z) = \epsilon(z) \sum_{k=1}^\infty f_k(x) g_{k}(z), \qquad V_f(x, z) = \sum_{k=1}^\infty V_{k}(x) g_{k}(z).
    \end{equation}
    We multiply~\eqref{eq:eqt_V_f_expansion} by $g_k(z)$ and integrate on $I_L$. Using the orthonormality of the $(g_k)$ basis for the $L^2_\epsilon$ inner product, we get
    \begin{equation} \label{eq:PBn}
        -\Delta_x V_k(x) + \lambda_k V_k(x) = 4 \pi f_k(x),
    \end{equation}
    from which we infer that
    \begin{equation*}
    V_k =  2 f_k \star Y_{\sqrt{\lambda_k}},
    \end{equation*}
    where $Y_\kappa$ denotes the 2D Yukawa potential, that is the Green's function of the Poisson equation
    $$
    -\Delta Y_\kappa + \kappa^2 Y_\kappa = 2\pi \delta_0 \quad \mbox{in } \mathcal S'(\R^2).
    $$
    Recall that we have
    $$
    \forall x \in \R^2 \setminus \{0\}, \quad Y_\kappa(x) = Y_1(\kappa x), \quad Y_1(x) = K_0(|x|),
    $$
    where $K_0$ is the modified Bessel function of the second kind. We finally get
    \begin{equation} \label{eq:explicit_formula_Green}
        \cG(x,z,z') = 2 \sum_{k=1}^{\infty} Y_{\sqrt{\lambda_k}}(x)  g_k(z)  g_k(z') = 2 \sum_{k=1}^{\infty} K_0\left({\sqrt{\lambda_k}} |x|\right)   g_k(z)  g_k(z') .
    \end{equation}

    \subsection{Exponential decay, proof of Lemma~\ref{lem:CombesThomas}}
    
    The functions $g_k$ are in $H^1_0(I_L)$, are $L^2_\epsilon$-normalized, i.e. $\int_{I_L} \varepsilon(z) | g_k |^2(z) \, dz= 1$, and satisfy
    \begin{equation} \label{eq:GEP}
        - \partial_z \left(\epsilon \partial_z g_{k} \right) =  \lambda_{k} \epsilon g_{k}.
    \end{equation}
    It is a classical result that $\lambda_k \mathop{\sim}_{k \to \infty} \frac{\pi^2 k^2}{L^2}$. Multiplying~\eqref{eq:GEP} by $g_k$ and integrating shows that
    \begin{equation*}
        \| g_k' \|_{L^2(I_L)}^2 \le \int_{I_L} \epsilon | g_k' |^2 = \lambda_k.
    \end{equation*}
    A variant of the Gagliardo--Niremberg inequalities then shows that $g \in L^\infty(I_L)$ with $\| g \|_\infty \le \lambda_k^{1/4}$. Indeed, we have, for any $z \in I_L$,
    \begin{equation*}
        g_k^2(z) = \int_{-\frac{L}{2}}^z 2 g_k(s) g_k'(s) \rd s \le 2 \| g_k \|_{L^2(I_L)} \| g_k' \|_{L^2(I_L)} \le 2 \sqrt{\lambda_k}.
    \end{equation*}
       
    \medskip
    
    Recall that the Bessel function  $K_0$ is a smooth positive function on $(0,+\infty)$, satisfying
    \begin{equation*}
        K_0(s) \mathop{\sim}_{s \to 0^+}  - \ln(s), \qquad K_0(s) \mathop{\sim}_{s \to +\infty} \sqrt{\frac{\pi}{2s}} \re^{-s}.
    \end{equation*}
    In particular, for all $s_0 > 0$, there is $C \ge 0$ such that
    \begin{equation*}
        \forall s \ge s_0, \quad 0 < K_0(s) \le e^{-s}, \quad |K_0'(s)| \le \re^{-s}.
    \end{equation*}
    Together with the formula~\eqref{eq:explicit_formula_Green} for the Green's function and the bound on $g_k$, we obtain that, for all $|x| \ge R_0:=s_0/\sqrt{\lambda_1}$, 
    \begin{equation*}
        \left| \cG(x,z,z') \right| \le C \sum_{k=1}^\infty \| g_k \|_\infty^2 \re^{- \sqrt{\lambda_k} | x | } 
        = C \left( \sum_{k=1}^\infty \sqrt{\lambda_k} \re^{-(\sqrt{\lambda_k} - \sqrt{\lambda_1}) |x|} \right) \re^{- \sqrt{| \lambda_1} | x |}.
    \end{equation*}
    The last series converges in view of the asymptotics $\lambda_k \mathop{\sim}_{k \to \infty} \frac{\pi^2 k^2}{L^2}$. This proves the Combes--Thomas estimates in Lemma~\ref{lem:CombesThomas}.

    \subsection{Explicit expression for the ergodic Hartree energy}\label{sec:app:explicit_hartree_energy}
    
    Recall that the Hartree energy $\cD_{\rm erg}(\ff, \ff)$ has been defined in Lemma~\ref{lem:definition_Hartree_stationary}. We provide an explicit expression in momentum space for completeness, which would be useful for effective or numerical computation. Denoting by $(e_{G_1,G_2})_{(G_1,G_2) \in \L_1^*\times L_2^*}$ the orthonormal Fourier basis of $L^2(\Omega)$, we have 
    \begin{equation}\label{eq:Derg}
        \cD_{\rm erg}(\ff,\ff):=\frac{2\pi}{|\Omega|} \sum_{(G_1,G_2,k) \in \L_1^* \times \L_2^* \times \N^*}  \frac{|\widehat \ff_{G_1,G_2,k}|^2}{|G_1+G_2|^2+\lambda_k},
    \end{equation}
    where 
    $$
    \widehat \ff_{G_1,G_2,k}:= \int_{\Omega \times I_L} \ff(\omega,z) \, e_{G_1,G_2}(\omega) \, g_k(z) \, d\omega \, dz.
    $$

\printbibliography
\end{document}